\DeclareMathOperator*{\argmax}{arg\,max}
\newtheorem{thm}{\bf Theorem}
\newtheorem*{hmt}{\bf Hall's Marriage Theorem}
\newtheorem{lem}{\bf Lemma}
\newtheorem{cor}{\bf Corollary}
\newtheorem{rem}{\bf Remark}
\DeclareMathOperator*{\rmd}{d}
\def\im{\imath_{U;V}}
\def\p{\mathbb{P}}
\def\e{\mathbb{E}}
\begin{document}
\title{Sharp Bounds for Mutual Covering
\thanks{Parts of the paper (mostly, Part 1) described in the abstract) were presented at ISIT 2017.}
}
\author{Jingbo Liu\quad Mohammad H.~Yassaee \quad Sergio Verd\'{u}\\
jingbo@mit.edu\quad
yassaee@ipm.ir\quad
verdu@informationtheory.org
}
\date{}

\maketitle
\begin{abstract}
A fundamental tool in network information theory is the covering lemma, 
which lower bounds the probability that there exists a pair of random variables, among a give number of independently generated candidates, falling within a given set. 
We use a weighted sum trick and Talagrand's concentration inequality to prove new  mutual covering bounds.
We identify two interesting applications: 1) When the probability of the set under the given joint distribution is bounded away from 0 and 1, the covering probability converges to 1 \emph{doubly} exponentially fast in the blocklength, which implies that the covering lemma does not induce penalties on the error exponents in the applications to coding theorems.
2) Using Hall's marriage lemma, we show that the maximum difference between the probability of the set under the joint distribution and the covering probability equals half the minimum total variation distance between the joint distribution and any distribution that can be simulated by selecting a pair from the candidates.
Thus we use the mutual covering bound to derive the exact error exponent in the joint distribution simulation problem.
In both applications,  the determination of the exact exponential (or doubly exponential) behavior relies crucially on the sharp concentration inequality used in the proof of the mutual covering lemma.
%
\end{abstract}

\textbf{Keywords:} Shannon theory, network information theory, covering lemmas, one-shot method, distribution simulation, randomness generation, rejection sampling, concentration inequalities, information density.
\section{Introduction}
A cornerstone in the achievability proofs of many problems in rate-distortion theory and  network information theory is the asymptotic \emph{covering lemma} (see e.g.\ \cite[Lemma 9.1]{csiszar}\cite[Lemma~3.3]{NIT}), which gives sufficient conditions for finding among a given number of independently generated random variables a pair that will be jointly typical.
\par
A non-asymptotic single-shot\footnote{A single-shot formulation refers to the non-asymptotic setting where the random variables are not necessarily vectors with i.i.d.\ coordinates.
Of course, the main contribution of the paper lies in the new arguments for bounding the covering and the sampling error, which are independent of the type of formulation we adopt.} version of the \emph{unilateral covering lemma}\footnote{In contrast to the standard literature, we add the qualifier ``unilateral'' to differentiate the standard covering lemma with the mutual covering lemma.}  was formulated by
Verd\'u \cite{Verduallerton}: Given $P_{UV}$, let $U_1,\dots,U_M\sim P_U$, $V\sim P_V$,
and suppose that $V, U_1,\dots,U_M$
 are independent.
If $\mathcal{F}\subseteq\mathcal{U}\times \mathcal{V}$ is such that $P_{UV}(\mathcal{F})$ is large,
then
\begin{align}
\mathbb{P}\left[\bigcup_{m=1}^M \{(U_m,V)\in \mathcal{F}\}\right]
\label{e_vcovering0}
\end{align}
must be large as well, provided that $M$ is sufficiently large.
The key in Verd\'u's single-shot non-asymptotic covering lemma  \cite{Verduallerton} is to express the 
condition on the size of $M$ in terms of a bound on the information spectrum of the pair $(U,V)$,
namely, $M$ is large with respect to $\exp(\eta)$
where $\eta$ is
such that $\mathbb{P} \left[ \imath_{U;V} ( U; V  ) > \eta \right]$ is small.

The conventional asymptotic version can be recovered if we take
$P_{UV}$ to be $P_{\sf UV}^{\otimes n}$, and
$\mathcal{F}$ to be the typical set.
Then, by the central limit theorem, 
for \eqref{e_vcovering0} to be bounded away from 0 and 1,
one must take $M=\exp(nI({\sf U;V})+O(\sqrt{n}))$.
The mutual covering lemma \cite{el1981proof} (see also \cite[Lemma~8.1]{NIT}),
which finds applications such as Marton's coding scheme for broadcast channels and multiple-description coding,
may be thought of as a generalization where one finds a pair $(U_m,V_l)\in\mathcal{F}$ from $U_1,\dots,U_M\sim P_U$, $V_1,\dots,V_L\sim P_V$.

The standard proof of the unilateral covering lemma \cite[Lemma~3.3]{NIT} does not apply immediately to the mutual covering case because of the more complicated dependence structure between the pairs.
The original proof of the mutual covering lemma \cite{el1981proof} uses a ``second-moment method'', an idea widely used in graph theory (see e.g.\ \cite{alon2004probabilistic}).
An alternative approach based on channel resolvability was recently given in \cite{jingbo}, which gives strictly tighter exponential bounds in certain regimes.
A survey of previous covering lemmas is given in Section~\ref{sec_previous}.

In this paper, we prove a stronger mutual covering lemma with new arguments (Lemma~\ref{le:dsmcl}),
and demonstrate its power in old and new applications of the covering lemma.
\emph{Strictly} improving previous mutual covering bounds, the new bound is \emph{sharp} in the following two regimes:

1) ``\emph{Typical case $\mathcal{F}$}'':
For ``regular'' $P_{UV}$ (in particular, for the stationary memoryless cases where $P_{UV}=P_{\sf UV}^{\otimes n}$),
we have\footnote{Unless specified, the bases in $\log$ and $\exp$ are arbitrary but matching.}
\begin{align}
&\quad\sup_{\mathcal{F}\colon P_{UV}(\mathcal{F})\ge \frac{1}{2}}\mathbb{P}\left[\bigcap_{m=1}^M\bigcap_{l=1}^L (U_m,V_l) \notin \mathcal{F}\right]
\nonumber\\
&\approx\max\left\{
\exp_e(-M),\, \exp_e(-L),\, \exp_e\left(-\frac{ML}{I(U;V)}\right)
\right\}.
\label{e_typical}
\end{align}
See \eqref{e14} for the formal statement.
In particular, in the stationary memoryless case with $M_n$ and $L_n$ growing exponentially in the blocklength $n$, and
\begin{align}
\lim_{n\to\infty}\frac{1}{n}\log M_nL_n> I({\sf U;V})
\label{e_fixrate}
\end{align}
fixed, the covering error for a ``typical'' $\mathcal{F}$ (that is, $P_{\sf UV}^{\otimes n}(\mathcal{F})$ is bounded away from 0 and 1) must be doubly exponentially small in $n$.

2) ``\emph{Worst case $\mathcal{F}$}'': Again for ``regular'' $P_{UV}$,
we have 
\begin{align}
&\quad\sup_{\mathcal{F}}\left\{
P_{UV}(\mathcal{F})
-
\mathbb{P}\left[\bigcup_{m=1}^M\bigcup_{l=1}^L (U_m,V_l) \in \mathcal{F}\right]
\right\}
\nonumber\\
&\approx\mathbb{P}[\imath_{U;V}(U;V)>\log ML]
\label{e_worst}
\end{align}
where $(U,V)\sim P_{UV}$,
and the \emph{information density} is $\imath_{U;V}(u;v):=\log\frac{\rmd P_{UV}}{\rmd(P_U\times P_V)}(u;v)$.
See \eqref{e_conv} for the formal statement.
In the stationary memoryless case where \eqref{e_fixrate} holds,
the right side of \eqref{e_worst} vanishes exponentially (rather than doubly exponentially) and captures the correct exponent of the left side.
This implies that the supremum in \eqref{e_worst} is
achieved by some $\mathcal{F}$ such that $P_{\sf UV}^{\otimes n}(\mathcal{F})=o(1)$, since otherwise the result contradicts the double exponential convergence for the ``typical case $\mathcal{F}$''.

One might ask whether the left side of \eqref{e_typical} or the left side \eqref{e_worst} is the ``right'' notion of the covering error.
Interestingly, we demonstrate two applications where the two quantities respectively play a fundamental role:

1) \emph{Achievable error exponent in the broadcast channel}.
The estimate in \eqref{e_typical} is more relevant to the achievability proof of coding theorems, which is the original motivation for the mutual covering lemma in \cite{el1981proof}.
We present a Gallager-type error exponent for the broadcast channel using Marton's coding scheme.
The doubly exponential decay in mutual covering has the implication that the covering error does not contribute to the error exponent in the broadcast channel.
In contrast, previous covering lemmas give exponential bounds on the covering error, leading to strictly worse error exponents for the broadcast channel.

2) \emph{Joint distribution simulation via selection}.
Given $P_{UV}$, let $U_1,\dots,U_M\sim P_U$, $V_1,\dots,V_L\sim P_V$.
Suppose that we want to approximate $P_{UV}$ by the distribution of $(U_{\hat{M}},V_{\hat{L}})$,
where the indices $\hat{M}\in\{1,\dots,M\}$ and $\hat{L}\in\{1,\dots,L\}$ are selected
upon observing all the random variables $U^M$ and $V^L$.
A duality between covering and sampling recently observed in \cite{lv18} shows that the approximation error in total variation is precisely characterized by the left side of \eqref{e_worst}.
Based on this duality observation,
we derive the exact error exponent as well as the second-order rates (for a nonvanishing error) of joint distribution simulation of stationary memoryless sources.
For contrast, we prove that a simple weighted selection rule (similar to the likelihood encoder \cite{y-isit15}\cite{song16}) is fundamentally incapable of yielding the exact error exponent (Section~\ref{app_simple}).

Distribution simulation has proved to be fertile ground for non-asymptotic achievability bounds \cite{y-isit15}\cite{song16}\cite{watanabe}\cite{liu2016key}\cite{jingbo2015egamma}\cite{liu2017secret}\cite{harsha2010}\cite{LiElGamal}\cite{li2018unified}.
For example, in the likelihood encoder approach \cite{song16}, the encoder selects a codeword such that a target joint distribution with the source is achieved, even though they are independently generated.
Non-interactive distribution generation,
a topic in information theory and theoretical computer science \cite{witsenhausen1975}\cite{mossel2006}\cite{kamath},
may appear similar to the joint distribution simulation problem mentioned above.
However, the joint distribution simulation setup we consider is actually quite different,
in the sense of being ``fully interactive'' ($\hat{M}$ is selected upon viewing both $U^M$ and $V^L$, instead of just $U^M$), 
and moreover the output must be one of the samples rather than a function of the observed sequence.

As alluded above, in both the ``worst case'' and the ``typical case'', in order to get the exact expression for the convergence rates, the main technical challenge is to prove new and tight bounds for mutual covering.
The main mutual covering lemma we prove suffices for both cases.
The proof idea is to lower bound the probability that a real valued random variable is positive, just as in \cite[Lemma~8.1]{NIT},
but with two important innovations:
\begin{itemize}
\item The original proof of mutual covering \cite{NIT} bounded the probability that the sum of the \emph{indicator} functions of covering events is positive.
Here we use a \emph{weighted}\footnote{In the asymptotic setting, the weights of the non-zero indicators are almost equal, so the weighted sum is approximately a scaled version of the number of typical events.} sum instead, which results in a more compact (and slightly stronger) bound than the one-shot bound obtained by directly considering the plain sum of indicators \cite{Radhakrishnan2016}. Furthermore, the extension to the multivariate setting is straightforward.

\item While the original proof of mutual covering \cite{NIT} bounded the deviation of the sum of indicator functions from its mean via  Chebyshev's inequality,
we bound the deviation using a one-sided version of the Talagrand inequality \cite[Theorem 8.6]{BLM} in addition to Chebyshev's inequality.
Using the Talagrand inequality,
in the asymptotic setting ($U^{(n)}\sim P_{\sf U}^{\otimes n}$, $V^{(n)}\sim P_{\sf V}^{\otimes n}$) we observe that the probability of failure to cover converges to zero doubly exponentially in $n$ provided that $P_{\sf UV}^{\otimes n}(\mathcal{F})$ is bounded away from zero.
This is a substantial improvement on the convergence rates of previous  bounds.
\end{itemize}

The mutual covering bounds in this paper are easily extendable from the bivariate to the  multivariate case, 
which is useful for $m$-user broadcast channel and multiple descriptions \cite{NIT}.

The remainder of the paper is organized as follows: 
Section~\ref{sec_previous} surveys previously published covering lemmas;
Section~\ref{sec_results} lists the main results of the paper, including new bounds for covering and the applications in network information theory and in sampling.
Section~\ref{sec_ext} discusses some extensions to multivariate and conditional settings.
Section~\ref{sec_proof} onward are devoted to the proofs of the results.

\section{Previous Single-Shot Covering Bounds}
\label{sec_previous}
In this section we survey previous single-shot covering lemmas, starting with the unilateral setting after which we discuss two mutual covering lemmas (for an earlier survey of non-asymptotic covering lemmas, see also \cite{verdu-jerusalem}).

\begin{lem}[Unilateral covering lemma] \cite{Verduallerton}. \label{lem:unilateral}
Let $P_{UV}$ be given. Let $U_1,\dots,U_M\sim P_U$ and $V\sim P_V$ be independent.
For any $\gamma>0$ and $\mathcal{F}\subseteq \mathcal{U}\times\mathcal{V}$,
\begin{align}
&\quad\mathbb{P}\left[\bigcap_{m=1}^M \{(U_m,V)\notin \mathcal{F}\}\right]
\nonumber\\
&\le
P_{UV}(\mathcal{F}^c)
+\mathbb{P}[\imath_{U;V}(U;V)\ge\log M-\gamma]
\nonumber\\
&\quad+e^{-\exp(\gamma)}
\label{e_vcovering}
\end{align}
where $(U,V)\sim P_{UV}$.
\end{lem}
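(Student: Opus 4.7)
The plan is to use a change-of-measure argument built around the information-density threshold $\log M - \gamma$. First I would restrict the set to its ``good'' part $\mathcal{F}' := \mathcal{F} \cap \{(u,v) : \imath_{U;V}(u;v) < \log M - \gamma\}$. Since $\mathcal{F}' \subseteq \mathcal{F}$ we have $\mathcal{F}^c \subseteq (\mathcal{F}')^c$, so replacing $\mathcal{F}$ by $\mathcal{F}'$ in the event on the left-hand side of \eqref{e_vcovering} only enlarges the non-covering probability, while the extra mass $P_{UV}((\mathcal{F}')^c) - P_{UV}(\mathcal{F}^c)$ is absorbed by the tail $\mathbb{P}[\imath_{U;V}(U;V)\ge\log M-\gamma]$ via a single union bound. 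It therefore suffices to show that the non-covering probability for $\mathcal{F}'$ is at most $P_{UV}((\mathcal{F}')^c) + e^{-\exp(\gamma)}$.

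Next I would condition on $V$ and use independence of $U_1,\dots,U_M$ to factorize the event as $\mathbb{E}_V[(1 - P_U(\mathcal{F}'_V))^M]$ with $\mathcal{F}'_V := \{u : (u,V)\in\mathcal{F}'\}$. On $\mathcal{F}'$ the information-density bound reads $\mathrm{d}P_U \ge (\exp(\gamma)/M)\,\mathrm{d}P_{U|V=v}$ in Radon--Nikodym form, hence $P_U(\mathcal{F}'_V) \ge (\exp(\gamma)/M)\,P_{U|V}(\mathcal{F}'_V|V)$ pointwise in $V$. Plugging this into the elementary inequality $(1-x/M)^M\le e^{-x}$ and then applying the chord inequality $e^{-tx}\le 1-x(1-e^{-t})$ (valid for $x\in[0,1]$, $t\ge 0$) with $t=\exp(\gamma)$ and $x=P_{U|V}(\mathcal{F}'_V|V)$ yields, after taking expectation over $V$, the bound $P_{UV}((\mathcal{F}')^c) + P_{UV}(\mathcal{F}')\,e^{-\exp(\gamma)}$; bounding $P_{UV}(\mathcal{F}')\le 1$ and combining with the union bound above gives the advertised three-term inequality.

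The only step that genuinely requires thought is the timing of the linearization: applying a tangent-line bound to $(1-x/M)^M$ too early destroys the crucial $\exp(\gamma)/M$ factor, whereas postponing it leaves $\exp(-\exp(\gamma)P_{U|V}(\mathcal{F}'_V|V))$ inside the expectation over $V$, which fails to collapse back into a joint probability. The chord inequality $e^{-tx}\le 1-x(1-e^{-t})$ applied after the $(1-x/M)^M\le e^{-x}$ step is the unique short device that both removes the exponential and lets the expectation over $V$ close into $P_{UV}(\mathcal{F}')$. Everything else---the Radon--Nikodym substitution in the change of measure and the final union bound on $(\mathcal{F}')^c$---is routine given the definition of information density used in the lemma.
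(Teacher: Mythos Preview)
Your proposal is correct and takes essentially the same approach as the paper. The paper simply quotes the inequality $\left(1-\tfrac{p\alpha}{M}\right)^M \le 1-p+e^{-\alpha}$ (for $M,\alpha>0$, $0\le p\le1$) as the key tool; your two-step derivation---first $(1-x/M)^M\le e^{-x}$, then the chord bound $e^{-tx}\le 1-x(1-e^{-t})$---is exactly a proof of that inequality with $\alpha=\exp(\gamma)$ and $p=P_{U|V}(\mathcal{F}'_V\mid V)$, after which the change of measure on $\mathcal{F}'$ and the final union bound on $(\mathcal{F}')^c$ proceed identically.
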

The proof uses the inequality
\begin{align}
\left(1-\frac{p\alpha}{M}\right)^M
\le 1-p+e^{-\alpha}
\end{align}
for $M,\alpha>0$ and $0\le p\le1$, which is a standard tool in the achievability proof in rate distortion theory.
In mutual covering, we no longer have a single random variable which we hope will be ``covered" by at least one of $M$ independent
realizations; instead we have $L$ independent realizations and the failure to cover event is
\[
\bigcap_{m=1}^M\bigcap_{l=1}^L\{(U_m,V_l)
\notin\mathcal{F}\}.
\]
The purpose of the mutual covering lemma is to find an upper bound to the
probability of covering failure as a function of $\mathcal{F}$, and of $P_{UV}(\mathcal{F}^c)$, the
probability of failure under a nominal joint probability measure.
\begin{lem}[Mutual covering lemma] \cite{jingbo} 
Let $P_{UV}$ be given. Let $U_1,\dots,U_M$ and $V_1,\dots,V_L$ be independent.
For any $\delta,\gamma>0$ and $\mathcal{F}\subseteq \mathcal{U}\times\mathcal{V}$,
\begin{align}
&\mathbb{P}
\left[\cap_{m=1}^M\cap_{l=1}^L\{(U_m,V_l)
\notin\mathcal{F}\}\right] \nonumber\\
&\le P_{UV}(\mathcal{F}^c) + \mathbb{P}\left[\exp(\imath_{U;V}(U;V))\ge \tfrac{ML}{\exp(\gamma)}-\delta\right]
\nonumber\\
&\quad+\tfrac{\min\{M,L\}-1}{\delta}
+e^{-\exp(\gamma)}.
\label{e_mybound}
\end{align}
\end{lem}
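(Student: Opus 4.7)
The plan is to combine a truncation of $\mathcal{F}$ by information density with a conditioning-and-exponentiation argument for the $e^{-\exp(\gamma)}$ term and a Markov/Chebyshev-type bound for the $(\min\{M,L\}-1)/\delta$ term.

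First, set $\tau := ML/\exp(\gamma) - \delta$ and introduce the truncated set
\begin{align*}
\mathcal{G} := \{(u,v)\in\mathcal{F}: \exp(\im(u;v)) < \tau\}.
\end{align*}
Since $\mathcal{G}\subseteq\mathcal{F}$, covering by $\mathcal{G}$ implies covering by $\mathcal{F}$, so $\mathbb{P}[\text{no cover by }\mathcal{F}]\le\mathbb{P}[\text{no cover by }\mathcal{G}]$. The truncation loss $P_{UV}(\mathcal{G}^c)\le P_{UV}(\mathcal{F}^c)+\mathbb{P}[\exp(\im(U;V))\ge\tau]$ accounts for the first two summands in the bound, while $\exp(\im)<\tau$ on $\mathcal{G}$ gives the change-of-measure inequality $(P_U\times P_V)(\mathcal{G})\ge P_{UV}(\mathcal{G})/\tau$.

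Next, I would exploit the bipartite independence of $(U^M,V^L)$. Assume without loss of generality $L\le M$ (so $\min\{M,L\}=L$), and condition on $V^L$; since $U_1,\dots,U_M$ are i.i.d.\ given $V^L$, the conditional failure probability factorizes:
\begin{align*}
\mathbb{P}[\text{no cover by }\mathcal{G}\mid V^L] = (1-\eta(V^L))^M \le e^{-M\eta(V^L)},
\end{align*}
where $\eta(v^L):=P_U(\bigcup_{l=1}^L\mathcal{G}_{v_l})$ and $\mathcal{G}_v:=\{u:(u,v)\in\mathcal{G}\}$. Splitting according to $\{M\eta(V^L)\ge\exp(\gamma)\}$:
\begin{align*}
\mathbb{P}[\text{no cover by }\mathcal{G}]\le e^{-\exp(\gamma)}+\mathbb{P}[M\eta(V^L)<\exp(\gamma)],
\end{align*}
which isolates the $e^{-\exp(\gamma)}$ summand.

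The remaining task is to bound the tail $\mathbb{P}[M\eta(V^L)<\exp(\gamma)]$ by $(L-1)/\delta$. Using the truncation on $\mathcal{G}$, we have $P_U(\mathcal{G}_v)>P_{U|V}(\mathcal{G}_v\mid v)/\tau$ pointwise, which lower-bounds $\eta(V^L)$ in terms of the i.i.d.\ sum $R:=\sum_{l=1}^L P_{U|V}(\mathcal{G}_{V_l}\mid V_l)$, whose mean under $P_V^{\otimes L}$ is $L\,P_{UV}(\mathcal{G})$. Substituting $\tau=ML/\exp(\gamma)-\delta$ into the threshold and applying a Markov-type inequality to the slack between $R$ and its target value produces the desired $(L-1)/\delta$ bound, with $\delta$ emerging as the truncation margin. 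The main obstacle is this final step: obtaining the exact $(\min\{M,L\}-1)/\delta$ form---in particular the $-1$ constant and the appearance of $\min$ rather than $M+L$---requires carefully exploiting the i.i.d.\ decomposition after conditioning on the \emph{smaller} set of variables, rather than settling for the cruder $(M+L-1)/(MLq)$ bound that a direct Chebyshev argument on $\sum_{m,l}\mathbb{1}\{(U_m,V_l)\in\mathcal{G}\}$ would give.
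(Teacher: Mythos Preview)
First, a framing note: this lemma is not proved in the present paper---it is quoted from \cite{jingbo} as prior work, with only the remark that the original argument is asymmetric in $U$ and $V$ and can be recast via $E_\gamma$-resolvability. So there is no in-paper proof to compare against; I can only assess whether your sketch stands on its own.

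Your first three moves are sound: truncating $\mathcal{F}$ to $\mathcal{G}$ at the information-density threshold $\tau=ML/\exp(\gamma)-\delta$, conditioning on the smaller block $V^L$ to get the factorization $(1-\eta(V^L))^M$, and splitting off the $e^{-\exp(\gamma)}$ term via $\{M\eta(V^L)\ge\exp(\gamma)\}$. This is indeed the asymmetric skeleton the paper alludes to.

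The gap is exactly where you flag it, and it is a real one. You write that the truncation ``lower-bounds $\eta(V^L)$ in terms of the i.i.d.\ sum $R=\sum_l P_{U|V}(\mathcal{G}_{V_l}\mid V_l)$,'' but this step fails as stated: $\eta(V^L)=P_U\bigl(\bigcup_l \mathcal{G}_{V_l}\bigr)$ is a \emph{union} probability, and the union bound runs the wrong way---it gives $\eta(V^L)\le\sum_l P_U(\mathcal{G}_{V_l})$, not $\ge$. The pointwise inequality $P_U(\mathcal{G}_v)>P_{U|V}(\mathcal{G}_v\mid v)/\tau$ is correct, but it only controls the individual sections, not the union. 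Consequently the event $\{M\eta(V^L)<\exp(\gamma)\}$ is not contained in a simple tail event for $R$, and invoking ``a Markov-type inequality on the slack of $R$'' does not by itself produce $(L-1)/\delta$. To recover the stated bound one has to control the overlap among the $\mathcal{G}_{V_l}$ (this is where the resolvability viewpoint in \cite{jingbo}\cite{jingbo2015egamma} enters: one works with the mixture $\tfrac{1}{L}\sum_l P_{U|V=V_l}$ restricted to $\mathcal{G}$ and bounds its deviation from $P_U$, which is what yields the $(\min\{M,L\}-1)/\delta$ term with the precise $-1$). Your sketch correctly identifies that the difficulty lives here, but the proposed mechanism for resolving it is not yet right.
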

A more ``structured'' but essentially similar proof based on $E_{\gamma}$-resolvability yields a  bound similar to \eqref{e_mybound}; see
\cite{jingbo}\cite{jingbo2015egamma}.
Taking $M=1$ and $\delta\downarrow 0$ in \eqref{e_mybound}, we recover the one-shot unilateral covering lemma in \cite{Verduallerton}.
A pleasing property of \eqref{e_mybound} is that it is linear in both probability terms, and hence the bound can be applied to the achievability proof of the broadcast channel in a similar manner as \cite{Verduallerton}.
Despite the symmetry of  \eqref{e_mybound}, in the proof the roles of  $U$ and $V$ are asymmetric, hence not easily extended to the multivariate case.

The original idea for the asymptotic mutual covering in \cite{el1981proof} based on calculating the expected number of typical pairs and bounding the deviation by Chebyshev's inequality can be also applied to the one-shot setting as shown in the following result.
\begin{lem}[Mutual covering lemma] \cite{Radhakrishnan2016}.
Given $P_{UV}$, $\mathcal{F}$,
$(U_1,\dots,U_M)$ and $(V_1,\dots,V_L)$ as before,
for any $\epsilon\in (0,1)$ we have
\begin{align}
\mathbb{P}\left[\bigcap_{m=1}^M
\bigcap_{l=1}^L\{(U_m,V_l)
\notin\mathcal{F}\}\right]
&\le\tfrac{\exp(I_{\infty}^{\epsilon})}
{\left(P_{UV}(\mathcal{F}^c)-\epsilon\right)ML}
\nonumber\\
&+
\tfrac{M+L}{\left(P_{UV}(\mathcal{F}^c)-\epsilon\right)^2ML}
\label{e_rad}
\end{align}
where the smooth mutual information (see e.g.\ \cite{renner2005simple}) is defined as
\begin{align}
I_{\infty}^{\epsilon}&:=
\inf_{\mathcal{G}\colon P_{UV}(\mathcal{G})\ge 1-\epsilon}
\sup_{(u,v)\in\mathcal{G}}\imath_{U;V}(u;v).
\label{e_smoothmi}
\end{align}
\end{lem}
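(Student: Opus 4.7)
The plan is to carry out the classical second-moment (Paley--Zygmund) argument of El Gamal and van der Meulen \cite{el1981proof} in the single-shot regime, with the smooth mutual information $I_\infty^\epsilon$ serving as the non-asymptotic replacement for the typical value of $\imath_{U;V}$. The first step is a typical-set restriction: by definition of $I_\infty^\epsilon$, there is a set $\mathcal{G}$ with $P_{UV}(\mathcal{G})\ge 1-\epsilon$ on which $\imath_{U;V}(u;v)\le I_\infty^\epsilon$ pointwise. Replacing $\mathcal{F}$ by $\mathcal{F}\cap\mathcal{G}$ can only enlarge the covering-failure probability on the left-hand side, while $P_{UV}(\mathcal{F}\cap\mathcal{G})$ differs from $P_{UV}(\mathcal{F})$ by at most $\epsilon$. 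On $\mathcal{G}$ the pointwise bound on $\imath_{U;V}$ yields the one-shot change of measure
\[
\mu\;:=\;(P_U\times P_V)(\mathcal{F}\cap\mathcal{G})\;\ge\;\exp(-I_\infty^\epsilon)\,P_{UV}(\mathcal{F}\cap\mathcal{G}),
\]
so that the expected number of ``typical good'' pairs $\e[N]=ML\mu$ is at least $ML\exp(-I_\infty^\epsilon)$ times the probability surplus that will appear in the denominator of the final bound.

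The second step is the variance bookkeeping. With $N:=\sum_{m=1}^M\sum_{l=1}^L\mathbf{1}\{(U_m,V_l)\in\mathcal{F}\cap\mathcal{G}\}$, the reduced failure event is contained in $\{N=0\}$. I expand $\e[N^2]$ over the four patterns of $(m,l,m',l')$ according to whether $m=m'$ and whether $l=l'$: the fully distinct group contributes $ML(M-1)(L-1)\mu^2$ by full independence and cancels the bulk of $\e[N]^2$; the row-shared group contributes $ML(L-1)\e[g(U)^2]$ and the column-shared group contributes $LM(M-1)\e[h(V)^2]$, where $g(u):=\p\{(u,V)\in\mathcal{F}\cap\mathcal{G}\}$ and $h(v):=\p\{(U,v)\in\mathcal{F}\cap\mathcal{G}\}$. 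The pointwise estimates $g,h\le 1$ give $\e[g(U)^2]\le\e[g(U)]=\mu$ and likewise for $h$, so $\mathrm{Var}(N)\le ML\mu+ML(M+L-2)\mu$, naturally split into a diagonal piece of order $ML\mu$ and an off-diagonal cross-covariance piece of order $ML(M+L)\mu$.

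The third and final step is Chebyshev's inequality $\p[N=0]\le\mathrm{Var}(N)/\e[N]^2$, applied piecewise: the diagonal contribution is $1/\e[N]$ and, under the change of measure, yields the first term of the stated bound; the off-diagonal contribution is $(M+L)/(ML\mu)$ and, under the change of measure, yields the second term. The only genuinely non-routine element of the proof is the variance bookkeeping of Step~2, and I expect the hardest part---the point at which care is needed---to be isolating the row-shared and column-shared covariances correctly so that they aggregate into the compact $ML(M+L)$ off-diagonal count rather than inflating to $(ML)^2$; once this is done, the change of measure and a single use of Chebyshev close the argument.
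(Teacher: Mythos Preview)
The paper does not give its own proof of this lemma: it is quoted in Section~\ref{sec_previous} as prior work from \cite{Radhakrishnan2016}, with only the one-line description that the proof carries the El~Gamal--van~der~Meulen second-moment idea over to the one-shot setting. So there is no in-paper argument to compare against; I can only assess your proposal on its own merits.

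Your variance bookkeeping (Step~2) is correct: with the unweighted count $N$ and $\mu:=(P_U\times P_V)(\mathcal{F}\cap\mathcal{G})$ you indeed get $\mathrm{Var}(N)\le ML\mu+ML(M+L-2)\mu$, hence $\p[N=0]\le (M{+}L{-}1)/(ML\mu)$. The gap is in Step~3. The only change-of-measure inequality you have is $\mu\ge \exp(-I_\infty^\epsilon)\bigl(P_{UV}(\mathcal{F})-\epsilon\bigr)$, and applying it to the off-diagonal ratio $(M{+}L)/(ML\mu)$ yields
\[
\frac{(M+L)\exp(I_\infty^\epsilon)}{(P_{UV}(\mathcal{F})-\epsilon)\,ML},
\]
not the stated second term $(M{+}L)/\bigl((P_{UV}(\mathcal{F})-\epsilon)^2 ML\bigr)$. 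These two expressions are incomparable in general, and in the regime one cares about (think $P_{UV}=P_{\sf UV}^{\otimes n}$ with $I_\infty^\epsilon\approx nI({\sf U;V})$ and $P_{UV}(\mathcal{F})-\epsilon$ a constant) your off-diagonal term is exponentially \emph{larger} than the one you are asked to prove. So the unweighted Chebyshev argument with the crude estimate $\e[g(U)^2]\le\mu$ does not close to \eqref{e_rad}; your claim that ``under the change of measure [it] yields the second term'' is where the proof breaks.

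What does work is the weighted-sum variant that the paper itself develops in Lemma~\ref{le:smcl}: replacing the indicators by $\exp(\im(U_m;V_l))\mathbf{1}\{(U_m,V_l)\in\mathcal{F}\cap\mathcal{G}\}$ makes the mean equal to $P_{UV}(\mathcal{F}\cap\mathcal{G})\ge P_{UV}(\mathcal{F})-\epsilon$ directly (no $\exp(-I_\infty^\epsilon)$ loss there), while the cross-covariance estimates \eqref{eqn:y6}--\eqref{eqn:y7} cap each off-diagonal piece by $(ML)^{-2}P_{UV}(\mathcal{F}\cap\mathcal{G})$. Specializing Lemma~\ref{le:smcl} with $\gamma=\log ML-I_\infty^\epsilon$ gives
\[
\frac{\exp(I_\infty^\epsilon)}{(P_{UV}(\mathcal{F})-\epsilon)ML}+\frac{M+L}{(P_{UV}(\mathcal{F})-\epsilon)ML},
\]
which is actually sharper than \eqref{e_rad} (one power of $P_{UV}(\mathcal{F})-\epsilon$ instead of two in the second denominator) and hence implies it. If you want to stay with the unweighted sum, you would need a genuinely different bound on the row/column covariances than $\e[g^2]\le\mu$; the pointwise estimate $g\le 1$ is too blunt to remove the $\exp(I_\infty^\epsilon)$ from the off-diagonal contribution.
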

Yassaee et al.\ \cite{y-isit13} proposed a general method to one-shot achievabilities based on likelihood encoders/decoders (that sample from the posterior rather than declaring the argument that maximizes it) and Jensen's inequality.
    In \cite{y-isit15}, Yassaee applied this approach to the covering problems.
    In Theorem~\ref{thm7} we reproduce a bound presented by Yassaee in the conference presentation of \cite{y-isit15} (not included in \cite{y-isit15}).


\begin{rem}\normalfont
None of the mutual covering lemmas above is tight enough to show \eqref{e_typical}.
In the stationary memoryless setting \eqref{e_fixrate}, these bounds only show that the left side of \eqref{e_typical} is exponentially small in the block length $n$ (rather than doubly exponential as in \eqref{e_typical}).
\end{rem}
\begin{rem}\normalfont
Lemma \ref{lem:unilateral} establishes \eqref{e_worst} for $M=1$ or $L=1$ (by choosing $\gamma=0.001\log ML$, say).
However, the existing mutual covering lemmas are not sufficient for establishing \eqref{e_worst} for general $M$ and $L$.
In particular, \eqref{e_mybound} establishes \eqref{e_worst} only when $\min\left\{\frac{1}{L},\frac{1}{M}\right\}$ is negligible compared to $\mathbb{P}[\imath_{U;V}(U;V)\ge \log ML]$ (by taking $\gamma=0.001\log ML$ and $\delta=0.5(ML)^{0.999}$, say).
\end{rem}

\section{Summary of the Main Results}
\label{sec_results}
\subsection{New Bounds for Mutual Covering}
The proof of our new mutual covering lemma relies on two tools not used before in this context, namely, a weighted sum trick and a Talagrand concentration inequality.
Instead of examining the full combined power of these tools,
we first give a simpler version assuming a weaker
quadruple-wise independence structure among the random variables,
in a similar vein as \eqref{e_rad}. 
This limited independence setting is useful in the analysis of linear codes in which the codewords are generated from a random matrix and are not mutually independent.
\begin{lem}[Limited independence]\label{le:smcl}
Given $P_{UV}$, assume that $(U_1,\cdots,U_M,V_1,\cdots,V_L)$ satisfy
\begin{equation}
P_{U_mU_{\bar{m}}
V_lV_{\bar{l}}}
=P_U\times P_U\times P_V\times P_V,\quad \forall m\neq\bar{m},~l\neq\bar{l}.
\label{eqn:y0}
\end{equation}
Then for any $\mathcal{F}\subseteq\mathcal{U}\times \mathcal{V}$ and $\gamma>0$,
\begin{align}
\p&\left[\bigcap_{m=1}^M
\bigcap_{l=1}^L
\{(U_m,V_l)\notin\mathcal{F}\}\right]
\nonumber\\
&\le \dfrac{\exp(-\gamma)+M^{-1}+L^{-1}}
{\p\left[(U,V)\in\mathcal{F}, ~\im(U;V)\le\log ML-\gamma\right]}.
\end{align}
\end{lem}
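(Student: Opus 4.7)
My plan is a weighted second-moment (Chebyshev) argument. I would introduce
\begin{equation*}
A := \mathcal{F}\cap\{(u,v):\im(u,v)\le\log ML-\gamma\}
\end{equation*}
and the exponentially weighted covering count
\begin{equation*}
S := \frac{1}{ML}\sum_{m=1}^M\sum_{l=1}^L w(U_m,V_l), \qquad w(u,v):=\exp(\im(u,v))\,\mathbf{1}\{(u,v)\in A\}.
\end{equation*}
Since $w\ge 0$ vanishes outside $A\subseteq\mathcal{F}$, the failure-to-cover event on the left of the claim implies $S=0$, and it suffices to upper bound $\p[S=0]$. Marginalizing \eqref{eqn:y0} gives $(U_m,V_l)\sim P_U\times P_V$, so a change of measure yields $\e S=\mu$, where $\mu:=\p[(U,V)\in\mathcal{F},\,\im(U;V)\le\log ML-\gamma]$ is precisely the denominator in the claim.

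I would then expand $\mathrm{Var}(S)$ as a double sum of covariances. By \eqref{eqn:y0}, after marginalizing the irrelevant coordinates, $\mathrm{Cov}(w(U_m,V_l),w(U_{m'},V_{l'}))=0$ whenever $m\ne m'$ \emph{and} $l\ne l'$; three families of terms remain. The $ML$ diagonal terms contribute $\mathrm{Var}(w(U,V))\le\|w\|_\infty\,\e[w]=MLe^{-\gamma}\mu$; the $ML(L-1)$ same-$m$ terms contribute $\mathrm{Var}_U(h(U))$ with $h(u):=\e_V[w(u,V)]$; and the $LM(M-1)$ same-$l$ terms contribute $\mathrm{Var}_V(g(V))$ with $g(v):=\e_U[w(U,v)]$. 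The crucial observation is that the exponential weight turns these partial averages into conditional probabilities: $g(v)=\p_{U\sim P_{U|V=v}}[(U,v)\in A]\in[0,1]$ and symmetrically $h(u)\in[0,1]$, hence $\mathrm{Var}_V(g(V))\le\|g\|_\infty\,\e[g(V)]\le\mu$ and likewise $\mathrm{Var}_U(h(U))\le\mu$. Assembling the three pieces,
\begin{equation*}
\mathrm{Var}(S)\le \mu\bigl(e^{-\gamma}+M^{-1}+L^{-1}\bigr),
\end{equation*}
and Chebyshev's inequality $\p[S=0]\le\mathrm{Var}(S)/(\e S)^2$ closes the argument.

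The main obstacle is the choice of the weight $w(u,v)=\exp(\im(u,v))\mathbf{1}_A$. It plays two distinct roles: change of measure makes the expectation equal to $\p_{P_{UV}}(A)$, producing the correct denominator; and it forces $g$ and $h$ to be conditional probabilities bounded by $1$, which is what yields the asymmetric $M^{-1}+L^{-1}$ scaling of the off-diagonal variance. A plain-indicator second-moment argument in the spirit of \eqref{e_rad} would lose both benefits: the mean would be $(P_U\times P_V)(A)$ instead of $P_{UV}(A)$, and the off-diagonal variance terms would each pick up a factor of $\|w\|_\infty=MLe^{-\gamma}$, which would destroy the final bound.
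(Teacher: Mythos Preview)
Your proposal is correct and follows essentially the same route as the paper: the same weighted sum $S$, the same Chebyshev bound $\p[S=0]\le\mathrm{Var}(S)/\e[S]^2$, and the same four-way split of the covariance sum under the pairwise-independence hypothesis \eqref{eqn:y0}. Your observation that $h(u)=P_{V|U=u}(\{v:(u,v)\in A\})$ and $g(v)=P_{U|V=v}(\{u:(u,v)\in A\})$ lie in $[0,1]$ is exactly the mechanism the paper exploits via the identity $\e[\exp(\im(U_1;V_2))\mid U_1]=1$; the two computations are the same up to notation.
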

The proof of Lemma~\ref{le:smcl} uses the weighted sum trick
and is given in Section~\ref{sec_proof}. 
\par
Under a stronger independence structure, we replace the Chebyshev inequality in the proof of Lemma~\ref{le:smcl} with the Talagrand inequality, and get a tighter bound sufficient to yield both \eqref{e_typical} and \eqref{e_worst}.  This leads to the following lemma, which is the main result of the paper
and is proved in  Section~\ref{sec_proof}.

\begin{lem}[Full independence]\label{le:dsmcl}
Given $P_{UV}$,
suppose that $(U_1,\cdots,U_M,V_1,\cdots,V_L)$ has the joint distribution
\begin{equation}
P_{U^MV^L}=\underbrace{P_U\times\cdots\times P_U}_{\mbox{\tiny{$M$ times}}}\times \underbrace{P_V\times\cdots\times P_V}_{\mbox{\tiny {$L$ times}}},
\label{eqn:dy0}
\end{equation}
then for any event $\mathcal{F}$ and $\gamma>0$,
\begin{align}
&\p\left[\bigcap_{m=1}^M
\bigcap_{l=1}^L\{(U_m,V_l)\notin\mathcal{F}\}\right]
\nonumber\\
&\le \exp_{e}\left(-\dfrac{\p\left[(U,V)\in\mathcal{F}, ~\im(U;V)\le\log ML-\gamma\right]}{4\exp(-\gamma)+2M^{-1}+2{L}^{-1}}\right).
\end{align}
\end{lem}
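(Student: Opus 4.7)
The plan is to refine the proof of Lemma~\ref{le:smcl} by replacing Chebyshev's inequality with a Bernstein-type one-sided Talagrand inequality. I would introduce the weighted sum $T:=\sum_{m=1}^M\sum_{l=1}^L Y_{ml}$ with
\[
Y_{ml}:=\frac{1}{ML}\exp(\im(U_m;V_l))\,\mathbf{1}\{(U_m,V_l)\in\mf\}\,\mathbf{1}\{\im(U_m;V_l)\le \log ML-\gamma\}.
\]
Each $Y_{ml}$ lies in $[0,\exp(-\gamma)]$ by the information-density truncation, and $T$ vanishes identically on the covering-failure event. A single change of measure, using $\exp(\im)=\rmd P_{UV}/\rmd(P_U\times P_V)$, yields $\e[T]=q$ where $q:=\p[(U,V)\in\mf,\ \im(U;V)\le\log ML-\gamma]$, so covering failure implies $T\le 0=\e[T]-q$.

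Next I would bound $\mathrm{Var}(T)$ using the full independence assumption \eqref{eqn:dy0}. Splitting the covariance sum by the overlap pattern of the two index pairs: pairs with disjoint indices in both coordinates contribute $0$ by independence; the $ML$ diagonal terms contribute at most $\exp(-\gamma)\,q$ via the pointwise bound $Y_{ml}\le\exp(-\gamma)$; and the terms sharing exactly one coordinate contribute at most $q/M+q/L$ once one exploits the identity $\e_{V\sim P_V}[\exp(\im(u;V))]=1$ to deduce $\e[Y_{ml}\mid U_m=u]\le 1/(ML)$ (and symmetrically in $V$). Altogether $\mathrm{Var}(T)\le(\exp(-\gamma)+M^{-1}+L^{-1})\,q$.

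Finally I would apply the one-sided Talagrand concentration inequality \cite[Theorem~8.6]{BLM} to $T$, regarded as a function of the $M+L$ independent random variables $(U_1,\dots,U_M,V_1,\dots,V_L)$ whose summands are nonnegative and uniformly bounded by $\exp(-\gamma)$. It furnishes a Bernstein-type lower tail
\[
\p[T\le\e[T]-t]\le \exp\!\left(-\frac{t^2}{2(\mathrm{Var}(T)+\exp(-\gamma)\,t)}\right),
\]
and plugging in $t=q$ together with the variance estimate collapses to the bound stated in the lemma.

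The principal technical obstacle, in my view, is the variance step: the naive $\exp(-\gamma)$ pointwise bound is too crude on the single-coordinate-overlap terms and only recovers the constants of Lemma~\ref{le:smcl}; to obtain the extra $1/(ML)$ factor needed for the $M^{-1}+L^{-1}$ scaling one must exploit the exact identity $\e_{V\sim P_V}[\exp(\im(u;V))]=1$ inside the conditional expectation. Matching this sharper variance bound with the Bernstein form of Talagrand's inequality for bounded positive functions of independent inputs is what triggers the doubly exponential regime described in \eqref{e_typical}.
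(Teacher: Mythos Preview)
Your weighted sum and the second-moment estimates are exactly the right ingredients, and they match the paper. The gap lies in the concentration step. What you cite as \cite[Theorem~8.6]{BLM} is not a Bernstein inequality: it is a purely sub-Gaussian lower-tail bound of the form
\[
\p\bigl[f(X^n)\le\e[f(X^n)]-t\bigr]\le\exp_e\Bigl(-\tfrac{t^2}{2\sum_i\e[c_i^2(X^n)]}\Bigr),
\]
where the $c_i$ are one-sided Lipschitz witnesses satisfying $f(x)-f(y)\le\sum_i c_i(x)\mathbf{1}\{x_i\ne y_i\}$. There is no $\mathrm{Var}(T)+b\cdot t$ denominator. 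And even if you switched to a genuine Bernstein-type bound for functions of independent inputs, the scale parameter $b$ must control the change in $T$ when one of the $M+L$ \emph{coordinates} $U_m$ or $V_l$ is perturbed; that change can be as large as $\sum_l Y_{ml}\le L\exp(-\gamma)$, not $\exp(-\gamma)$. The $ML$ summands $Y_{ml}$ are individually bounded by $\exp(-\gamma)$, but they are not independent, so one cannot feed them directly into a Bernstein sum.

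The paper's route is to use \cite[Theorem~8.6]{BLM} in its actual sub-Gaussian form, with $c_m=\sum_l Y_{ml}$ and $c_{-l}=\sum_m Y_{ml}$ as the Lipschitz witnesses. The variance proxy $\sum_m\e[c_m^2]+\sum_l\e[c_{-l}^2]$ is \emph{not} $\mathrm{Var}(T)$---the diagonal contribution $ML\,\e[Y_{11}^2]$ gets counted twice, once in the row-sum and once in the column-sum---and it evaluates to at most $(2\exp(-\gamma)+M^{-1}+L^{-1})q$. Plugging $t=q$ into the sub-Gaussian bound then yields the stated denominator $4\exp(-\gamma)+2M^{-1}+2L^{-1}$. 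Your identity $\e_{V\sim P_V}[\exp(\im(u;V))]=1$ is precisely what controls the cross terms $\e[Y_{11}Y_{12}]$ inside $\e[c_m^2]$; that part of your sketch is correct, it just needs to be applied to $\sum_i\e[c_i^2]$ rather than to $\mathrm{Var}(T)$.
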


\begin{rem}\normalfont
The probability of failure to cover $\mathcal{F}$ can be optimized over all couplings $P_{UV}$ with marginals $P_U$ and $P_V$. This allows us to find a good joint distribution  resulting in vanishing failure probability, which need not be the case
for an arbitrary coupling $P_{UV}$.
\end{rem}

\subsection{Double Exponential Convergence of the Covering Error}
\label{sec_dexp}
With the smooth mutual information defined in \eqref{e_smoothmi},
for any $p\in[0,1]$ we have
\begin{align}
&\quad
\sup_{\mathcal{F}\colon P_{UV}(\mathcal{F})\ge p}
\p\left[\bigcap_{m=1}^M
\bigcap_{l=1}^L\{(U_m,V_l)\notin\mathcal{F}\}\right]
\nonumber\\
&\le \inf_{\epsilon\in[0,p]} \exp_e\left(-\frac{p-\epsilon}{\tfrac{2}{M}+\tfrac{2}{L}
+\tfrac{4\exp(I_{\infty}^{\epsilon}(U;V))}{ML}}\right)
\label{e14}
\end{align}
which follows from Lemma~\ref{le:dsmcl} by taking $\gamma=\log ML-I_{\infty}^{\epsilon}(U;V)$.
Thus in an asymptotic setting where $\frac{1}{n}\imath_{U^{(n)};V^{(n)}}(U^{(n)};V^{(n)})$ satisfies the law of large numbers, the smooth mutual information can be approximated by the mutual information,
and we see that the $\le$ part of \eqref{e_typical} holds.
In particular, in the stationary memoryless case, we can derive the exact doubly exponential exponent (by finding an asymptotically matching converse):
\begin{thm}\label{thm_dee}
Fix\footnote{When we derive single-letter expressions in the stationary memoryless settings from bounds in the single-shot settings, we will use the superscript $(n)$, as in $U^{(n)}$, to indicate an $n$-vector in $\mathcal{U}^n$; san-serif letters, as in $P_{\sf UV}$, are used in per-letter distributions.} $P_{\sf UV}$, $R_1,R_2>0$, $\epsilon\in (0,1)$.
Let $U^{(n)}_1,\dots,U^{(n)}_M\sim P^{\otimes n}_{\sf U}$
and $V^{(n)}_1,\dots,V^{(n)}_L\sim P_{\sf V}^{\otimes n}$ be independent,
where\footnote{Here and elsewhere, as usual the rounding of the right sides of \eqref{e_Mr1} and \eqref{e_nr2} is not explicitly indicated.}
\begin{align}
M_n&=\exp(nR_1);\label{e_Mr1}
\\
L_n&=\exp(nR_2).\label{e_nr2}
\end{align}
Then
\begin{align}
&\quad\lim_{n\to\infty}\frac{1}{n}
\log\log
\frac{1}{\sup_{\mathcal{F}}\mathbb{P}[\bigcap_{m=1}^{M_n}\bigcap_{l=1}^{L_n}
(U^{(n)}_m,V^{(n)}_l)\notin \mathcal{F}]}
\nonumber\\
&=
\min\{R_1,R_2,R_1+R_2-I(\sf{U;V})\}.
\label{e_dee}
\end{align}
where the supremum is over all $\mathcal{F}\subseteq \mathcal{U}^n\times \mathcal{V}^n$ such that
$P_{UV}^{\otimes n}(\mathcal{F})\ge 1-\epsilon$.
\end{thm}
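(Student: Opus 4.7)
My plan is to prove the stated limit by matching an achievability upper bound and a converse lower bound on the supremum $S_n$ in \eqref{e_dee}. Write $R:=\min\{R_1,R_2,R_1+R_2-I({\sf U;V})\}$, so the target becomes $n^{-1}\log\log(1/S_n)\to R$.

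\emph{Achievability.} I would specialize \eqref{e14} to $p=1-\epsilon$ with smoothing parameter $(1-\epsilon)/2$. In the memoryless setting $\imath_{U^{(n)};V^{(n)}}(U^{(n)};V^{(n)})=\sum_{i=1}^n\imath_{\sf UV}({\sf U}_i;{\sf V}_i)$ is a sum of $n$ i.i.d.\ terms of mean $I({\sf U;V})$, so the weak law of large numbers applied to the typicality event $\{\imath\le n(I({\sf U;V})+\delta)\}$ gives $n^{-1}I_\infty^{(1-\epsilon)/2}(U^{(n)};V^{(n)})\le I({\sf U;V})+\delta$ for every $\delta>0$ and all large $n$. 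Substituting $M_n,L_n$ from \eqref{e_Mr1}--\eqref{e_nr2} makes the denominator of \eqref{e14} of order $\Theta(\exp(-n\min\{R_1,R_2,R_1+R_2-I({\sf U;V})-\delta\}))$, so the argument of $\exp_e(-\cdot)$ is $\Omega(\exp(n(R-o(1))))$; taking two logarithms and sending $\delta\downarrow 0$ yields $n^{-1}\log\log(1/S_n)\ge R-o(1)$.

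\emph{Converse.} I would exhibit three families of $\mathcal{F}$ and use that $S_n$ beats each. (i) Take $\mathcal{F}_1=\mathcal{G}_U\times\mathcal{V}^n$ with $P_{\sf U}^{\otimes n}(\mathcal{G}_U)\in[1-\epsilon,1-\epsilon+o(1)]$ (achievable for large $n$ since atoms of $P_{\sf U}^{\otimes n}$ vanish in the non-degenerate case); the covering failure equals $(1-P_{\sf U}^{\otimes n}(\mathcal{G}_U))^{M_n}\ge(\epsilon-o(1))^{M_n}$, contributing exponent $R_1$. (ii) Symmetrically, $\mathcal{F}_2=\mathcal{U}^n\times\mathcal{G}_V$ contributes $R_2$. (iii) Let $\mathcal{F}_3$ be the $\delta$-typical set of $P_{\sf UV}^{\otimes n}$, so that $q:=(P_{\sf U}^{\otimes n}\times P_{\sf V}^{\otimes n})(\mathcal{F}_3)\le\exp(-n(I({\sf U;V})-\delta'))$. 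Conditioning on $U_1^{(n)},\ldots,U_{M_n}^{(n)}$ decouples the $V_l^{(n)}$'s; since $x\mapsto(1-x)^{L_n}$ is convex on $[0,1]$, Jensen's inequality combined with the pointwise union bound $P_{\sf V}^{\otimes n}(\cup_m\mathcal{F}_{3,U_m^{(n)}})\le\sum_m P_{\sf V}^{\otimes n}(\mathcal{F}_{3,U_m^{(n)}})$ yields
\[
\mathbb{P}[\text{fail}]=\mathbb{E}\bigl[(1-P_{\sf V}^{\otimes n}(\cup_m\mathcal{F}_{3,U_m^{(n)}}))^{L_n}\bigr]\ge(1-M_nq)^{L_n}\ge\exp(-2M_nL_nq),
\]
where $M_nq=\exp(n(R_1-I({\sf U;V})+\delta'))\to 0$ in the Case~(iii) regime $R_1<I({\sf U;V})$ (boundary cases $R_1=I({\sf U;V})$ or $R_2=I({\sf U;V})$ collapse into Case (i) or (ii)); this contributes exponent $R_1+R_2-I({\sf U;V})$.

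The main obstacle will be Case~(iii): the indicators $\mathbf{1}\{(U_m^{(n)},V_l^{(n)})\in\mathcal{F}_3\}$ are correlated across both $m$ and $l$, yet the conditioning-then-convexity trick (union-bounding only on the $U$ side and exploiting the i.i.d.\ structure on the $V$ side) precisely converts the expected count $M_nL_nq\approx\exp(n(R_1+R_2-I({\sf U;V})))$ into a matching double-exponential lower bound on the failure probability. The degenerate regime $R_1+R_2\le I({\sf U;V})$, where $S_n\to 1$, is handled by applying $\exp_e(-x)=1-x+O(x^2)$ for small $x$ in both \eqref{e14} and the Case~(iii) bound, which keeps $\log\log(1/S_n)$ growing like $\log(M_nL_nq)\approx n(R_1+R_2-I({\sf U;V}))$.
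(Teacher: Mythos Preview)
Your proposal is correct and follows essentially the same approach as the paper's proof in Section~\ref{sec_doubleexp}: achievability via the concentration bound (you cite \eqref{e14}, the paper applies Lemma~\ref{le:dsmcl} directly with $\gamma=n(R_1+R_2-I({\sf U;V})-\tau)$, which is the same thing), and a three-case converse where (i), (ii) are the product sets $\bar{\mathcal{U}}\times\mathcal{V}^n$, $\mathcal{U}^n\times\bar{\mathcal{V}}$ and (iii) is an information-density slab/typical set handled by conditioning on the $U$-samples, Jensen's inequality for $x\mapsto x^{L_n}$, and a union bound over $m$. Your additional remarks on the degenerate regime $R_1+R_2\le I({\sf U;V})$ go beyond what the paper writes out but are consistent with the same bounds.
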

The proof of Theorem~\ref{thm_dee} is given in Section~\ref{sec_doubleexp}.
As is well-known, doubly exponential convergence phenomenon plays a role in the analysis of various information-theoretic problems.
(A very partial list of) examples include: error exponent in lossy compression (benefiting from the doubly exponential convergence in unilateral covering) \cite{csiszar};
second-order rate of Gelfand-Pinsker coding \cite[(60)]{scarlett2015};
error exponent of random channel codes \cite[(30)]{merhav}
exponent in channel synthesis \cite{yagli2018exact} (building on the concentration of the soft-covering error probability, a.k.a.\ strong soft-covering lemma \cite[Theorem~31]{jingbo2015egamma}).

\subsection{Application in Error Exponents of Broadcast Channels}
A classical application of the mutual covering lemma is the achievability proof for the broadcast channel \cite{el1981proof},\cite{NIT}.
In the single-shot setting,
consider a broadcast channel with marginal random transformations $P_{Y|X}$ and $P_{Z|X}$.
An $(M_1,M_2)$-code consists of the following:
\begin{itemize}
\item  Encoder maps an equiprobable message $(W_1,W_2)\in\{1,2,\dots,M_1\}\times\{1,2,\dots,M_2\}$ to an element in $\mathcal{X}$;
\item  Decoder~1 maps an element in $\mathcal{Y}$ to $\hat{W}_1\in\{1,2,\dots,M_1\}$.
Similarly Decoder~2 maps an element in $\mathcal{Z}$ to $\hat{W}_2$.
\end{itemize}
The goal is to design the encoder and decoders to  minimize the error probabilities $\mathbb{P}[\hat{W}_k\neq W_k]$, $k=1,2$.

Switching to the discrete memoryless setting, 
we are given the per-letter random transformations $P_{\sf Y|X}$ and $P_{\sf Z|X}$,
and make the substitutions $P_{Y|X}\leftarrow P_{\sf Y|X}^{\otimes n}$ and $P_{Z|X}\leftarrow P_{\sf Z|X}^{\otimes n}$ in the above definition, where $n$ denotes the blocklength.
A rate pair $(R_1,R_2)\in(0,\infty)^2$ is said to be achievable if there exists a sequence of $(n,M_{1n},M_{2n})$ codes such that 
\begin{align}
\liminf_{n\to\infty}\frac{1}{n}\log M_{kn}\ge R_k,
\quad k=1,2
\end{align}
and 
\begin{align}
 \limsup_{n\to\infty}\max_{k=1,2}\mathbb{P}[\hat{W}_k\neq W_k]
 =0.\label{e19}
\end{align}
A two-auxiliary simplification of \emph{Marton's inner bound} \cite{marton1979} (reproduced in \cite[Theorem~8.3]{NIT}) states that $(R_1,R_2)$ is achievable if there exists a discrete distribution $P_{\sf UV}$ and a function $x\colon\mathcal{U}\times \mathcal{V}\to \mathcal{X}$ such that
\begin{align}
R_1&\le I({\sf U;Y}),\label{e_marton0}
\\
R_2&\le I({\sf V;Z}),
\\
R_1+R_2&\le I({\sf U;Y})+I({\sf V;Z})-I({\sf U;V}).
\label{e_marton}
\end{align}
We briefly recall the proof strategy (informal): Generate $U$ and $V$ codebooks at rates $I({\sf U;Y})$ and $I({\sf V;Z})$ respectively. 
Sacrifice rates $I({\sf U;Y})-R_1$ and $I({\sf V;Z})-R_2$ in the two codebooks so that given any message, a jointly typical pair of codewords can be found, 
in view of \eqref{e_marton} and the mutual covering lemma.
Then a joint typicality rule ensures the $U$-codeword (resp.\ $V$-codeword) to be found at the Decoder~1 (resp.\ Decoder~2).
We remark that \eqref{e_marton0}-\eqref{e_marton} is known to be not tight for certain broadcast channels;
the full Marton bound \cite{marton1979} contains a third auxiliary,
an equivalent form of which was obtained by Liang and Kramer \cite{liangK2007} (reproduced in \cite[Theorem~8.4]{NIT}) by introducing a common message and applying rate splitting.
The proof by Liang and Kramer uses a conditional version of the mutual covering lemma, which we discuss in Section~\ref{sec_ext}.

In this section we investigate the achievable error exponent in the broadcast channel,
that is, a lower bound on $\liminf_{n\to\infty}\frac{1}{n}\log\frac{1}{\max_{k=1,2}\mathbb{P}[\hat{W}_k\neq W_k]}$.
Our result can be seen as an error exponent counterpart of \eqref{e_marton0}-\eqref{e_marton}.
Liang's three-auxiliary inner bound  \cite{liangK2007}\cite[Theorem~8.4]{NIT}
uses a conditional version of mutual covering lemma. We discuss a single-shot version of such a conditional mutual covering lemma later in Lemma~\ref{le:multivariate}.

From the description above of the proof of Marton's inner bound,
one would expect that the error exponent for Decoder~1 comes from taking the minimum between the error exponent in the mutual covering part and the error exponent for the channel from $\mathcal{U}$ to $\mathcal{Y}$;
one might naively imagine that former plays a nontrivial role.
The surprising double exponential convergence phenomenon in Section~\ref{sec_dexp} shows that this is not the case.
This strictly improves the error exponent obtained by  previous mutual covering bounds.

\begin{thm}\label{thm_m_exp}
Consider a discrete memoryless broadcast channel with (per-letter) marginal random transformations $P_{\sf Y|X}$ and $P_{\sf Z|X}$.
Let $P_{\sf UV}$ be an arbitrary discrete distribution,
and $x\colon \mathcal{U}\times \mathcal{V}\to \mathcal{X}$ an arbitrary function.
Let $(R_1,R_2)\in [0,\infty)^2$. 
Then there exists a code at rates $(R_1,R_2)$ achieving the following error exponent
\begin{align}
&\quad\liminf_{n\to\infty}\frac{1}{n}\log \frac{1}{\max_{k=1,2}\mathbb{P}[\hat{W}_k\neq W_k]}
\nonumber\\
&\ge \max_{\theta \in[0,1]}\min\left\{
\begin{array}{c}
E_0(\theta,P_{\sf U},P_{\sf Y|U})-\theta R_1,\\
E_0(\theta,P_{\sf V},P_{\sf Z|V})-\theta R_2,\\
\frac{E_0(\theta,P_{\sf U},P_{\sf Y|U})+E_0(\theta,P_{\sf V},P_{\sf Z|V})
-\theta(R_1+R_2+I({\sf U;V}))}{2}
\end{array}
\right\}
\label{e23}
\end{align}
where $E_0(\theta,P_{\sf U},P_{\sf Y|U}):=\log\left(\sum_y\left(\sum_uP_{\sf U}(u)P_{\sf Y|U}^{\frac{1}{1+\theta}}(y|u)\right)^{1+\theta}\right)$
denotes Gallager's function \cite{gallager},
and $P_{\sf Y|U}$ is induced by $P_{\sf Y|X}$, $P_{\sf UV}$ and $x(\cdot)$.
\end{thm}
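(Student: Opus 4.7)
The plan is to implement Marton's coding scheme with sub-codebook rates chosen just above the mutual-covering threshold, so that---by Theorem~\ref{thm_dee}---the covering step contributes a doubly-exponentially small probability of error and the overall exponent is dictated purely by the Gallager-type exponents of the two marginal channels $P_{\sf Y|U}$ and $P_{\sf Z|V}$. Concretely, I fix $R_1', R_2' \ge 0$ with $R_1' + R_2' = I({\sf U;V}) + \delta$ for some small $\delta>0$ (to be sent to zero at the end), and put $J_n = \exp(nR_1')$, $K_n = \exp(nR_2')$. For each message $w_1 \in [\exp(nR_1)]$ I draw $J_n$ codewords i.i.d.\ from $P_{\sf U}^{\otimes n}$, symmetrically on the $V$-side, with all codewords mutually independent. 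Let $\mathcal{F}_n$ be a strong-typical set of $P_{\sf UV}^{\otimes n}$ with $P_{\sf UV}^{\otimes n}(\mathcal{F}_n)\to 1$. Given $(W_1,W_2)$, the encoder picks any pair $(j,k)$ with $(U^{(n)}_{W_1,j}, V^{(n)}_{W_2,k})\in \mathcal{F}_n$ (declaring failure otherwise), and transmits $X^{(n)} = (x(U_{W_1,j,i}, V_{W_2,k,i}))_{i=1}^n$; each receiver runs maximum-likelihood decoding on its full codebook and returns the bin index of the winner.

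Next I bound the two error events. The covering event $\mathcal{E}_{\text{cov}}$ depends only on the $(W_1,W_2)$-sub-codebook of size $J_nK_n = \exp(n(I({\sf U;V})+\delta))$; Theorem~\ref{thm_dee} gives $\mathbb{P}[\mathcal{E}_{\text{cov}}] \le \exp(-\exp(n\delta'))$ for some $\delta'>0$, which is negligible at any exponential scale. For Decoder~$1$, the key structural fact is that all $U$-codewords in bins $\neq W_1$ are i.i.d.\ $P_{\sf U}^{\otimes n}$ and statistically independent of the $(W_1,W_2)$-sub-codebook, of $\mathcal{E}_{\text{cov}}$, and of the transmitted triple $(U^{(n)}_{W_1,J^*}, V^{(n)}_{W_2,K^*}, Y^{(n)})$; the standard Gallager random-coding bound applied to the $\exp(n(R_1+R_1'))$ competing $U$-codewords yields
\begin{align}
\mathbb{P}[\mathcal{E}_{\text{dec},1}] \le \exp\bigl(-n[E_0(\theta, P_{\sf U}, P_{\sf Y|U}) - \theta(R_1 + R_1')]\bigr)
\nonumber
\end{align}
for every $\theta \in [0,1]$, and symmetrically $\mathbb{P}[\mathcal{E}_{\text{dec},2}] \le \exp(-n[E_0(\theta, P_{\sf V}, P_{\sf Z|V}) - \theta(R_2 + R_2')])$.

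Finally I optimize the split $R_1'+R_2' = I({\sf U;V})$ (letting $\delta \downarrow 0$). The achievable exponent is the max-min over $a = R_1' \in [0, I({\sf U;V})]$ of two affine functions of $a$---one decreasing, one increasing. A short geometric argument shows this max-min is attained either at the interior balance point, where the two exponents coincide and yield the symmetric third term of~\eqref{e23}, or at one of the endpoints $a \in \{0, I({\sf U;V})\}$, which yield the first or second term; in every regime the minimum of those three expressions coincides with the max-min. Taking the supremum over $\theta \in [0,1]$ and standard expurgation from random to deterministic codes completes the argument.

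The principal obstacle is justifying the Gallager bound in the face of the covering-induced bias on the encoder output. The transmitted codeword $U^{(n)}_{W_1,J^*}$ is chosen precisely because it admits a compatible $V$-partner, so its marginal is $P_{\sf U}^{\otimes n}$ conditioned on an event of probability $1-o(1)$ rather than $P_{\sf U}^{\otimes n}$ exactly. Showing that Gallager's bound with the nominal marginals $(P_{\sf U}, P_{\sf Y|U})$ and $(P_{\sf V}, P_{\sf Z|V})$ remains valid requires that this bias be asymptotically negligible relative to the sought exponent; this is precisely where the doubly-exponential decay in Lemma~\ref{le:dsmcl} and Theorem~\ref{thm_dee} is indispensable. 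The weaker covering bounds \eqref{e_mybound} and \eqref{e_rad} decay only exponentially and would leave a visible penalty in the final exponent.
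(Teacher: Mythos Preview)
Your overall strategy---Marton's scheme, sharp covering to kill the encoder-failure probability, Gallager for decoding, then optimize the sub-codebook split---is exactly the paper's, and your rate-splitting optimization is the same Fourier--Motzkin elimination the paper performs after its single-shot Lemma~\ref{lem_broadcast}.

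The gap is in your resolution of the ``principal obstacle''. Doubly-exponential decay of $\mathbb{P}[\mathcal{E}_{\text{cov}}]$ does \emph{not} control the bias of the transmitted codeword, and your claim that the law of $U^{(n)}_{W_1,J^*}$ is ``$P_{\sf U}^{\otimes n}$ conditioned on an event of probability $1-o(1)$'' is false: $J^*$ depends on the whole sub-codebook, and the law of the selected codeword is (roughly) $P_{\sf U}^{\otimes n}$ reweighted by the probability that a given $u^n$ admits a typical $V$-partner---a weight that varies \emph{exponentially} with the type of $u^n$, no matter how small $\mathbb{P}[\mathcal{E}_{\text{cov}}]$ is. So the unconditional Gallager bound with input law $P_{\sf U}^{\otimes n}$ does not apply. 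The paper fixes this not via the covering bound but by bounding the decoder error \emph{conditionally} on $(U^*,V^*)=(u,v)$ and controlling that bound \emph{uniformly} over $(u,v)\in\mathcal{F}$; crucially, $\mathcal{F}$ is not a generic typical set but is tailored to the decoder via $\mathcal{F}=\{\phi_k(u,v)\le\mathbb{E}[\phi_k(U,V)]+\gamma,\,k=1,2\}$, where $\exp(\phi_1(u,v))$ is precisely the conditional Gallager quantity. On $\mathcal{E}_{\text{cov}}^c$ this gives the deterministic bound $(M_1N_1)^{\theta}\exp(\mathbb{E}[\phi_1(U,V)]+\gamma)$, and Jensen yields $\mathbb{E}[\phi_1({\sf U,V})]\le -E_0(\theta,P_{\sf U},P_{\sf Y|U})$. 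Your strong-typical $\mathcal{F}_n$ would also work---$\phi_1$ is additive in the i.i.d.\ setting, hence $\sup_{(u^n,v^n)\in\mathcal{F}_n}\phi_1(u^n,v^n)\le n\,\mathbb{E}[\phi_1({\sf U,V})]+o(n)$---but that uniformity step must be supplied explicitly, and it is a typicality argument, not a consequence of Lemma~\ref{le:dsmcl}. The sole role of the doubly-exponential decay is to ensure that $\mathbb{P}[\mathcal{E}_{\text{cov}}]$ itself does not spoil the final exponent.
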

Note that by taking $\theta\to0$,
we have $E_0(\theta,P_{\sf U},P_{\sf Y|U})=\theta I({\sf U;Y})+o(\theta)$,
and \eqref{e23} can be matched to \eqref{e_marton0}-\eqref{e_marton}.
We remark that slightly sharper versions are obtained in the proof of Theorem~\ref{thm_m_exp}; 
we present a slightly weakened but simpler version in \eqref{e23}.

\subsection{Joint Distribution Simulation via Covering}
Next, we investigate a new application of mutual covering in the setup of \emph{joint distribution simulation}, in which we desire to select among $ML$ pairs of samples, generated according
to some arbitrary distribution, a pair distributed according to (or closely in total variation distance) a pre-specified  joint distribution. 
The simple relation between covering and distribution simulation, 
observed in \cite{lv18},
leads us to the following covering-sampling duality result, which shows the existence of a sampling scheme without explicit constructions, using bounds on tail probabilities, such as covering lemmas and concentration inequalities.
\begin{thm}\label{thm_dual}
Suppose that $|\mathcal{U}\|\mathcal{V}|<\infty$,
and let $P_{UV}$ be given.
Consider any $U_1,\dots,U_M$
and $V_1,\dots,V_L$,
not necessarily independent and can have any marginal distribution.
Then
\begin{align}
&\quad\sup_{\mathcal{F}}\left\{
P_{UV}(\mathcal{F})
-
\mathbb{P}\left[\bigcup_{m=1}^M\bigcup_{l=1}^L (U_m,V_l) \in \mathcal{F}\right]
\right\}
\nonumber\\
&=
\frac{1}{2}\inf_{P_{\hat{M}\hat{L}|U^MV^L}}\left|P_{UV}-P_{U_{\hat{M}}V_{\hat{L}}}\right|
\label{e_thm_worst}
\end{align}
where $\hat{M}\in \{1,\dots,M\}$ and $\hat{L}\in\{1,\dots,L\}$ are indices selected upon observing $U^M$ and $V^L$.
\end{thm}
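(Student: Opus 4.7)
The plan is to interpret the right-hand side as the value of a transportation linear program and match it to the left-hand side by invoking the weighted/fractional form of Hall's marriage theorem (equivalently, max-flow/min-cut).

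First, I would establish the easy inequality $\text{RHS}\ge\text{LHS}$. The key observation is that $\{(U_{\hat M},V_{\hat L})\in\mathcal{F}\}$ is a sub-event of $\bigcup_{m,l}\{(U_m,V_l)\in\mathcal{F}\}$, so for every selection rule $P_{\hat M\hat L|U^MV^L}$ and every $\mathcal{F}\subseteq\mathcal{U}\times\mathcal{V}$ one has
\[
P_{U_{\hat M}V_{\hat L}}(\mathcal{F})\;\le\;\mathbb{P}\!\left[\bigcup_{m=1}^M\bigcup_{l=1}^L\{(U_m,V_l)\in\mathcal{F}\}\right].
\]
Combined with the identity $\tfrac{1}{2}|P-Q|=\sup_{\mathcal{F}}[P(\mathcal{F})-Q(\mathcal{F})]$ on finite alphabets, this yields $\tfrac{1}{2}|P_{UV}-P_{U_{\hat M}V_{\hat L}}|\ge\text{LHS}$ uniformly in the selection rule, so $\text{RHS}\ge\text{LHS}$ after taking the infimum.

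For the reverse direction $\text{LHS}\ge\text{RHS}$, I would set up a bipartite graph whose left vertices are the realizations $\omega=(u^M,v^L)$ weighted by $P_{U^MV^L}(\omega)$, whose right vertices are the pairs $(u,v)\in\mathcal{U}\times\mathcal{V}$ with target weight $P_{UV}(u,v)$, and with an edge $\omega\sim(u,v)$ whenever $(u,v)$ appears among the $ML$ candidate pairs in $\omega$. A selection rule corresponds exactly to a transportation plan moving all the left-vertex mass to the right side along edges, inducing the right-side marginal $P_{U_{\hat M}V_{\hat L}}$. Via the identity $\tfrac{1}{2}|P_{UV}-P_{U_{\hat M}V_{\hat L}}|=1-\sum_{(u,v)}P_{UV}(u,v)\wedge P_{U_{\hat M}V_{\hat L}}(u,v)$, minimizing TV over selection rules reduces to maximizing the ``common mass'' deliverable along the edges up to the right-side capacities $P_{UV}(u,v)$, a standard max-flow problem.

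Since the interior edges have infinite capacity, every minimum cut is determined by a single set $\mathcal{F}\subseteq\mathcal{U}\times\mathcal{V}$ (the right vertices whose sink-edges are \emph{not} cut), with the matching source-side cut forced to be $\{\omega:\exists(m,l)\text{ with }(u_m,v_l)\in\mathcal{F}\}$ and cut value
\[
\mathbb{P}\!\left[\bigcup_{m,l}\{(U_m,V_l)\in\mathcal{F}\}\right]+\bigl(1-P_{UV}(\mathcal{F})\bigr).
\]
Max-flow/min-cut---which in this weighted bipartite setting is exactly the fractional Hall's marriage theorem---then gives maximum common mass $=1-\text{LHS}$, hence $\inf\tfrac{1}{2}|P_{UV}-P_{U_{\hat M}V_{\hat L}}|=\text{LHS}$. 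The main obstacle is correctly identifying the cut structure and invoking Hall/max-flow in the weighted setting; the hypothesis $|\mathcal{U}||\mathcal{V}|<\infty$ reduces everything to a finite LP with an attained optimum, sidestepping measure-theoretic issues, and multiplicities among the candidate pairs of a single $\omega$ do not affect the edge incidences.
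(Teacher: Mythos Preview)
Your argument is correct, and it takes a genuinely different route from the paper's own proof. The paper establishes the same easy direction as you do, but for the hard direction it avoids LP duality entirely: it first discretizes both $P_{U^MV^L}$ and $P_{UV}$ to $K$-type distributions, splits each outcome into $K\cdot(\text{probability})$ equal-weight vertices, adds $\epsilon K$ ``redundant'' right vertices to absorb the error mass, and then invokes the \emph{integer} Hall's marriage theorem to produce a perfect matching of the left side. The general case follows by letting $K\to\infty$ and a triangle-inequality approximation argument.

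Your max-flow/min-cut formulation is the \emph{fractional} (weighted) version of the same bipartite obstruction, and it is cleaner in that it works directly on the real-valued weights with no discretization or limiting step; the identification of the min cut with a set $\mathcal{F}$ and the resulting value $\mathbb{P}[\cup_{m,l}\{(U_m,V_l)\in\mathcal{F}\}]+1-P_{UV}(\mathcal{F})$ is exactly right. The paper in fact remarks that its Hall-based proof ``differs \dots\ from the proof \dots\ based on linear programming duality'' and that the two are ``essentially minimax duality results and are related''---your approach is precisely that LP-duality route. What the paper's approach buys is that it uses only the classical Hall theorem (no flows or LP), and it yields an explicit discrete matching computable by standard bipartite matching algorithms; what your approach buys is a shorter argument with no approximation tail and a transparent identification of the optimizing $\mathcal{F}$ as the min-cut certificate. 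The one step you should make explicit when writing it out is the passage from ``maximum flow'' back to an actual selection rule: a max flow need not saturate all source edges, so you must route each $\omega$'s leftover mass $P_{U^MV^L}(\omega)-f_\omega$ to an arbitrary neighbor (always possible since every $\omega$ has at least one candidate pair), and then verify that the induced $Q=P_{U_{\hat M}V_{\hat L}}$ satisfies $\min(P_{UV}(u,v),Q(u,v))\ge f_{(u,v)\to\text{sink}}$, giving common mass at least the flow value.
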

We give a simple proof of Theorem~\ref{thm_dual} in
Section~\ref{sec_dual}, which differs than the proof in \cite{lv18} based on linear programming duality.
Of course, these proofs are essentially minimax duality results and are related.
\begin{rem}\normalfont
Although the cardinality of the alphabets does not appear in \eqref{e_thm_worst}, 
we make such an assumption because the proof given
in Section~\ref{sec_dual}
uses Hall's marriage theorem which requires finite alphabets (and is known to fail for some general alphabets). 
We also remark that $P_{\hat{M}\hat{L}|U^MV^L}$  can be constructed using off-the-shelf bipartite graph matching algorithms (see e.g.\ \cite[Section~13.3]{combinatorics}).
\end{rem}

\subsection{Error Exponent and Second-Order Rate in Joint Distribution Simulation}\label{sec_exp}
In the special case of the joint distribution simulation setup
in which the samples are independent and generated according
to the desired marginals, we can leverage  Lemma~\ref{le:dsmcl} and Theorem~\ref{thm_dual},
to show the following information spectrum bound for the achievability of joint distribution simulation.
\begin{thm}\label{thm_sim}
Given $P_{UV}$, let $U_1,\dots,U_M\sim P_U$, any $p\in(0,1)$, and $V_1,\dots,V_L\sim P_V$ be independent. 
If $M,L\ge \frac{3}{p}\ln\frac{1}{1-p}$ then
\begin{align}
&\quad\inf_{P_{\hat{M}\hat{L}|U^MV^L}}
\frac{1}{2}|P_{U_{\hat{M}}V_{\hat{L}}}-P_{UV}|\le
\nonumber\\
&\max\left\{\mathbb{P}\left[\imath_{U;V}(U;V)>\log \frac{pML}{3\ln\frac{1}{1-p}}\right],\, \right.
\nonumber\\
&\quad\left. \inf_{\epsilon\in[0,p]} \exp_e\left(-\frac{p-\epsilon}{\tfrac{2}{M}+\tfrac{2}{L}
+\tfrac{4\exp(I_{\infty}^{\epsilon}(U;V))}{ML}}\right)\right\}.
\label{e_conv}
\end{align}
\end{thm}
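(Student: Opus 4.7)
The plan is to chain Theorem~\ref{thm_dual} and Lemma~\ref{le:dsmcl}. First, I would apply the covering--sampling duality of Theorem~\ref{thm_dual} to rewrite the left side of \eqref{e_conv} as $\sup_{\mathcal{F}}\bigl(P_{UV}(\mathcal{F})-\p[\bigcup_{m=1}^M\bigcup_{l=1}^L(U_m,V_l)\in\mathcal{F}]\bigr)$, thereby reducing the problem to a pointwise estimate of the form $P_{UV}(\mathcal{F})-\p[\bigcup_{m,l}(U_m,V_l)\in\mathcal{F}]\le \max(A,B)$ for every $\mathcal{F}\subseteq\mathcal{U}\times\mathcal{V}$, where $A$ is the information-spectrum tail and $B$ the covering term on the right side of \eqref{e_conv}. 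The elementary inequality $P_{UV}(\mathcal{F})-\p[\bigcup_{m,l}(U_m,V_l)\in\mathcal{F}]\le \p[\bigcap_{m,l}(U_m,V_l)\notin\mathcal{F}]$ lets me feed $\mathcal{F}$ directly into Lemma~\ref{le:dsmcl}.

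Fix $\mathcal{F}$ and split on the size of $P_{UV}(\mathcal{F})$. In the ``typical'' regime $P_{UV}(\mathcal{F})\ge p$, for each $\epsilon\in[0,p]$ I would apply Lemma~\ref{le:dsmcl} with $\gamma=\log ML-I_\infty^\epsilon(U;V)$. By the definition \eqref{e_smoothmi} of the smooth mutual information, there exists $\mathcal{G}$ with $P_{UV}(\mathcal{G})\ge 1-\epsilon$ and $\im\le I_\infty^\epsilon$ on $\mathcal{G}$, so $\p[(U,V)\in\mathcal{F},\,\im(U;V)\le I_\infty^\epsilon]\ge P_{UV}(\mathcal{F}\cap\mathcal{G})\ge p-\epsilon$. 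Lemma~\ref{le:dsmcl} then produces $\exp_e\bigl(-(p-\epsilon)/(2/M+2/L+4\exp(I_\infty^\epsilon)/ML)\bigr)$, and taking the infimum over $\epsilon\in[0,p]$ recovers exactly $B$, as in \eqref{e14}.

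In the ``atypical'' regime $P_{UV}(\mathcal{F})<p$, I would decompose $\mathcal{F}$ along the information-spectrum threshold $\eta:=\log\frac{pML}{3\ln\frac{1}{1-p}}$: the piece $\mathcal{F}\cap\{\im(U;V)>\eta\}$ contributes at most $A=\p[\im(U;V)>\eta]$ to $P_{UV}(\mathcal{F})$, while Lemma~\ref{le:dsmcl} applied to $\mathcal{F}\cap\{\im(U;V)\le\eta\}$ with $\gamma=\log ML-\eta$ handles the remainder. Here the hypothesis $M,L\ge\tfrac{3}{p}\ln\tfrac{1}{1-p}$ rewrites as $\tfrac{2}{M}+\tfrac{2}{L}\le 4e^{-\gamma}=\tfrac{4p}{3\ln\frac{1}{1-p}}$, keeping the denominator of the Lemma~\ref{le:dsmcl} exponential at most $\tfrac{8p}{3\ln\frac{1}{1-p}}$, and therefore the covering failure on the truncated piece small.

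The main obstacle is to merge these two estimates into the $\max$ form (rather than a sum) stated in \eqref{e_conv}. The calibration of $\eta$ against $M,L$ via $M,L\ge\tfrac{3}{p}\ln\tfrac{1}{1-p}$ is precisely what achieves this: it forces the coverage of the truncated piece $\mathcal{F}\cap\{\im(U;V)\le\eta\}$ to be essentially complete as soon as $P_{UV}$ places nontrivial mass below the threshold, so that at the worst-case $\mathcal{F}$ the information-spectrum tail $A$ and the covering-error term $B$ never simultaneously contribute additively and one of them alone dominates the $\max$.
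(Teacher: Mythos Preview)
Your plan matches the paper's proof: apply the duality of Theorem~\ref{thm_dual}, split on $P_{UV}(\mathcal{F})\gtrless p$, and in each case invoke Lemma~\ref{le:dsmcl} with the appropriate choice of $\gamma$. The ``typical'' case is exactly \eqref{e14}, as you say.

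The only substantive gap is in the ``atypical'' case. You correctly set $\gamma=\log ML-\eta=\log\bigl(\tfrac{3}{p}\ln\tfrac{1}{1-p}\bigr)$ and observe that the hypothesis on $M,L$ caps the denominator in Lemma~\ref{le:dsmcl}, so that
\[
\p\Bigl[\bigcap_{m,l}\{(U_m,V_l)\notin\mathcal{F}\}\Bigr]\le \exp_e(-c\,t),
\qquad t:=\p[(U,V)\in\mathcal{F},\ \im\le\eta]\in[0,p),
\]
with $c$ of order $\tfrac{1}{p}\ln\tfrac{1}{1-p}$. But ``the covering failure on the truncated piece is small'' is not what you need: to make your decomposition give exactly $A$ rather than $A$ plus a residual, you must show $\p\bigl[\bigcup_{m,l}(U_m,V_l)\in\mathcal{F}\bigr]\ge t$, i.e.\ $\exp_e(-ct)\le 1-t$. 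The paper supplies precisely this: the elementary convexity inequality
\[
e^{-\frac{t}{p}\ln\frac{1}{1-p}}\le 1-t,\qquad t\in[0,p]
\]
(both sides agree at $t=0$ and $t=p$, and the left side is convex). With it, $\p[\bigcap_{m,l}\notin\mathcal{F}]\le 1-t\le P_{UV}(\mathcal{F}^c)+\p[\im>\eta]$, hence $P_{UV}(\mathcal{F})-\p[\bigcup_{m,l}\in\mathcal{F}]\le A$ directly. The $\max$ then falls out of the case split with no further ``merging'' argument, so your last paragraph is unnecessary once this inequality is in place. (Incidentally, applying Lemma~\ref{le:dsmcl} to $\mathcal{F}$ itself, rather than to $\mathcal{F}\cap\{\im\le\eta\}$, yields the same numerator $t$ and is how the paper phrases it.)
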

The proof of Theorem~\ref{thm_sim} is given in Section~\ref{sec_sim}.
Ignoring the nuisance parameters, when $M$ and $L$ are large, the right side of \eqref{e_conv} is essentially the information spectrum tail
\begin{align}
\mathbb{P}[\imath_{U;V}(U;V)>\log ML].
\label{e_real}
\end{align}
For example,
for stationary memoryless $P_{UV}$,
the first term in the $\max$ dictates the exponential behavior of the right side of \eqref{e_conv}. 

Our proof via the duality result Theorem~\ref{thm_dual} is a novel contribution,
since other more straightforward sampling schemes do not seem to achieve the exact exponent.
For example, inspired by the likelihood encoder, \cite{y-isit13},  \cite{y-isit15} proposed the a simple sampling scheme, which we refer to as the \emph{weighted sampler}.
In Section~\ref{app_simple} however, we show that an analysis of the weighted sampler using the Jensen's inequality argument of \cite{y-isit13} gives an achievability bound of
\begin{align}
\inf_{\gamma>0}\{\mathbb{P}[\imath_{U;V}(U;V)>\log ML-\gamma]+\exp(-\gamma)\}
\label{e_simple}
\end{align}
which is good enough to show that mutual information is the (first-order) fundamental limit
but is exponentially worse than \eqref{e_real}.
Such exponential looseness cannot be overcome with a tighter bound: 
we can in fact determine the exact exponent for the weighted sampler in a certain regime, 
which is strictly worse than the exponent achieved by an optimal sampler in Theorem~\ref{thm_sim};
see Section~\ref{app_simple} for details on the analysis of the weighted sampler.

The following result is a converse  counterpart to Theorem~\ref{thm_sim}.
\begin{thm}\label{thm_conv}
Given $P_{UV}$, let $U_1,\dots,U_M\sim P_U$ and $V_1,\dots,V_L\sim P_V$, all independent.
Then for any random variables $\hat{M}\in\{1,\dots,M\}$ and $\hat{L}\in\{1,\dots,L\}$,
\begin{align}
&\quad\frac{1}{2}|P_{U_{\hat{M}}V_{\hat{L}}}-P_{UV}|
\nonumber\\
&\ge\sup_{\lambda>0}\{(1-\exp(-\lambda))\mathbb{P}[\imath_{U;V}(U;V)>\log ML+\lambda]\}.
\label{e_conv-2}
\end{align}
\end{thm}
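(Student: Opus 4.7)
My proposal is to prove the converse by the standard change-of-measure trick, picking a set $\mathcal{F}$ on which the information density is large and showing that any selection rule necessarily undershoots $P_{UV}(\mathcal{F})$ by a controlled amount.

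First, I would recall the variational expression for total variation: for any measurable $\mathcal{F}\subseteq \mathcal{U}\times\mathcal{V}$ and any joint law on the indices $(\hat{M},\hat{L})$,
\begin{align}
\tfrac{1}{2}|P_{U_{\hat{M}}V_{\hat{L}}}-P_{UV}|
\ge P_{UV}(\mathcal{F})-\p[(U_{\hat{M}},V_{\hat{L}})\in\mathcal{F}].
\nonumber
\end{align}
Since $(U_{\hat M},V_{\hat L})$ is always one of the $ML$ candidate pairs, the second term is at most $\p[\bigcup_{m,l}\{(U_m,V_l)\in\mathcal{F}\}]$. Thus for every $\mathcal{F}$,
\begin{align}
\tfrac{1}{2}|P_{U_{\hat{M}}V_{\hat{L}}}-P_{UV}|
\ge P_{UV}(\mathcal{F})-\p\!\left[\textstyle\bigcup_{m,l}\{(U_m,V_l)\in\mathcal{F}\}\right].
\nonumber
\end{align}
(This is the trivial ``$\ge$'' half of Theorem~\ref{thm_dual}, so I do not even need the Hall's-theorem direction.)

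Next, for each fixed $\lambda>0$, I would specialize $\mathcal{F}:=\{(u,v):\im(u;v)>\log ML+\lambda\}$. Two simple observations then finish things off. (i) By the defining inequality on $\mathcal{F}$ and a change of measure,
\begin{align}
(P_U\times P_V)(\mathcal{F})
=\int_{\mathcal{F}}\exp(-\im)\,\mathrm{d}P_{UV}
\le \exp(-\log ML-\lambda)\,P_{UV}(\mathcal{F}).
\nonumber
\end{align}
(ii) By the union bound over the $ML$ (dependent) pairs, with $U_m\sim P_U$ and $V_l\sim P_V$ independent, the probability of the covering event is at most $ML\cdot(P_U\times P_V)(\mathcal{F})$. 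Combining these yields $\p[\bigcup_{m,l}\{(U_m,V_l)\in\mathcal{F}\}]\le \exp(-\lambda)\,P_{UV}(\mathcal{F})$, so
\begin{align}
\tfrac{1}{2}|P_{U_{\hat{M}}V_{\hat{L}}}-P_{UV}|
\ge (1-\exp(-\lambda))\,P_{UV}(\mathcal{F})
= (1-\exp(-\lambda))\,\p[\im(U;V)>\log ML+\lambda],
\nonumber
\end{align}
and taking the supremum over $\lambda>0$ gives \eqref{e_conv-2}.

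There is really no subtle obstacle: the only choice in the argument is the threshold set $\mathcal{F}$, which is forced by matching the tail form on the right-hand side of \eqref{e_conv-2}. The union bound is tight up to the factor $(1-\exp(-\lambda))$ because the excess likelihood ratio on $\mathcal{F}$ leaves a slack of $\exp(\lambda)$, and this slack is exactly what produces the prefactor $(1-\exp(-\lambda))$. Note also that the argument nowhere uses finiteness of the alphabets or any structural assumption on $P_{UV}$ beyond the existence of the Radon--Nikodym derivative defining $\im$, so the converse holds in the general single-shot setting.
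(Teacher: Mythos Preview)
Your proof is correct and follows essentially the same approach as the paper: use the variational form of total variation (the easy direction of Theorem~\ref{thm_dual}), bound the selection probability by the union over all $ML$ pairs, and reduce to $P_{UV}(\mathcal{F})-ML\,(P_U\times P_V)(\mathcal{F})$ via the union bound. The only cosmetic difference is that the paper fixes $\mathcal{F}=\{\imath_{U;V}>\log ML\}$ and then invokes an external information-spectrum inequality (\cite[Proposition~13.1]{jingbo2015egamma}) to produce the $(1-\exp(-\lambda))$ factor, whereas you bake the $+\lambda$ into the threshold set and carry out the change of measure by hand; your version is therefore self-contained and arguably cleaner, but the underlying argument is identical.
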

The proof of Theorem~\ref{thm_conv} is given in Section~\ref{sec_simconv},
which also uses the duality result in Theorem~\ref{thm_dual}.

In the stationary memoryless case,
Theorem~\ref{thm_sim} and Theorem~\ref{thm_conv}
imply that
\eqref{e_real} is a tight approximation of the error in terms of the error exponent or the second-order analysis.
In particular, we obtain the exact error exponent and the second-order rates in the nonvanishing error regime in the following corollaries.
\begin{cor}\label{cor_simulation}
Fix a discrete memoryless source $P_{\sf UV}$, $R_1,R_2>0$. Let $U^{(n)}_1,\dots,U^{(n)}_M\sim P^{\otimes n}_{\sf U}$
and $V^{(n)}_1,\dots,V^{(n)}_L\sim P_{\sf V}^{\otimes n}$ be independent,
where
\begin{align}
M_n&=\exp(nR_1);\label{e_M}
\\
L_n&=\exp(nR_2).\label{e_L}
\end{align}
Then
\begin{align}
&\quad-\lim_{n\to\infty}\frac{1}{n}\log\inf_{P_{\hat{M}\hat{L}|U^{(n)}_1\dots U^{(n)}_{M_n}V^{(n)}_1\dots V^{(n)}_{L_n}}}
|P_{\sf UV}^{\otimes n}-P_{U^{(n)}_{\hat{M}}V^{(n)}_{\hat{L}}}|
\nonumber\\
&=\max_{\alpha\in (0,\infty)}
\{
\alpha(R_1+R_2)-
\alpha D_{1+\alpha}(P_{\sf UV}\|P_{\sf U}\times P_{\sf V})\},\label{e31}
\end{align}
where 
$$D_{\alpha}(P\|Q):=\frac{1}{\alpha-1}\log\int \left(\frac{{\rm d}P}{{\rm d}Q}\right)^{\alpha}{\rm d}Q $$
denotes the R\'enyi divergence of order $\alpha$ (see e.g.\ \cite{book}). 
\end{cor}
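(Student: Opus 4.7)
The plan is to sandwich the left-hand side of \eqref{e31} between the achievability bound of Theorem~\ref{thm_sim} and the converse bound of Theorem~\ref{thm_conv}, specialized to the stationary memoryless setting, and identify the shared exponential rate via Cram\'er's theorem. The key observation is that the information density tensorizes: for i.i.d.\ $X_i:=\imath_{\sf U;V}({\sf U}_i;{\sf V}_i)$ under $P_{\sf UV}$, we have $\imath_{U^{(n)};V^{(n)}}(U^{(n)};V^{(n)})=\sum_{i=1}^{n}X_i$ whenever $(U^{(n)},V^{(n)})\sim P_{\sf UV}^{\otimes n}$. A direct computation yields the log moment generating function
\begin{align}
\Lambda(\alpha):=\log\e_{P_{\sf UV}}[\exp(\alpha X)]
=\log\sum_{u,v}\frac{P_{\sf UV}(u,v)^{1+\alpha}}{(P_{\sf U}(u)P_{\sf V}(v))^{\alpha}}
=\alpha D_{1+\alpha}(P_{\sf UV}\|P_{\sf U}\times P_{\sf V}),\nonumber
\end{align}
so the Legendre--Fenchel transform $\Lambda^{*}(R_1+R_2):=\sup_{\alpha\ge 0}\{\alpha(R_1+R_2)-\Lambda(\alpha)\}$ coincides with the right-hand side of \eqref{e31}.

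For the achievability direction, I apply Theorem~\ref{thm_sim} to $P_{\sf UV}^{\otimes n}$ with a fixed $p\in(0,1)$ and a fixed $\epsilon\in(0,p)$. The first term in the $\max$ in \eqref{e_conv} becomes $\p[\sum_{i=1}^{n}X_i>n(R_1+R_2)-c_p]$ for a constant $c_p$ depending only on $p$, whose exponential decay rate is exactly $\Lambda^{*}(R_1+R_2)$ by Cram\'er's theorem. For the second term, a standard AEP estimate yields a set $\mathcal{G}_n\subseteq\mathcal{U}^n\times\mathcal{V}^n$ with $P_{\sf UV}^{\otimes n}(\mathcal{G}_n)\ge 1-\epsilon$ on which $\imath_{U^{(n)};V^{(n)}}\le nI({\sf U;V})+O(\sqrt n)$ (via Chebyshev applied to the variance of $X$), and hence $I_{\infty}^{\epsilon}(U^{(n)};V^{(n)})\le nI({\sf U;V})+O(\sqrt n)$. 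In the regime $R_1+R_2>I({\sf U;V})$, this makes $\exp(I_{\infty}^{\epsilon})/(M_nL_n)$ vanish, so the denominator in the argument of $\exp_e(\cdot)$ in \eqref{e_conv} is bounded below by a positive constant and the second term is doubly exponentially small in $n$---swamped by the first.

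For the converse, I apply Theorem~\ref{thm_conv} with any fixed $\lambda>0$, giving the lower bound $(1-e^{-\lambda})\p[\sum_{i=1}^{n}X_i>n(R_1+R_2)+\lambda]$; Cram\'er's theorem again identifies its decay rate as $\Lambda^{*}(R_1+R_2)$. Matching the two rates establishes \eqref{e31} whenever $R_1+R_2>I({\sf U;V})$. In the complementary regime $R_1+R_2\le I({\sf U;V})$, convexity of $\Lambda$ together with $\Lambda'(0)=I({\sf U;V})$ force the supremum in \eqref{e31} to equal zero, while the same converse lower bound stays bounded away from zero (since $\p[\sum_{i=1}^{n}X_i>n(R_1+R_2)+\lambda]\to 1$ by the weak law when $R_1+R_2<I({\sf U;V})$, and tends to a positive constant by the CLT when $R_1+R_2=I({\sf U;V})$), so both sides of \eqref{e31} vanish. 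The main obstacle is the control of the double-exponential term in \eqref{e_conv}: the trivial bound $I_{\infty}^{\epsilon}\le n\log(|\mathcal{U}||\mathcal{V}|)$ does not suffice, so one has to exploit the AEP-type concentration above; beyond that, the remainder is a routine Chernoff calculation.
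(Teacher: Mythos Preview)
Your approach is essentially the paper's: tensorize the information density, identify the Chernoff/Cram\'er exponent $\alpha(R_1+R_2)-\alpha D_{1+\alpha}(P_{\sf UV}\|P_{\sf U}\times P_{\sf V})$, and sandwich using Theorem~\ref{thm_sim} and Theorem~\ref{thm_conv}; if anything you are more thorough than the paper's sketch, since you explicitly invoke the converse and treat the regime $R_1+R_2\le I({\sf U;V})$.

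One slip to fix: in your analysis of the second term of \eqref{e_conv} you write that ``the denominator in the argument of $\exp_e(\cdot)$ is bounded below by a positive constant,'' but you need exactly the opposite---the denominator $2M_n^{-1}+2L_n^{-1}+4\exp(I_\infty^{\epsilon})/(M_nL_n)$ must \emph{vanish exponentially} (which it does, by your AEP estimate together with $R_1,R_2>0$ and $R_1+R_2>I({\sf U;V})$) so that the exponent $(p-\epsilon)/(\cdot)$ blows up and the term becomes doubly exponentially small. Your conclusion is correct; only the stated justification has the inequality reversed.
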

\begin{proof}[Proof sketch]
From the Chernoff bound we obtain 
\begin{align}
&\quad\frac{1}{n}\log \mathbb{P}[\imath_{U^{(n)};V^{(n)}}(U^{(n)};V^{(n)})>\log M_nL_n]
\nonumber\\
&\le \frac{1}{n}\log \mathbb{E}[\exp(\alpha\imath_{U^{(n)};V^{(n)}}(U^{(n)};V^{(n)}))]
-\alpha\log M_nL_n
\nonumber
\\
&=  \alpha D_{1+\alpha}(P_{\sf UV}\| P_{\sf U}\times P_{\sf V}) -\alpha (R_1+R_2)
\nonumber
\end{align}
where $P_{U^{(n)}V^{(n)}}:=P_{\sf UV}^{\otimes n}$.
Moreover it is well-known in large deviation theory that this bound is asymptotically tight upon optimizing $\alpha$.
The claim then follows from Theorem~\ref{thm_sim},
since choosing any $0<\epsilon<p<1$ independent of $n$, we see that  \eqref{e_real} is an exponentially tight approximation.
\end{proof}

\begin{cor}\label{cor_simulation1}
Fix a discrete memoryless source $P_{\sf UV}$, and $R_1,R_2>0$. Let $U^{(n)}_1,\dots,U^{(n)}_{M_n}\sim P^{\otimes n}_{\sf U}$
and $V^{(n)}_1,\dots,V^{(n)}_{L_n}\sim P_{\sf V}^{\otimes n}$ be independent,
where $M_n$ and $L_n$ are positive integers such that
\begin{align}
\lim_{n\to\infty}
\frac{\log M_nL_n-nI({\sf U;V})}{\sqrt{n}}
=
A
\end{align}
for some $A\in\mathbb{R}$. Then
\begin{align}
\lim_{n\to\infty}\inf_{P_{\hat{M}\hat{L}|U^{(n)}_1\dots U^{(n)}_{M_n}V^{(n)}_1\dots V^{(n)}_{L_n}}}
|P_{\sf UV}^{\otimes n}-P_{U^{(n)}_{\hat{M}}V^{(n)}_{\hat{L}}}|
={Q}\left(\frac{A}{V ({\sf U;V} )}\right)
\end{align}
where ${Q}(\cdot)$ denotes the complementary Gaussian CDF function:
\begin{align}
{Q}(x):=\int_{x}^{\infty}\frac{1}{\sqrt{2\pi}}
e^{-\frac{t^2}{2}}{\rm d}t,\quad x\in\mathbb{R},
\end{align}
and the \textit{mutual varentropy} is
\begin{align}
V ({\sf U;V} ) =  \sqrt{{\rm Var}(\imath_{\sf U;V}({\sf U;V}))}.
\end{align}
\end{cor}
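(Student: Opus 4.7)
The plan is to sandwich the simulation error between the achievability bound of Theorem~\ref{thm_sim} and the converse bound of Theorem~\ref{thm_conv}, and identify both sides with $Q(A/V({\sf U;V}))$ through a Berry--Esseen analysis of the i.i.d.\ sum
\[
\imath_{U^{(n)};V^{(n)}}(U^{(n)};V^{(n)})=\sum_{i=1}^n\imath_{\sf U;V}(U_i;V_i),
\]
whose per-letter mean and standard deviation are $I({\sf U;V})$ and $V({\sf U;V})$.

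For the converse, I would apply Theorem~\ref{thm_conv} with $\lambda=\lambda_n\to\infty$ chosen so that $\lambda_n=o(\sqrt{n})$ (for instance $\lambda_n=\log n$). Then $1-e^{-\lambda_n}\to 1$, and since $\log M_nL_n+\lambda_n=nI({\sf U;V})+A\sqrt{n}+o(\sqrt{n})$, the central limit theorem gives
\[
\mathbb{P}\bigl[\imath_{U^{(n)};V^{(n)}}(U^{(n)};V^{(n)})>\log M_nL_n+\lambda_n\bigr]\longrightarrow Q(A/V({\sf U;V})),
\]
which supplies the matching lower bound.

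For the achievability, I would apply Theorem~\ref{thm_sim} with any fixed constants $\epsilon<p$ in $(Q(A/V({\sf U;V})),1)$, independent of $n$. The admissibility $M_n,L_n\ge\frac{3}{p}\ln\frac{1}{1-p}$ holds for all sufficiently large $n$. The threshold in the first entry of the max in~\eqref{e_conv} differs from $\log M_nL_n$ only by an $O(1)$ additive constant, so Berry--Esseen yields the limit $Q(A/V({\sf U;V}))$. For the doubly exponential entry, I would first establish the normal-approximation expansion
\[
I_\infty^\epsilon(U^{(n)};V^{(n)})=nI({\sf U;V})+\sqrt{n}\,V({\sf U;V})\,\Phi^{-1}(1-\epsilon)+O(1),
\]
obtained by recognizing $I_\infty^\epsilon$ as the $(1-\epsilon)$-quantile of $\imath$ (the infimizing $\mathcal{G}$ in~\eqref{e_smoothmi} is a sub-level set of the information density) and invoking Berry--Esseen. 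Since $\epsilon>Q(A/V({\sf U;V}))$ is equivalent to $\Phi^{-1}(1-\epsilon)<A/V({\sf U;V})$, the gap $\log M_nL_n-I_\infty^\epsilon$ grows linearly in $\sqrt{n}$; hence $\exp(I_\infty^\epsilon)/(M_nL_n)$ decays exponentially, the denominator in the doubly exponential term is $o(1)$, and that entry vanishes super-polynomially. This delivers the matching upper bound.

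The main technical obstacle is the quantile expansion for $I_\infty^\epsilon$: a bare CLT only controls the CDF pointwise, whereas the argument requires the $(1-\epsilon)$-quantile to $o(\sqrt{n})$ accuracy. Berry--Esseen supplies a uniform $O(1/\sqrt{n})$ bound on the normalized CDF which, combined with the positive density of the standard normal at $\Phi^{-1}(1-\epsilon)$, converts the CDF approximation into the needed $O(1)$ quantile approximation. Once this is in hand, the two sides of the sandwich meet at $Q(A/V({\sf U;V}))$.
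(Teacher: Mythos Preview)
Your proposal is correct and follows essentially the same route as the paper's proof sketch: apply Theorem~\ref{thm_sim} with fixed $\epsilon>Q(A/V({\sf U;V}))$ and $p\in(\epsilon,1)$, use the central limit theorem on the first term of the max in~\eqref{e_conv}, and note that $\log M_nL_n-I_\infty^\epsilon(U^{(n)};V^{(n)})=\Omega(\sqrt{n})$ so the second term is negligible. Your treatment is in fact more complete than the paper's sketch, which omits the converse direction and does not spell out the Berry--Esseen-to-quantile conversion you describe; both additions are sound and exactly what is needed to make the argument rigorous.
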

\begin{proof}[Proof sketch]
Pick any $\epsilon>Q\left(\frac{A}{V({\sf U;V})}\right)$ and $p\in (\epsilon,1)$, and apply Theorem~\ref{thm_sim}.
Observe that $\log M_nL_n-I_{\infty}^{\epsilon}(U^{(n)};V^{(n)})=\Omega(\sqrt{n})$ by the central limit theorem. 
Hence the first term in the max in
\eqref{e_conv} is the dominant term (the second term vanishes faster than exponentially).
The claim follows by applying the central limit theorem to the first term in the max in \eqref{e_conv}.
\end{proof}

\section{Extension to Multivariate Settings}
\label{sec_ext}
The asymptotic multivariate covering lemma \cite[Lemma~8.2]{NIT} is a generalization of the mutual covering counterpart to the case of $k\ge 3$ codebooks.
Given (single-letter distributions) $P_{{\sf V}_1\dots {\sf V}_k}$, suppose the rate of the $V_k$-codebook is $R_k$, then a typical tuple occurs with high probability if for every $\mathcal{S}\subseteq \{1,2,\dots,k\}$,
\begin{align}\label{eq:mul-mut}
\sum_{k\in \mathcal{S}} R_k
&\ge \sum_{k\in \mathcal{S}} H({\sf V}_k)-
H(({\sf V}_k)_{k\in\mathcal{S}}).
\end{align}
Just as in the asymptotic case,
the one-shot mutual covering lemmas (Lemma~\ref{le:smcl} and Lemma~\ref{le:dsmcl}) can be extended to the multivariate setting by the same arguments.
In some applications  such as multiple description coding \cite[Chapter 13]{NIT} and broadcast channel with a common message  \cite[Theorem~8.4]{NIT},
a conditional version of \eqref{eq:mul-mut} is useful,
which states that given $P_{{\sf ZV}_1\dots {\sf V}_k}$, suppose the rate of the $V_k$-codebook is $R_k$ where the codewords are generated independently conditioned on $Z$, 
then a typical tuple occurs with high probability if
\begin{align}
\sum_{k\in \mathcal{S}} R_k
&\ge \sum_{k\in \mathcal{S}} H({\sf V}_k|{\sf Z})-
H(({\sf V}_k)_{k\in\mathcal{S}}|{\sf Z});
\label{eq:c-mul-mut}
\end{align}
see \cite[Lemma~8.2]{NIT}.
We prove the following single-shot multivariate covering lemma, which implies \eqref{eq:c-mul-mut} in the stationary memoryless setting. 
 \begin{lem}\label{le:multivariate}
Let 
 $P_{ZV_{1}\cdots V_{k}}$ be a given joint distribution.
Let $Z\sim P_Z$,
and conditioned on $Z=z$, let $(V_1^{M_1},\cdots,V_k^{M_k})$ be mutually independent, and
 \begin{equation}
 P_{V_i^{M_i}}=\underbrace{P_{V_i|Z=z}\times\cdots\times P_{V_i|Z=z}}_{\small\mbox{$M_i$ times}},
 \quad\forall i=1,2,\dots,k.
 \end{equation}
then for any $\mathcal{F}\subseteq\mathcal{Z}\times\mathcal{V}_1\times\cdots\times\mathcal{V}_k$
and any $\gamma\in\mathbb{R}$,
\begin{align}
\p&\left[\bigcap_{m_1=1}^{M_1}\cdots\bigcap_{m_k=1}^{M_k} \{(Z,V_{1,m_1},\cdots,V_{k,m_k})\notin\mathcal{F}\}\right]\nonumber\\
                                  &\le \e\left[\exp_{e}\left(-\dfrac{\p\Big[(Z,V_1,\cdots,V_k)\in\mathcal{G}\Big|Z\Big]}{k2^{k}\exp(-\gamma)}\right)\right].
                                  \label{eq:er-mul}
\end{align}
where
\begin{equation}
\mathcal{G}:=\mathcal{F}\cap\left(\bigcap_{\mathcal{S}\subseteq \{1,\dots,k\}} \left\{\imath(Z,V_{\mathcal{S}})<\sum_{t\in\mathcal{S}}\log M_t-\gamma\right\}\right),
\label{eq:dfn-G}\end{equation}
in which $V_{\mathcal{S}}:=(V_i:i\in\mathcal{S})$, and $\imath(Z,V_{\mathcal{S}})$ is a shorthand for\footnote{For probability measures $P$ and $Q$ on the same probability space, the relative information is $\imath_{P\|Q}(.):=\log \dfrac{\mathsf{d}P}{\mathsf{d}Q}(.)$.}
\begin{equation}
\imath_{P_{ZV_{\mathcal{S}}}\|P_Z\prod_{i\in\mathcal{S}}P_{V_i|Z}}(Z,V_{\mathcal{S}}),
\label{eq:dfn-I}\end{equation}
\end{lem}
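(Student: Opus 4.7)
The plan is to extend the weighted-sum-plus-Talagrand argument behind Lemma~\ref{le:dsmcl} to the conditional multivariate setting, and then integrate out $Z$. Conditioning on $Z=z$ reduces the problem to one in which the codebooks $(V_1^{M_1},\dots,V_k^{M_k})$ are mutually independent with $V_{i,m_i}\sim P_{V_i|Z=z}$ i.i.d., which is precisely the independence structure a $k$-variate analogue of Lemma~\ref{le:dsmcl} requires.

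For each $z$ I would introduce the weighted sum
\begin{align*}
S_z := \sum_{m_1=1}^{M_1}\!\cdots\!\sum_{m_k=1}^{M_k} W_z(V_{1,m_1},\dots,V_{k,m_k})\,\mathbf{1}\{(z,V_{1,m_1},\dots,V_{k,m_k})\in\mathcal{G}\},
\end{align*}
with weight chosen as the likelihood ratio
\begin{align*}
W_z(v_1,\dots,v_k) := \frac{1}{M_1\cdots M_k}\cdot\frac{P_{V_1\cdots V_k|Z}(v_1,\dots,v_k\mid z)}{\prod_{i=1}^{k} P_{V_i|Z}(v_i\mid z)}.
\end{align*}
A direct change-of-measure calculation yields $\mathbb{E}[S_z\mid Z=z]=\mathbb{P}[(Z,V_1,\dots,V_k)\in\mathcal{G}\mid Z=z]$, and on $\mathcal{G}$ the full-subset condition from \eqref{eq:dfn-G} caps each nonzero summand at $\exp(-\gamma)$, which is exactly the uniform boundedness a Talagrand-type step needs. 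This is the direct analogue of what the bivariate proof of Lemma~\ref{le:dsmcl} does with $\exp(-\imath_{U;V}(U_m;V_l))/(ML)$.

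Next I would apply a one-sided Talagrand inequality (as in the proof of Lemma~\ref{le:dsmcl}) to the nonnegative function $S_z$ of the $M_1+\cdots+M_k$ independent coordinates $\{V_{i,m_i}\}$. The Talagrand step requires controlling the ``wimpy variance'' of $S_z$, whose computation amounts to expanding $\mathbb{E}[S_z^2\mid Z=z]$ and classifying the cross terms by the set $\mathcal{S}\subseteq\{1,\dots,k\}$ of indices on which two tuples coincide. The combinatorial factor for $\mathcal{S}$, essentially $1/\prod_{i\in\mathcal{S}}M_i$ after normalization, is then exactly neutralized by the $\mathcal{S}$-indexed hypothesis $\imath(Z,V_{\mathcal{S}})<\sum_{t\in\mathcal{S}}\log M_t-\gamma$ from \eqref{eq:dfn-G}, which bounds $P_{V_{\mathcal{S}}|Z}(v_{\mathcal{S}}|z)$ by $\exp(-\gamma)\prod_{t\in\mathcal{S}}M_t\prod_{i\in\mathcal{S}}P_{V_i|Z}(v_i|z)$. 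Summing across the $2^k$ subset contributions, together with a further factor of $k$ from the sensitivity bound in the Talagrand inequality, gives the denominator $k\,2^{k}\exp(-\gamma)$ in \eqref{eq:er-mul}, so that
\begin{align*}
\mathbb{P}[S_z=0\mid Z=z]\le \exp_e\!\left(-\frac{\mathbb{P}[(Z,V_1,\dots,V_k)\in\mathcal{G}\mid Z=z]}{k\,2^{k}\exp(-\gamma)}\right).
\end{align*}
Since $\mathcal{G}\subseteq\mathcal{F}$, the covering-failure event is contained in $\{S_z=0\}$, and taking $\mathbb{E}_Z[\cdot]$ then delivers \eqref{eq:er-mul}.

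I expect the principal obstacle to be the bookkeeping of the $2^k$ overlap patterns in the second-moment (or self-bounded variation) computation: each subset $\mathcal{S}$ produces a combinatorial factor in the $M_i$'s, which must be absorbed by the corresponding $\mathcal{S}$-indexed information-density bound in \eqref{eq:dfn-G} so that all $2^k$ contributions collapse into a single clean constant. In the bivariate case of Lemma~\ref{le:dsmcl} only the four subsets $\emptyset,\{U\},\{V\},\{U,V\}$ had to be handled and they produced the explicit $4\exp(-\gamma)+2M^{-1}+2L^{-1}$ denominator; in the multivariate case the richer collection of subset-indexed hypotheses baked into $\mathcal{G}$ is precisely what allows all $2^k$ contributions to be absorbed uniformly into a single $\exp(-\gamma)$-type constant.
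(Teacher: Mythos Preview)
Your proposal is correct and follows essentially the same route as the paper: condition on $Z=z$, form the weighted sum $S_z$ with likelihood-ratio weights, apply the one-sided Talagrand inequality to the $M_1+\cdots+M_k$ independent coordinates, bound the variance proxy by classifying cross terms according to the overlap subset $\mathcal{S}$ and absorbing each combinatorial factor via the corresponding $\mathcal{S}$-indexed information-density constraint built into $\mathcal{G}$, and finally take expectation over $Z$. The only minor imprecision is that the Talagrand variance proxy you must bound is $\sum_{i,m_i}\mathbb{E}[c_{i,m_i}^2]$ (with $c_{i,m_i}$ the partial sum fixing coordinate $(i,m_i)$), not $\mathbb{E}[S_z^2]$ itself; structurally the cross-term bookkeeping is the same, and your remark about the extra factor $k$ coming from the sensitivity step shows you have the right picture---the paper obtains variance proxy $\le k\,2^{k-1}\exp(-\gamma)\,\mathbb{P}[\cdot\mid Z=z]$, and the $2$ in Talagrand's denominator then yields $k\,2^k\exp(-\gamma)$.
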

The proof of Lemma \ref{le:multivariate} is relegated to Section~\ref{apx:multivariate}.
%
\begin{rem}\normalfont
Assume that $Z$ is constant in Lemma \ref{le:multivariate}. Using  steps similar to those in the proof of Theorem~\ref{thm_dee}, it can be shown that the error probability \eqref{eq:er-mul} decays double exponentially in the blocklength. More precisely, assuming $M_{in}=\exp(nR_i)$ and setting $P_{V_i}\leftarrow P^{\otimes n}_{{\sf V}_i}$ and $P_{V_1\cdots V_k}\leftarrow P^{\otimes n}_{{\sf V}_1\cdots {\sf V}_k}$ and denoting the left side of \eqref{eq:er-mul} by $P_e(\mathcal{F})$, its exact doubly exponential decay is
given by
\begin{align}
\quad\lim_{n\to\infty}\frac{1}{n}
&\log\log
\frac{1}{\sup_{\mathcal{F}}P_e(\mathcal{F})}
\nonumber\\
&=
\min_{\mathcal{S}\subseteq\{1,\dots,k\}} \left\{ \sum_{i\in\mathcal{S}}R_i-D\left(P_{{\sf V}_{\mathcal{S}}}\bigg\|\prod_{i\in\mathcal{S}}P_{{\sf V}_i}\right) \right\}
\label{e_dee-mul}
\end{align}
where the supremum is over all $\mathcal{F}$ with $P^{\otimes n}_{{\sf V}_1\cdots {\sf V}_k}(\mathcal{F})\ge 1-\epsilon$,
where $\epsilon\in(0,1)$ is independent of $n$.
Note that the expression in $\min$ equals the inequality gap in \eqref{eq:mul-mut}.
In particular, this shows that if strict inequality is achieved in \eqref{eq:mul-mut} then the probability of non-covering is \emph{doubly exponentially} small.
However, the same is not true for general (not constant) $Z$: in the discrete memoryless setting with $P_{ZV_1\dots V_k}\leftarrow P_{{\sf ZV}_1\dots {\sf V}_k}^{\otimes n}$,
we see that $P_e(\mathcal{F})$ decreases \emph{exponentially} under \eqref{eq:c-mul-mut}, with the exponent $\min_{Q_{\sf Z}}D(Q_{\sf Z}\|P_{\sf Z})$ where the minimum is subject to the constraint
\begin{align}
 \sum_{i\in\mathcal{S}}R_i
 \le D\left(P_{{\sf V}_{\mathcal{S}}|{\sf Z}}\left\|\prod_{i\in \mathcal{S}}P_{{\sf V}_i|{\sf Z}}\right|Q_{\sf Z}\right),
 \quad \forall \mathcal{S}\subseteq \{1,2,\dots,k\}.
\end{align}
\end{rem}

\section{Proofs of Single-Shot Covering Lemmas}
\label{sec_proof}
\subsection{Proof of Lemma~\ref{le:smcl}}
The blueprint follows  El Gamal-van~der Meulen's proof of mutual covering lemma in the asymptotic case \cite{el1981proof} with a slight but important change. Fix the subset of pairs $\mathcal{F}$ and
let
\begin{align}
\mathcal{T}&:=\left\{(u,v)\colon \im(u;v)<\log ML-\gamma\right\},
\\
\mathcal{G}&:=\mathcal{F}\cap\mathcal{T}, \\
S&:=\dfrac{1}{ML}\sum_{m=1}^M \sum_{l=1}^L \exp(\im(U_m;V_l))
1\left\{(U_m,V_l)\in\mathcal{G}\right\}\label{eq:eq3}
\end{align}
a random variable which can be viewed as  a \emph{weighted} sum of the indicator functions
that the pairs belong to $\mathcal{G}$.
In contrast, the proof in \cite{el1981proof} uses non-weighted sum representing the total number of times the pairs
belong to $\mathcal{G}$.
We have
\begin{align}
\p\left[\bigcap_{m=1}^M\bigcap_{l=1}^L
\{(U_m,V_l)\notin\mathcal{F}\}\right]
&\le \p[S=0]\label{eq:main-pr}\\
&\le \p\left[\left|S-\e[S]\right|\ge\e[S]\right]\\
&\le \dfrac{\mathsf{Var}[S]}{\e^2[S]}\label{eq:eq7},
\end{align}
where
\begin{equation}
\e[S]=\p\left[(U,V)\in\mathcal{G}\right]\label{expect1}.
\end{equation}
The remainder of the proof is devoted to showing that the variance  of $S$ satisfies
\begin{align} \label{thegoal}
\mathsf{Var}[S] \leq \e[S] \left( \frac1M + \frac1L + \exp (-\gamma) \right) 
\end{align}
To that end, it is convenient to denote
\begin{equation}
Y_{m,l}:=\dfrac{1}{ML}\exp(\im(U_m;V_l))
1\left\{(U_m,V_l)\in\mathcal{G}\right\}\label{eq:def-y}
\end{equation}
In view of the i.i.d. assumption in \eqref{eqn:y0}, we can write
\begin{align}
\mathsf{Var}[S]&=\mathsf{Var}\left[\sum_{m=1,n=1}^{M,L} Y_{m,l}\right]\\
&=\sum_{m=1,l=1}^{M,L}
\sum_{\bar{m}=1,\bar{l}=1}^{M,L}
\mathsf{Cov}\left[Y_{m,l},Y_{\bar{m},\bar{l}}\right]\\
&=ML\mathsf{Var}[Y_{1,1}]\nonumber\\
&\quad+ML(L-1)\mathsf{Cov}[Y_{1,1},Y_{1,2}]\nonumber\\
&\quad+ML(M-1)\mathsf{Cov}[Y_{1,1},Y_{2,1}]\nonumber\\
&\quad+ML(M-1)(L-1)\mathsf{Cov}[Y_{1,1},Y_{2,2}]\label{eqn:y1}.
\end{align}
Note that here and after, the computations of the second-order statistics (variances) only used the partial independence \eqref{eqn:y0}, and full independence is not necessary.
Next, we bound the four terms in the right side of \eqref{eqn:y1} separately. First, we have
\begin{align}
&\mathsf{Var}[Y_{1,1}] \le\e[Y_{1,1}^2]\\
&=\dfrac{1}{(ML)^2}\e[\exp(2\im(U_1;V_1))
1\left\{(U_1,V_1)\in\mathcal{G}\right\}]\\
&=\dfrac{1}{(ML)^2}
\e\left[\exp(\im(U;V))
1\left\{(U,V)\in\mathcal{G}\right\}]\right.\label{eqn:y2}\\
&\le\dfrac{\exp(-\gamma)}{ML}
\p\left[(U,V)\in\mathcal{G}\right],\label{eqn:y3}
\end{align}
where \eqref{eqn:y2} is due to the change of measure and \eqref{eqn:y3} follows from the definition of $\mathcal{G}$.

Next, consider
\begin{align}
\mathsf{Cov}&[Y_{1,1},Y_{1,2}]\le\e[Y_{1,1}Y_{1,2}]\\
                                  &=\dfrac{1}{(ML)^2}\e[\exp(\im(U_1;V_1))1\left\{(U_1,V_1)\in\mathcal{G}\right\}\nonumber\\
                                  &\qquad\qquad\qquad\cdot\exp(\im(U_1;V_2))1\left\{(U_1,V_2)\in\mathcal{G}\right\}]\\
                                  &\le\dfrac{1}{(ML)^2}\e[\exp(\im(U_1;V_1))1\left\{(U_1,V_1)\in\mathcal{G}\right\}\nonumber\\
                                  &\qquad\qquad\qquad\cdot\exp(\im(U_1;V_2))]\\
                                  &=\dfrac{1}{(ML)^2}\e\bigg[\e[\exp(\im(U_1;V_1))1\left\{(U_1,V_1)\in\mathcal{G}\right\}|U_1]\nonumber\\
                                  &\qquad\qquad\qquad\cdot\e[\exp(\im(U_1;V_2))|U_1]\bigg]\label{eqn:y4}\\
                                  &=\dfrac{1}{(ML)^2}\e[\exp(\im(U_1;V_1))1\left\{(U_1,V_1)\in\mathcal{G}\right\}]\label{eqn:y5}\\
                                  &=\dfrac{1}{(ML)^2}\p[(U,V)\in\mathcal{G}]\label{eqn:y6}
\end{align}
where \eqref{eqn:y4} follows from \eqref{eqn:y0}, \eqref{eqn:y5} follows from 
\begin{equation}\label{expect2}
\e[\exp(\im(U_1;V_2))|U_1]=1
\end{equation}
 and \eqref{eqn:y6} follows by change of measure.
Similarly, we have
\begin{equation}
\mathsf{Cov}[Y_{1,1},Y_{2,1}]
\le\dfrac{1}{(ML)^2}\p[(U,V)\in\mathcal{G}].\label{eqn:y7}
\end{equation}

Finally, \eqref{eqn:y0} implies
\begin{equation}
\mathsf{Cov}[Y_{1,1},Y_{2,2}]=0\label{eqn:y8}
\end{equation}
Substituting \eqref{eqn:y3}, \eqref{eqn:y6}--\eqref{eqn:y8} into \eqref{eqn:y1} yields \eqref{thegoal}.
\begin{rem}[Rationale for the weighted sum]\normalfont
To use concentration inequalities to bound the probability of a non-negative random variable being zero, one can expect that the mean  should be bounded away from zero. If we use a non-weighted sum (i.e. $S':=\frac{1}{ML}\sum_{m=1}^M\sum_{l=1}^L 1\left\{(U_m,V_l)\in\mathcal{F}\right\}$), then $\e[S']= (P_U\times P_V) (\mathcal{F})$ which is usually small in comparison to \eqref{expect1} for a good coupling $P_{UV}$.
\end{rem}

 \subsection{Proof of Lemma~\ref{le:dsmcl}}
In high dimensional probability,  
we often encounter situations where 
limited independence only gives a polynomial decay of the tail probability (e.g.\ Chebyshev's inequality),
whereas a full independence condition yields an exponential upgrade of the decay of the tail probability (e.g.\ sub-Gaussian concentration).
This phenomenon also happens in our problem. 
We derive a sub-Gaussian bound for the probability in the right side of \eqref{eq:main-pr} by means of the following concentration inequality. 
Such concentration inequalities with one-side Lipschitz conditions were originally due to Talagrand \cite{talagrand1988}\cite{talagrand1995}.
However, a ``modern'' proof using Marton's transportation method can be found in \cite[Theorem 8.6]{BLM}.
\begin{thm}[Talagrand]\label{thm_talagrand}
Let $X_1,\cdots,X_n$ be independent and assume that $f\colon \mathcal{X}^n \to \mathbb{R} $ is  such that
\begin{equation}\label{theTcondition}
f(x^n)-f(y^n)\le\sum_{i=1}^n c_i(x^n)1\{x_i\neq y_i\},~~\forall (x^n,y^n)
\end{equation}
for some functions $c_i\colon\mathcal{X}^n \to \mathbb{R}$. Then, for all $t\ge 0$,
\begin{equation}
\p\left[f(X^n)\le \e\left[f(X^n)\right]-t\right]\le \exp_{e}\left(-\dfrac{t^2}{2 \sum_i \e\left[ c_i^2(X^n)\right]}\right).\label{e_talagrand}
\end{equation}
\end{thm}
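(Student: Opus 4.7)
The plan is to follow the Marton transportation-cost method cited in the theorem statement, in three steps. \emph{Step 1 (reduction to a transportation inequality).} Fix $t\geq 0$, set $A:=\{x^n:f(x^n)\leq \e[f(X^n)]-t\}$, $p:=\p(A)$, and let $Q$ denote the conditional law $P(\cdot\mid A)$, where $P$ is the product law of $X^n$. Then $D(Q\|P)=\log(1/p)$ and by construction $\e_P[f]-\e_Q[f]\geq t$. The target bound \eqref{e_talagrand} is equivalent, upon rearrangement, to $t\leq \sqrt{2V\log(1/p)}$ with $V:=\sum_{i=1}^n\e[c_i^2(X^n)]$, so it suffices to establish this transportation-type inequality.

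\emph{Step 2 (coupling and Cauchy--Schwarz).} For any coupling $\pi$ of $Q$ and $P$ with $X\sim Q$ and $Y\sim P$, substituting $x^n\leftarrow Y$ and $y^n\leftarrow X$ in the one-sided Lipschitz condition \eqref{theTcondition} and taking $\pi$-expectations yields
\begin{equation*}
t\;\leq\;\e_\pi[f(Y)-f(X)]\;\leq\;\sum_{i=1}^n \e_P[c_i(Y)\,\alpha_i(Y)],
\end{equation*}
where $\alpha_i(y):=\p_\pi(X_i\neq Y_i\mid Y=y)$. Applying Cauchy--Schwarz first conditionally on $Y$ coordinate by coordinate, and then in the outer sum over $i$, I would upper bound the right-hand side by $\sqrt{V}\cdot\sqrt{\e_P\!\bigl[\sum_i \alpha_i^2(Y)\bigr]}$. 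A delicate point here is that $\alpha_i$ must be conditioned on the product-measure side $Y\sim P$, so that the Cauchy--Schwarz step produces exactly $\e_P[c_i^2(Y)]=\e[c_i^2(X^n)]$, matching the right side of \eqref{e_talagrand}; this alignment is possible only because in \eqref{theTcondition} the weights $c_i$ are evaluated at the ``upper'' point $x^n$, which after the substitution becomes $Y\sim P$.

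\emph{Step 3 (Marton's quadratic transportation inequality).} Because $P$ is a product measure, the coupling $\pi$ above can be chosen so that
\begin{equation*}
\e_P\!\Bigl[\,\sum_{i=1}^n \alpha_i^2(Y)\,\Bigr]\;\leq\;2\,D(Q\|P)\;=\;2\log(1/p),
\end{equation*}
which is the Marton inequality stated as Theorem~8.4 in~\cite{BLM}. Combining the three steps gives $t\leq\sqrt{2V\log(1/p)}$, i.e.\ $p\leq\exp(-t^2/(2V))$, as claimed. The main obstacle is Step 3: the coupling must be constructed recursively coordinate by coordinate, at each stage using an optimal total-variation coupling between $P_i$ and the conditional law $Q_{Y_i\mid Y_{<i}}$, and the quadratic bound then follows by applying Pinsker's inequality per coordinate and combining via the chain rule $D(Q\|P)=\sum_i \e_Q[D(Q_{Y_i\mid Y_{<i}}\|P_i)]$, which is valid because of the product structure of $P$. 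The subtle part is bookkeeping the per-coordinate Pinsker factors so they aggregate into the single global factor of $2$ without loss.
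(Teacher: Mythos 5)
The paper does not actually prove this statement: it is quoted verbatim from \cite[Theorem~8.6]{BLM}, with a pointer to the transportation-method proof given there. Your three-step argument is precisely that proof, and its skeleton is sound: the reduction $Q:=P(\cdot\mid A)$ with $D(Q\|P)=\log(1/p)$, the substitution $x^n\leftarrow Y\sim P$, $y^n\leftarrow X\sim Q$ (which is the right choice, since it puts the weights at the $P$-distributed point and makes Cauchy--Schwarz produce exactly $\sum_i\e[c_i^2(X^n)]$), and the invocation of Marton's \emph{conditional} transportation inequality $\e\bigl[\sum_i\p(X_i\neq Y_i\mid Y^n)^2\bigr]\le 2D(Q\|P)$ for product $P$. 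Granting that cited inequality, Steps 1--2 are correct and complete.

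One caution about your sketch of Step 3, which you yourself identify as the main obstacle. The recursive construction you describe (optimal total-variation coupling of $P_i$ with the conditional law given the past, plus per-coordinate Pinsker and the chain rule) proves the \emph{unconditional} (or past-conditioned) Marton inequality, i.e.\ it controls $\e\bigl[\sum_i\p(X_i\neq Y_i\mid \text{past})^2\bigr]$ with constant $\tfrac12$. But Step 2 requires $\alpha_i(Y)=\p(X_i\neq Y_i\mid Y^n)$, conditioned on the \emph{entire} vector $Y^n$ including $Y_i$ itself and the future coordinates; since $\sigma(Y^{<i})\subseteq\sigma(Y^n)$, conditional Jensen shows that refining the conditioning can only increase the second moment, so the Pinsker bookkeeping does not transfer and would in any case produce the wrong constant. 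The factor $2$ (rather than $\tfrac12$) in the conditional inequality comes from a refined per-coordinate coupling lemma that bounds the squared disagreement probability given the coordinate's own value (a bound of the type $\e_P[(1-\tfrac{\rmd Q}{\rmd P})_+^2]\le 2D(Q\|P)$), which is exactly what the result you cite packages. So you should lean on the cited theorem as a black box (double-checking its number in \cite{BLM}; the statement being proved here is their Theorem~8.6, and the conditional transportation inequality is the immediately preceding result) rather than on your sketched derivation of it, which as written would not deliver the inequality your Step 2 needs.
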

\par
Denote
\begin{equation}
y_{m,l}(u_m,v_l):=\dfrac{1}{ML}\exp(\im(u_m;v_l))
1\left\{(u_m,v_l)\in\mathcal{G}\right\}.\label{eq:def-y-det}
\end{equation}
We proceed to specialize Theorem \ref{thm_talagrand} to 
\begin{itemize}
\item
$n = M + L$, \smallskip
\item
$f(u^M,v^L)= \sum_{m=1}^M \sum_{\ell=1}^L y_{m\ell}(u_m,v_l)$,\smallskip
\item
$t=\mathbb{E}[f(U^M,V^L)]= \mathbb{E}[S] = \mathbb{P}[(U,V)\in\mathcal{G}]$,\smallskip
\item
$c_m(u^M,v^L)
=\sum_{\ell=1}^L y_{m,\ell}(u_m,v_l),\quad \forall m\in \{1,\dots,M\},$\smallskip
\item
$
c_{-\ell}(u^M,v^L)=\sum_{m=1}^M y_{m,\ell}(u_m,v_l)
,\quad \forall \ell\in \{1,\dots,L\}.
$
\end{itemize}
\par
We can verify that \eqref{theTcondition} is satisfied since
\begin{align}
f(u^M, v^L)-f(s^M&,w^L)
\le \sum_{m=1}^M c_m(u^M,v^L) 1\{u_m\neq s_m\}\nonumber
\\&+
\sum_{l=1}^L c_{-l}(u^M,v^L) 1\{v_l\neq w_l\}.
\end{align}
Invoking \eqref{eqn:y3} and \eqref{eqn:y6}, we get, for each $m$,
\begin{align}
&\mathbb{E}[c_m^2(U^M,V^L)]
=L\mathbb{E}[Y_{11}^2]
+L(L-1)\mathbb{E}[Y_{11}Y_{12}]
\nonumber\\
&\le M^{-1}\exp(-\gamma)\mathbb{P}[(U,V)\in\mathcal{G}]
+M^{-2} \mathbb{P}[(U,V)\in\mathcal{G}],
\end{align}
Consequently,
\begin{align}
\sum_{m=1}^M \mathbb{E}[c_m^2(U^M,V^L)]
&\le
 (\exp(-\gamma)+M^{-1})\mathbb{P}[(U,V)\in\mathcal{G}].\nonumber
\end{align}
By the same argument,
\begin{align}
\sum_{l=1}^L \mathbb{E}[c_{-l}^2(U^M,V^L)]
&\le
 (\exp(-\gamma)+L^{-1})\mathbb{P}[(U,V)\in\mathcal{G}].
 \end{align}
Finally, Theorem~\ref{thm_talagrand} yields
\begin{align}
\mathbb{P}\left[S=0\right]
&= \mathbb{P}[f(U^M,V^L)\le 0]
\\
&= \mathbb{P}[f(U^M,V^L)\le \mathbb{E}[f(U^M,V^L)]-t]
\\
&\le \exp_{e}\left(
-\frac{\mathbb{P}[(U,V)\in \mathcal{G}]}{4\exp(-\gamma)
+2M^{-1}+2L^{-1}}
\right)
\label{e_bound}
\end{align}
as we wanted to show.
\begin{rem}\normalfont
The McDiarmid inequality (e.g.~\cite[Theorem~3.3.8]{raginsky2014concentration}) has previously been applied to several information-theoretic problems including the strong \emph{soft-covering} problem \cite[Theorem~31]{jingbo2015egamma}.
However, the McDiarmid inequality bounds the \emph{variance proxy} in terms of the $L^{\infty}$ norm of the random variables
$\sum_i \left\|c_i^2(X^n)\right\|_{\infty}$,
hence it is too weak to yield any meaningful mutual covering bound.
The Talagrand inequality improves the $L^{\infty}$ bound to a $L^2$ bound (see the denominator in \eqref{e_talagrand}), while only bounding one side of the tail probability; but the one-sided bound is all we need for our purposes.
The improvement to $L^2$ is only available when one considers the lower tail only.
\end{rem}
Next we show a corollary of 
Lemma~\ref{le:dsmcl},
whose  asymptotic version was originally proved in \cite{pradhan} using Suen's correlation inequality. Consider a bipartite regular graph associated with an $n$-type $P_{\sf UV}$, in which the left vertices  represent the sequences $u^n$ with type $P_{\sf U}$ and the right vertices $v^n$ with type $P_{\sf V}$. 
The left and right sides have $A_n=2^{nH({\sf U})+o(n)}$ and $B_n=2^{nH({\sf V})+o(n)}$ vertices, respectively. 
The vertex $u^n$ is connected to $v^n$, if the pair $(u^n,v^n)$ has  type $P_{\sf UV}$. 
The number of edges is $E_n=2^{nH({\sf UV})+o(n)}$.
\begin{cor}
If we draw equiprobably and independently (with replacement) $M_n$ and $L_n$ vertices from the left and right, respectively,  the probability of drawing no connected pairs is upper bounded by
\[
\exp_{e}\left(-\dfrac{1}{4A_nB_n/E_nM_nL_n
+2/M_n+2/L_n}\right).
\]
\end{cor}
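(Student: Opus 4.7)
The plan is to specialize Lemma~\ref{le:dsmcl} to the natural finite probability model associated with the graph. Take $\mathcal{U}$ to be the set of $A_n$ left vertices and $\mathcal{V}$ the set of $B_n$ right vertices; let $P_U$ and $P_V$ be the uniform distributions on these sets; and let $P_{UV}$ be the uniform distribution on the $E_n$ edges. Because the graph is bi-regular (each left vertex has degree $E_n/A_n$ and each right vertex has degree $E_n/B_n$), the marginals of $P_{UV}$ coincide with $P_U$ and $P_V$, and drawing $M_n$ left and $L_n$ right vertices independently with replacement realizes exactly the product structure \eqref{eqn:dy0}.

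Next, take $\mathcal{F}$ to be the edge set, so that the event of drawing no connected pair is precisely $\bigcap_m \bigcap_l \{(U_m,V_l)\notin\mathcal{F}\}$. On $\mathcal{F}$ the information density is the constant
\[
\imath_{U;V}(u;v)=\log\frac{P_{UV}(u,v)}{P_U(u)P_V(v)}=\log\frac{A_nB_n}{E_n},
\]
and $P_{UV}(\mathcal{F})=1$. Choosing
\[
\gamma:=\log\frac{M_nL_nE_n}{A_nB_n},
\]
the threshold $\log M_nL_n-\gamma$ exactly matches this value, so the event $\{\imath_{U;V}(U;V)\le\log M_nL_n-\gamma\}$ holds identically on $\mathcal{F}$ and the probability appearing in the numerator of Lemma~\ref{le:dsmcl} equals $1$. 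Substituting $\exp(-\gamma)=A_nB_n/(M_nL_nE_n)$ into the right side of Lemma~\ref{le:dsmcl} yields the claimed bound directly.

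There is no substantive obstacle in this argument: the corollary is essentially a dictionary translation of Lemma~\ref{le:dsmcl} into combinatorial language. The only point that needs any verification is the bi-regularity check which ensures that the edge-uniform $P_{UV}$ has uniform marginals; once this is in hand, the choice of $\gamma$ that makes the information-density constraint vacuous on $\mathcal{F}$ is forced, and the rest is a one-line substitution.
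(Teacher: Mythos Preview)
Your proposal is correct and matches the paper's own proof essentially verbatim: the paper also applies Lemma~\ref{le:dsmcl} with $P_{UV}$ the uniform distribution on the edge set (equivalently the $P_{\sf UV}$-type class), $\mathcal{F}$ equal to that edge set, and the same choice $\gamma=\log\frac{M_nL_nE_n}{A_nB_n}$. Your explicit verification that bi-regularity forces the marginals of the edge-uniform $P_{UV}$ to be uniform, and that the numerator in Lemma~\ref{le:dsmcl} then equals~$1$, is a bit more detailed than the paper's terse one-line proof, but the argument is identical.
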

\begin{proof}
Let ${\sf U}^n$ and ${\sf V}^n$ be equiprobably distributed on the set of sequences with a type $P_{\sf UV}$.
The claim follows by applying Lemma~\ref{le:dsmcl} with
$U\leftarrow {\sf U}^n$, $V\leftarrow {\sf V}^n$, 
$\gamma\leftarrow \log\frac{M_nL_nE_n}{A_nB_n}$, $\mathcal{F}$ the set of sequences with type $P_{\sf UV}$, and $P_{UV}$ the equiprobable distribution on the $P_{\sf UV}$-type class.
\end{proof}
We remark that Suen's correlation inequality \cite{suen1990} (see \cite{janson1998} for an improved version by Janson),
while a useful tool in graph theory, 
cannot be used in place of the Talagrand inequality in our proof of the sharp mutual covering lemma:
Suen's inequality provides a lower bound on the probability that the sum of a collection of (possibly correlated) Bernoulli random variables is positive.
In contrast, our proof of the sharp mutual covering lemma (which uses the weighted sum-trick) concerns the sum of real-valued (not necessarily Bernoulli) random variables.

\section{Proof of the Double Exponential Convergence in Covering}
\label{sec_doubleexp}
In this section we give a proof of  Theorem~\ref{thm_dee}. To that end, we split the proof
of  \eqref{e_dee} into the proof of the corresponding inequalities:
\subsection{$\ge$ in \eqref{e_dee}}
Assuming that $\mathcal{F}$ satisfies $P_{\sf UV}^{\otimes n}(\mathcal{F})\ge 1-\epsilon$,
by Lemma~\ref{le:dsmcl} we have
\begin{align}
&\p\left[\bigcap_{m=1}^{M_n}
\bigcap_{l=1}^{L_n}\{(U^{(n)}_m,V^{(n)}_l)\notin\mathcal{F}\}\right]
\nonumber\\
&\le \exp_{e}\left(
-\dfrac{1-\epsilon-\p\left[\imath_{U^{(n)};V^{(n)}}(U^{(n)};V^{(n)})>\log M_nL_n-\gamma\right]}{4\exp(-\gamma)+2M_n^{-1}+2L_n^{-1}}\right),
\label{e_67}
\end{align}
where $P_{U^{(n)}V^{(n)}}=P_{\sf UV}^{\otimes n}$.
In view of assumptions \eqref{e_Mr1} and \eqref{e_nr2}, we can choose 
\begin{align}
\gamma=n\, ( R_1+R_2-I({\sf U;V})-\tau )
\end{align}
where $\tau\in (0,\infty)$ is arbitrary.
Since the numerator in the fraction in \eqref{e_67} converges to $1-\epsilon$, we conclude that
\begin{align}
&\quad\lim_{n\to\infty}\frac{1}{n}
\log\log
\frac{1}{\sup_{\mathcal{F}}\mathbb{P}[\bigcap_{m=1}^{M_n}\bigcap_{l=1}^{L_n}
(U^{(n)}_m,V^{(n)}_l)\notin \mathcal{F}]}
\nonumber\\
&\ge
\min\{R_1,R_2,R_1+R_2-I(\sf{U;V})-\tau\}.
\end{align}
Therefore,  $\ge$ holds in \eqref{e_dee} since $\tau>0$ is arbitrary.
\subsection{$\le$ in \eqref{e_dee}}
To prove this direction, we need to construct $\mathcal{F}$ for which the probability of covering failure
is large enough.
If $\mathcal{\bar{U}}\subseteq \mathcal{U}^n$ satisfies
\begin{align}
P_{\sf U}^{\otimes n}(\mathcal{\bar{U}})\in (1-\epsilon,1-\epsilon/2),
\end{align}
then $\mathcal{F}:=\mathcal{\bar{U}}\times\mathcal{V}^n$ also has large probability
\begin{align}
P_{\sf UV}^{\otimes n}(\mathcal{F})\in (1-\epsilon,1-\epsilon/2),
\end{align}
and 
\begin{align}
\mathbb{P}\left[
\bigcap_{m=1}^{M_n}\bigcap_{l=1}^{L_n} \{(U^{(n)}_m,V^{(n)}_l)\notin \mathcal{F}\}
\right]
&=
(1-P_{\sf U}^{\otimes n}(\mathcal{\bar{U}}))^{M_n}.
\end{align}
Therefore,
\begin{align}
&\quad\lim_{n\to\infty}\frac{1}{n}\log\log\mathbb{P}^{-1}\left[
\bigcap_{m=1}^{M_n}\bigcap_{l=1}^{L_n} \{(U^{(n)}_m,V^{(n)}_l)\notin \mathcal{F}\}
\right]
\nonumber\\
&=\lim_{n\to\infty}\frac{1}{n}\log \left(M_nP_{\sf U}^{\otimes n}(\mathcal{\bar{U}}^c)\right)
\\
&=R_1.
\end{align}
By symmetry, we have also shown that the left side of \eqref{e_dee} is upper bounded by $R_2$.

It remains to show that whenever
\begin{align}
R_1+R_2-I({\sf U;V})<\min\{R_1,R_2\},
\label{e_assump}
\end{align}
the left side of \eqref{e_dee} is upper bounded by $R_1+R_2-I({\sf U;V})$.
For sufficiently large $n$ we can pick $\mathcal{F}:=\{(u^n,v^n)\colon\,
\sqrt{n}A \le \sum_{i=1}^n\imath_{\sf U;V}(u_i;v_i)-nI({\sf U;V})\le \sqrt{n}B \}$ for appropriately chosen $A,B\in\mathbb{R}$ independent of $n$ such that
\begin{align}
P_{\sf UV}^{\otimes n}(\mathcal{F})
&\in (1-\epsilon,1-\epsilon/2);
\\
\mathbb{P}[(U^{(n)}_m,V^{(n)}_l)\in \mathcal{F}]
&=\exp(-nI({\sf U;V})+o(n)).\label{e88}
\end{align}
Then
\begin{align}
&\quad\mathbb{P}\left[
\bigcap_{m=1}^{M_n}\bigcap_{l=1}^{L_n} \{(U^{(n)}_m,V^{(n)}_l)\notin \mathcal{F}\}
\right]
\nonumber\\
&=\mathbb{E}\left[
\mathbb{P}^{L_n}\left[\left.\bigcap_{m=1}^{M_n}
\{(U^{(n)}_m,V^{(n)}_1)\notin \mathcal{F}\}\right|U^{(n)}_1\dots U^{(n)}_{M_n}\right]
\right]
\\
&\ge
\mathbb{P}^{L_n}\left[\bigcap_{m=1}^{M_n}
\{(U^{(n)}_m,V^{(n)}_1)\notin \mathcal{F}\}\right]
\label{e_jensen}
\\
&\ge
\left(
1-M_n\mathbb{P}[(U^{(n)}_1,V^{(n)}_1)\in\mathcal{F}]
\right)^{L_n}
\label{e_ub}
\end{align}
where \eqref{e_jensen} and \eqref{e_ub} follow from Jensen's inequality and the union bound, respectively.
Note that $\lim_{n\to \infty}M_n\mathbb{P}[(U^{(n)}_1,V^{(n)}_1)\in\mathcal{F}]=0$ which follows from  \eqref{e_assump} and \eqref{e88}.
Thus,
\begin{align}
&\quad\limsup_{n\to\infty}\frac{1}{n}\log\log\mathbb{P}^{-1}\left[
\bigcap_{m=1}^{M_n}\bigcap_{l=1}^{L_n} \{(U^{(n)}_m,V^{(n)}_l)\notin \mathcal{F}\}
\right]
\nonumber\\
&\le\lim_{n\to\infty}\frac{1}{n}\log \left(M_nL_n
\mathbb{P}[(U^{(n)}_1,V^{(n)}_1)\in \mathcal{F}]\right)
\\
&=R_1+R_2-I({\sf U;V})
\end{align}
as desired.

\section{Proof of Broadcast Channel Error Exponent}
\label{sec_bc}
In this section we prove Theorem~\ref{thm_m_exp} by first showing a stronger single-shot bound.
\begin{lem}\label{lem_broadcast}
Consider a broadcast channel with marginal random transformations $P_{Y|X}$ and $P_{Z|X}$.
Given any positive integers $M_1$, $M_2$, $N_1$, $N_2$, 
a joint distribution $P_{UV}$, 
functions 
\begin{align}
&x\colon \mathcal{U}\times \mathcal{V}\to \mathcal{X},
\\
&{\tt h}\colon \mathcal{U}\times \mathcal{Y}\to (0,\infty)
\\
&{\tt g}\colon \mathcal{V}\times \mathcal{Z}\to (0,\infty)
\end{align}
and $\gamma\in\mathbb{R}$,
$\theta_1,\theta_2\in [0,1]$,
there exists an $(M_1,M_2)$-code such that 
\begin{align}
 \mathbb{P}[\hat{W}_k\neq W_k]
 &\le 
 \exp\left(-\frac{\mathbb{P}[(U,V)\in\mathcal{F},
 \imath_{U;V}(U;V)\le \log N_1N_2-\gamma]}{4\exp(-\gamma)+2/N_1+2/N_2}\right)
 \nonumber\\
 &\quad+(M_kN_k)^{\theta_k}\exp(\mathbb{E}[\phi_k(U,V)]+\gamma),\quad k=1,2.
 \label{e105}
\end{align}
where 
$\mathcal{F}$ is defined as the set of $(u,v)\in\mathcal{U}\times \mathcal{V}$ satisfying
\begin{align}
\phi_k(u,v)\le \mathbb{E}[\phi_k(U,V)]+\gamma,
\quad k=1,2,\label{e18}
\end{align}
with
\begin{align}
\phi_1(u,v):=
\log\mathbb{E}\left[\left.
 \left(\frac{\mathbb{E}[{\tt h}(\bar{U},Y)|Y]}{{\tt h}(u,Y)}\right)^{\theta_1}
 \right| (U,V)=(u,v)
 \right],
\end{align}
in which
$P_{U,V,Y,\bar{U}}(u,v,y,\bar{u}):=P_{UV}(u,v)P_{Y|X=x(u,v)}(y)P_U(\bar{u})$,
and
\begin{align}
\phi_2(u,v):=
\log\mathbb{E}\left[\left.
 \left(\frac{\mathbb{E}[{\tt g}(\bar{V},Z)|Z]}{{\tt g}(v,Z)}\right)^{\theta_2}
 \right| (U,V)=(u,v)
 \right],\nonumber
\end{align}
in which
$P_{U,V,Z,\bar{V}}(u,v,z,\bar{v}):=P_{UV}(u,v)P_{Z|X=x(u,v)}(z)P_V(\bar{v})$.
\end{lem}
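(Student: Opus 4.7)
The plan is to combine a Marton-style random-coding scheme with Lemma~\ref{le:dsmcl} (for the covering step) and a Gallager-type analysis of the two decoders. Draw $U_{w_1,n_1}\sim P_U$ independently for $(w_1,n_1)\in\{1,\dots,M_1\}\times\{1,\dots,N_1\}$ and $V_{w_2,n_2}\sim P_V$ independently for $(w_2,n_2)\in\{1,\dots,M_2\}\times\{1,\dots,N_2\}$. To transmit message $(w_1,w_2)$, the encoder searches for indices $(n_1,n_2)$ with $(U_{w_1,n_1},V_{w_2,n_2})\in\mathcal{F}$ and, if successful, transmits $x(U_{w_1,n_1},V_{w_2,n_2})$; otherwise it declares a covering failure. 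Decoder~1 outputs $\hat{W}_1=\arg\max_{w_1}\max_{n_1}{\tt h}(U_{w_1,n_1},Y)$ and Decoder~2 uses the analogous rule with ${\tt g}$ and $Z$.

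For the covering step, observe that conditioned on the transmitted message the bins $\{U_{w_1,n_1}\}_{n_1}$ and $\{V_{w_2,n_2}\}_{n_2}$ are iid $P_U$ and $P_V$, respectively. Lemma~\ref{le:dsmcl} with $M\leftarrow N_1$ and $L\leftarrow N_2$ immediately bounds the covering-failure probability by the first term of \eqref{e105}.

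For the decoding step, suppose covering succeeds and the transmitted pair is $(U_{w_1,n_1},V_{w_2,n_2})=(u,v)\in\mathcal{F}$. The competing $U$-codewords $\{U_{w_1',n_1'}:w_1'\neq w_1\}$ form $(M_1-1)N_1$ iid $P_U$ samples that are independent of $(u,v)$, of $Y$, and of the encoder's selection, since the latter consulted only the $w_1$-th row of the $U$-codebook. A union bound over these competing codewords, followed by Markov's inequality, gives
\begin{align*}
\mathbb{P}[\hat{W}_1\neq W_1\mid U=u,V=v,Y=y]\le (M_1-1)N_1\,\frac{\mathbb{E}[{\tt h}(\bar{U},Y)\mid Y=y]}{{\tt h}(u,y)}.
\end{align*}
Combining with the trivial bound by $1$ and using the Gallager trick $1\wedge a\le a^{\theta_1}$ for $a\ge 0$, then averaging over $Y\sim P_{Y|X=x(u,v)}$, the right side becomes $(M_1N_1)^{\theta_1}\exp(\phi_1(u,v))$. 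The defining inequality \eqref{e18} of $\mathcal{F}$ then upgrades $\exp(\phi_1(u,v))$ to $\exp(\mathbb{E}[\phi_1(U,V)]+\gamma)$. A symmetric argument handles Decoder~2.

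Combining the covering and decoding bounds by the union bound yields the claimed average-error bound \eqref{e105} for the random code, and a standard existence argument (Markov averaging over random codebooks) produces a single deterministic code achieving both bounds (possibly up to an innocuous constant factor absorbed in the constants). The main subtlety lies in the independence bookkeeping: after the encoder's codebook-dependent selection of $(n_1,n_2)$, one must ensure that the competing codewords entering the Gallager analysis remain iid $P_U$ (respectively $P_V$) and independent of the transmitted pair. This holds precisely because Decoder~1's error event only involves codewords indexed by $w_1'\neq W_1$, which were never consulted by the encoder; the Gallager $\theta$-trick then converts the resulting Markov bound into the exponent structure appearing in $\phi_1$ and $\phi_2$.
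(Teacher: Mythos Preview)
Your proposal is correct and follows essentially the same route as the paper: the identical random codebook/binning construction, the same encoder that selects $(n_1,n_2)$ with $(U_{w_1,n_1},V_{w_2,n_2})\in\mathcal{F}$, the same $\max{\tt h}$ (resp.\ $\max{\tt g}$) decoders, the covering error handled by Lemma~\ref{le:dsmcl}, and the decoding error handled by union bound $+$ Markov $+$ the Gallager trick $1\wedge a\le a^{\theta_k}$, followed by invoking \eqref{e18} to replace $\phi_k(u,v)$ by $\mathbb{E}[\phi_k(U,V)]+\gamma$. Your independence bookkeeping (that codewords with $w_1'\neq W_1$ are untouched by the encoder's selection and hence remain i.i.d.\ $P_U$ independent of $(u,v,Y)$) is exactly the point the paper relies on as well.
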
 
Note that the function $\phi_k$ depends on $\theta_k$, $k=1,2$. 
The functions $\tt g$ and $\tt h$ are also allowed to depend on $\theta_k$.
There is no confusion in omitting the argument $\theta_k$ since $\theta_k$ is fixed in Lemma~\ref{lem_broadcast}.
\begin{proof}[Proof Lemma~\ref{lem_broadcast}]
Construct codebooks $\mathcal{C}=(U_{mn})_{1\le m\le M_1,\,1\le n\le N_1}$ and $\mathcal{D}=(V_{mn})_{1\le m\le M_2,\,1\le n\le N_2}$ where each codeword is independently generated according to $P_U$ and $P_V$ respectively. 

Encoder: Given $(W_1,W_2)=(m_1,m_2)$, select (if any) $(K_1,K_2)=(n_1,n_2)$ such that $(U_{m_1n_1},V_{m_2n_2})\in\mathcal{F}$.
Then the encoder sends $X=x(U_{m_1n_1},V_{m_2n_2})$.
If there are more than one such pair of $(n_1,n_2)$, the encoder can pick any one of them;
if there is no such pair, the encoder fails and outputs any element in $\mathcal{X}$.

Decoders: Upon receiving $Y$, Decoder~1 outputs
\begin{align}
 (\hat{W}_1,\hat{K}_1)=\argmax_{m,n}{\tt h}(U_{mn},Y)
\end{align}
with arbitrary tiebreaking.
Decoder~2 outputs $(\hat{W}_2,\hat{K}_2)$ using a similar rule.

Error analysis: By the symmetry in the codebook construction, we can assume without loss of generality that $(W_1,W_2)=(1,1)$ and we will compute all probabilities below conditioned on this event.
By Theorem~\ref{le:dsmcl}, 
\begin{align}
 \mathbb{P}[\textrm{encoder fails}]
 &=\mathbb{P}\left[\bigcup_{n_1=1}^{N_1}\bigcup_{n_2=1}^{N_2}\{(U_{1n_1},V_{1n_2})\notin\mathcal{F}\}\right]
 \\
 &\le \exp\left(-\frac{\mathbb{P}[(U,V)\in\mathcal{F},
 \imath_{U;V}(U;V)\le \log N_1N_2]}{4\exp(-\tau)+2/N_1+2/N_2}\right)
 \label{e22}
\end{align}
To analyze the failure probability at Decoder~1, 
note that for any $(u,v)\in\mathcal{F}$ and $y\in\mathcal{Y}$,
\begin{align}
 &\quad\mathbb{P}[\hat{W}_1\neq 1|(U_{1K_1},V_{1K_2},Y)=(u,v,y)]
 \nonumber
 \\
 &\le \mathbb{E}\left[\left.\sum_{m=2}^{M_1}
 \sum_{n=1}^{N_1}
 1\left\{\frac{{\tt h}(U_{mn},y)}{{\tt h}(u,y)}\ge 1\right\}\right|(U_{1K_1},V_{1K_2},Y)=(u,v,y)\right]
 \label{e27}
 \\
 &\le M_1N_1\mathbb{E}\left[
 1\left\{\frac{{\tt h}(\bar{U},y)}{{\tt h}(u,y)}\ge 1\right\}\right]
 \label{e28}
 \\
 &\le M_1N_1\frac{\mathbb{E}[{\tt h}(\bar{U},y)]}{{\tt h}(u,y)}.
 \end{align}
 where \eqref{e27} is the union bound;
 \eqref{e28} follows because from the codebook construction, 
 $U_{mn}\sim P_U$ is independent of $(U_{1K_1},V_{1K_2},Y)$,
for each $m\in\{2,\dots,M_1\}$ and  $n\in\{1,\dots,N_1\}$.
 Since probability does not exceed 1, for $\theta_1\in[0,1]$ we obtain
\begin{align}
\mathbb{P}[\hat{W}_1\neq 1|(U_{1K_1},V_{1K_2},Y)=(u,v,y)]
\le (M_1N_1)^{\theta_1}\left(\frac{\mathbb{E}[{\tt h}(\bar{U},y)]}{{\tt h}(u,y)}\right)^{\theta_1}.
 \label{e24}
\end{align}
Next, note that $x(U_{1K_1},V_{1K_2})$ is the input signal for the broadcast channel.
Integrating both sides of \eqref{e24} with respect to $P_{Y|X=x(u,v)}$ we obtain
\begin{align}
 &\quad\mathbb{P}[\hat{W}_1\neq 1|(U_{1K_1},V_{1K_2})=(u,v)]
 \nonumber\\
 &\le 
 (M_1N_1)^{\theta_1}
 \mathbb{E}\left[\left.
 \left(\frac{\mathbb{E}[{\tt h}(\bar{U},Y)|Y]}{{\tt h}(u,Y)}\right)^{\theta_1}
 \right| (U,V)=(u,v)
 \right]
 \\
 &= (M_1N_1)^{\theta_1}\exp(\phi_1(u,v))
 \\
 &\le (M_1N_1)^{\theta_1}\exp(\mathbb{E}[\phi_1(U,V)]+\tau)
\end{align}
where the last step used \eqref{e18} and the assumption that $(u,v)\in\mathcal{F}$.
Then the proof for Decoder~1 is completed by noting that 
\begin{align}
 \mathbb{P}[\hat{W}_1\neq 1]
 \le \mathbb{P}[\textrm{encoder fails}]+(M_1N_1)^{\theta_1}\exp(\mathbb{E}[\phi_1(U,V)]+\tau)
 \label{e110}
\end{align}
and invoking \eqref{e22}.
The proof for Decoder~2 is similar.
\end{proof}

\begin{proof}[Proof of Theorem~\ref{thm_m_exp}]
The proof follows by specializing the single-shot bound Lemma~\ref{lem_broadcast} to the discrete memoryless setting: Let $S_1,S_2$ be any positive numbers such that 
\begin{align}
 S_1+S_2>I({\sf U;V}).\label{e_ss}
\end{align}
With the substitutions 
\begin{align}
 &\gamma\leftarrow n^{2/3},
 \\
 &M_k\leftarrow \exp(nR_k),
 \\
 & N_k\leftarrow \exp(nS_k),
 \\
 & \phi_k({\sf U,V})\leftarrow \sum_{i=1}^n\phi_k({\sf U}_i,{\sf V}_i),
\end{align}
we see that $\mathcal{F}$ defined by \eqref{e18} satisfies $\lim_{n\to\infty}P_{\sf UV}^{\otimes n}(\mathcal{F})=1$, and hence the first term on the right side of \eqref{e110} vanishes doubly exponentially in $n$,
and 
\begin{align}
 \liminf_{n\to\infty}\frac{1}{n}\log \frac{1}{\mathbb{P}[\hat{W}_k\neq W_k]}
 \ge -\mathbb{E}[\phi_k({\sf U,V})]-\theta_k(R_k+S_k).
\end{align}
Let 
\begin{align}
E:=\liminf_{n\to\infty}\frac{1}{n}\log \frac{1}{\max_{k=1,2}\mathbb{P}[\hat{W}_k\neq W_k]},
\end{align}
and restrict to $\theta_1=\theta_2=\theta$ (this restriction is simply to make the final expression simpler; it is possible to obtain a slightly stronger bound allowing $\theta_1$ and $\theta_2$ to differ).
We see that there exists positive $S_1,S_2$ satisfying \eqref{e_ss} as long as 
\begin{align}
-\mathbb{E}[\phi_k({\sf U,V})]-\theta R_k-E&\ge 0,\quad k=1,2;
\\
\sum_{k=1,2}[-\mathbb{E}[\phi_k({\sf U,V})]-\theta R_k]-2E&\ge \theta I({\sf U ;V}).
\end{align}
Rearranging, we see that the following $E$ is achievable:
\begin{align}
 \max_{\theta\in[0,1]}\min\left\{
\begin{array}{c}
-\mathbb{E}[\phi_1({\sf U,V})]-\theta R_1,\\
-\mathbb{E}[\phi_2({\sf U,V})]-\theta R_2,\\
\frac{-\mathbb{E}[\phi_1({\sf U,V})]
-\mathbb{E}[\phi_2({\sf U,V})]
-\theta(R_1+R_2+I({\sf U;V}))}{2}
\end{array}
\right\}\label{e33}
\end{align}
The proof of the theorem follows by further weakening \eqref{e33}: 
By Jensen's inequality,
\begin{align}
 \mathbb{E}[\phi_1({\sf U,V})]
 &=\mathbb{E}\left[\log\mathbb{E}\left[\left.
 \left(\frac{\mathbb{E}[{\tt h}(\bar{\sf U},{\sf Y})|{\sf Y}]}{{\tt h}(u,{\sf Y})}\right)^{\theta}
 \right| ({\sf U,V})=(u,v)
 \right]\right]
 \\
 &\le\log\mathbb{E}\left[
 \left(\frac{\mathbb{E}[{\tt h}(\bar{\sf U},{\sf Y})|{\sf Y}]}{{\tt h}({\sf U,Y})}\right)^{\theta}
 \right].
 \label{e35}
\end{align}
Furthermore, with ${\tt h}(u,y):=\exp\left(\frac{\imath_{\sf U;Y}(u;y)}{1+\theta}\right)$,
the right side of \eqref{e35} equals $E_0(\theta,P_{\sf U},P_{\sf Y|U})$.
Similar simplifications applies to $\mathbb{E}[\phi_2({\sf U,V})]$,
and the theorem is proved.
 \end{proof}

\section{Proof of the Duality between Covering and Distribution Simulation}\label{sec_dual}
This section proves Theorem~\ref{thm_dual}, invoking a combinatorial result,
Hall's marriage lemma.
In fact, we show a more general claim:
\begin{lem}
Let $\mathcal{Z}$ be finite.
Suppose that $Z_1,\dots,Z_N$ are arbitrarily distributed random variables on $\mathcal{Z}$.
Then, for any $P_Z$ on $\mathcal{Z}$,
\begin{align}
&\quad\sup_{\mathcal{F}}\left\{P_Z(\mathcal{F})
-\mathbb{P}\left[\bigcup_{i=1}^N
\{Z_i\in\mathcal{F}\}\right]
\right\}
\nonumber
\\
&=\frac{1}{2}\inf_{P_{\hat{N}|Z^N}}
\left|
P_{Z_{\hat{N}}}-P_Z
\right|
\label{e_generaldual}
\end{align}
where the random variable $\hat{N}\in\{1,\dots,N\}$.
\end{lem}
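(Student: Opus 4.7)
The plan is to prove the two inequalities separately, with the easy direction being a manipulation of the definition of total variation, and the hard direction being an appeal to Hall's marriage theorem (or equivalently max-flow/min-cut) on a cleverly chosen bipartite network.

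First I would dispatch the inequality ``$\le$'' (LHS $\le$ RHS). Fix any set $\mathcal{F}\subseteq\mathcal{Z}$ and any selector $P_{\hat N\mid Z^N}$. Since $Z_{\hat N}$ is one of $Z_1,\dots,Z_N$, the event $\{Z_{\hat N}\in\mathcal{F}\}$ is contained in $\bigcup_{i=1}^N\{Z_i\in\mathcal{F}\}$, so
\begin{equation*}
P_Z(\mathcal{F})-\mathbb{P}\!\left[\bigcup_{i=1}^N\{Z_i\in\mathcal{F}\}\right]\le P_Z(\mathcal{F})-P_{Z_{\hat N}}(\mathcal{F})\le\tfrac12\left|P_Z-P_{Z_{\hat N}}\right|,
\end{equation*}
using the standard identity $\tfrac12|P-Q|=\sup_{\mathcal{F}}(P(\mathcal{F})-Q(\mathcal{F}))$. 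Taking $\sup_\mathcal{F}$ on the left and $\inf$ over selectors on the right gives the desired bound.

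For the harder direction ``$\ge$'', let $\delta^\ast$ denote the LHS. The strategy is: build a coupling $\tilde\pi$ of $P_{Z^N}$ and $P_Z$ such that $\tilde\pi(\{(z^N,z'):z'\notin\{z_1,\dots,z_N\}\})\le\delta^\ast$, and then read off a randomized selector from $\tilde\pi$. To build $\tilde\pi$, I set up a bipartite transportation network with $\mathcal{Z}^N$ on one side (masses $P_{Z^N}$), $\mathcal{Z}$ on the other (masses $P_Z$), and an edge between $z^N$ and $z$ whenever $z\in\{z_1,\dots,z_N\}$. By max-flow/min-cut (the finite-alphabet assumption makes this essentially Hall's marriage theorem with weights, cf.\ \cite[Section~13.3]{combinatorics}), the maximum total mass of any coupling concentrated on these edges equals
\begin{equation*}
\min_{A\subseteq\mathcal{Z}}\Bigl\{P_Z(A)+\mathbb{P}\!\left[\bigcup_i\{Z_i\in A^c\}\right]\Bigr\}=1-\sup_{\mathcal{F}}\Bigl\{P_Z(\mathcal{F})-\mathbb{P}\!\left[\bigcup_i\{Z_i\in\mathcal{F}\}\right]\Bigr\}=1-\delta^\ast,
\end{equation*}
where the first equality comes from identifying a min-cut by choosing $A\subseteq\mathcal{Z}$ on the $P_Z$-side and taking $N(A^c)$ on the $P_{Z^N}$-side, and the second follows by the substitution $\mathcal{F}=A^c$. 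Extending the optimal (sub-)coupling to a full coupling of $P_{Z^N}$ and $P_Z$ by adding at most $\delta^\ast$ mass on arbitrary (possibly invalid) pairs produces the desired $\tilde\pi$.

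Finally, from $\tilde\pi$ I would extract the selector by defining, on every realization $(z^N,z')$ with $z'\in\{z_1,\dots,z_N\}$, $\hat N$ to be (any measurable choice of) an index with $z_{\hat N}=z'$, and on the exceptional set defining $\hat N$ arbitrarily; then take $P_{\hat N\mid Z^N}$ as the conditional of the resulting joint distribution of $(Z^N,Z',\hat N)$. Under this selector, $(Z_{\hat N},Z')$ is a valid coupling of $P_{Z_{\hat N}}$ and $P_Z$, with $Z_{\hat N}=Z'$ outside an event of probability $\le\delta^\ast$, so $\tfrac12|P_{Z_{\hat N}}-P_Z|\le\mathbb{P}[Z_{\hat N}\ne Z']\le\delta^\ast$. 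The main obstacle I anticipate is carefully invoking the right weighted-Hall/max-flow statement for finite alphabets (which is also why the cardinality assumption enters) and verifying that the min-cut calculation reduces cleanly to the supremum defining $\delta^\ast$ via the $\mathcal{F}\leftrightarrow A^c$ correspondence; once that is in place the extraction of the selector is mechanical.
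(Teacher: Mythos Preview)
Your proposal is correct. The easy direction ($\le$) is identical to the paper's, and for the hard direction ($\ge$) your max-flow/min-cut computation is sound: parametrizing finite cuts by $A\subseteq\mathcal{Z}$ on the $P_Z$-side forces the $P_{Z^N}$-side of the cut to be exactly $\{z^N:\exists i,\,z_i\in A^c\}$, and the resulting cut value rearranges to $1-\delta^\ast$ as you wrote. The extension of the optimal sub-coupling to a full coupling and the extraction of $P_{\hat N\mid Z^N}$ are routine (the residual marginals both have mass $\delta^\ast$, so any coupling of them, e.g.\ the product, completes $\tilde\pi$).

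The paper takes a slightly different route for the $\ge$ direction: rather than invoking a weighted max-flow/min-cut statement, it first reduces to the case where $P_{Z^N}$ and $P_Z$ are $K$-type distributions, then ``splits'' each atom into $K$-many equal-weight vertices (plus $\epsilon K$ redundant right vertices to absorb errors) and applies the \emph{unweighted} Hall marriage theorem to find a perfect matching of the left side; a final $K\to\infty$ approximation handles general distributions. Your approach trades the discretization/approximation step for the weighted version of Hall (equivalently, max-flow/min-cut on a finite network), which makes the argument more direct and avoids the limiting step entirely. The paper's version, on the other hand, needs only the most classical combinatorial lemma and makes the role of the finite-alphabet hypothesis very transparent. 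Either way the finiteness of $\mathcal{Z}$ is what makes the bipartite-matching argument go through.
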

\begin{proof}
The $\le$ part is obvious since for any $\mathcal{F}$ and any $P_{\hat{N}|Z^N}$,
\begin{align}
\mathbb{P}\left[\bigcup_{i=1}^N \{Z_i\in \mathcal{F}\}\right]
\ge
\mathbb{P}\left[Z_{\hat{N}} \in \mathcal{F}\right].
\end{align}

The $\ge$ part can be shown via Hall's marriage theorem, reproduced after the proof.
We first prove the result assuming that $P_{Z^N}$ and $P_Z$ are \emph{$K$-type distributions} (i.e.\ all the probabilities in the distribution are multiples of $1/K$) for an arbitrary integer $K$;
the general result will then follow by taking $K\to\infty$ and an approximation argument.

Let $\epsilon$ denote the left side of \eqref{e_generaldual}, which is a multiple of $1/K$.
Construct a bipartite graph with $K$ left vertices and $K(1+\epsilon)$ right vertices.
Each $z^N$ is ``split'' into $K P_{Z^N}(z^N)$ left vertices, that is, we use $K P_{Z^N}(z^N)$ identical vertices to represent $z^N$,
so that all the $K$ left vertices have equal probability $1/K$.
Similarly, $z\in\mathcal{Z}$ is ``split'' into $KP_Z(z)$ right vertices.
Moreover, construct $\epsilon K$ right vertices, each representing a \emph{redundant symbol} which can be set to equal any element in $\mathcal{Z}$.
The redundant symbols will represent the error events where we are allowed to produce any symbol.
There is an edge between $(z_1,\dots,z_N)$ and $z$ if the latter is a coordinate of the former,
and there is also an edge between each redundant symbol and each left vertex.
An example is shown in Fig. 1. In this example, The sequence distribution is  $P_{Z^3}(1,2,3)=\frac{2}{3}$ and $P_{Z^3}(1,1,3)=\frac{1}{3}$; the target distribution $P_Z$ is equiprobable on $\{1,2,3\}$. 
The  lines indicate the permissible selection rules. For example, if the sequence $(x_1, x_2, x_3) = (a,b,b)$, then the output can either be $a$ or $b$, hence there are lines connected to $a$ and $b$. After drawing such a bipartite graph, we solve a graph matching problem, and the solution are the bold lines.

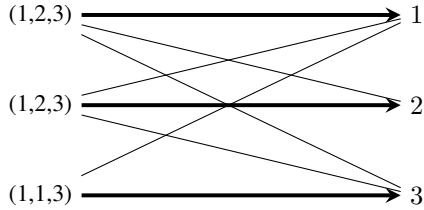
\begin{figure}[h!]
  \centering
\begin{tikzpicture}
[arw/.style={->,>=stealth}]
\node[rectangle] (L1) [xshift=0cm, yshift=2.4cm]{\small(1,2,3)};
\node[rectangle] (L2) [xshift=0cm, yshift=1.2cm] {\small(1,2,3)};
\node[rectangle] (L3) [xshift=0cm, yshift=0cm] {\small(1,1,3)};

\node[rectangle] (R1) [xshift=5cm, yshift=2.4cm]{$1$};
\node[rectangle] (R2) [xshift=5cm, yshift=1.2cm] {$2$};
\node[rectangle] (R3) [xshift=5cm, yshift=0cm] {$3$};

 \draw [arw,line width=1.5pt] (L1) to node {} (R1);
 \draw [arw,line width=1.5pt] (L2) to node {} (R2);
 \draw [arw,line width=1.5pt] (L3) to node {} (R3);

 \draw (L1) to node {} (R2);
 \draw (L1) to node {} (R3);
 \draw (L2) to node {} (R1);
 \draw (L2) to node {} (R3);
 \draw (L3) to node {} (R1);
\end{tikzpicture}
\caption{
The perfect matching in this graph indicates the following rule of selection:
upon observing $(1,2,3)$ the output is 1 or 2 with equal probability;
upon observing $(1,1,3)$ the output is 3.
In this example the approximation error $\epsilon=0$.}
\end{figure}

By Hall's marriage theorem (reproduced after the proof), we claim that there exists a matching that covers the left vertices if for any $\mathcal{S}\subseteq \mathcal{Z}^N$,
\begin{align}
P_{Z^N}(\mathcal{S})
\le
\epsilon+P_Z\left(\bigcup_{z^N\in\mathcal{S}}\{z\colon z\in z^N\}\right).
\label{e_matcond}
\end{align}
Indeed, for any $\mathcal{W}$ a subset of left vertices, 
let $\mathcal{S}$ be the set of all $z^N$ such that $z^N$ corresponds to some vertex in $\mathcal{W}$.
Then $|\mathcal{W}|\le KP_{Z^N}(\mathcal{S})$ (not equality since $\mathcal{W}$ may only contain some, but not all, vertices corresponding to a sequence in $\mathcal{S}$).
The neighborhood of $\mathcal{W}$ in the bipartite graph has cardinality $K\left(\epsilon+P_Z\left(\bigcup_{z^N\in\mathcal{S}}\{z\colon z\in z^N\}\right)\right)$.
Now for any given $\mathcal{S}$, let
\begin{align}
\mathcal{F}:=\bigcap_{z^N\in\mathcal{S}}\{z\colon z\neq z_i,\quad\forall i\}.
\end{align}
We see that
\begin{align}
\mathcal{S}
\subseteq
\bigcap_{i=1}^N\{z^N\colon z_i\notin \mathcal{F}\}
\end{align}
hence \eqref{e_matcond} holds by the definition of $\epsilon$.
The matching then determines a rule of choosing a right vertex (and hence choosing $\hat{Z}$) upon observing $(z_1,\dots,z_N)$, such that
$\frac{1}{2}|P_{Z_{\hat{N}}}-P_Z|$ is bounded by the probability of the redundant right vertices, which is $\epsilon$.

Finally we complete the approximation argument: for general $P_{Z^N}$ and $P_Z$, we can approximate them within any $\delta\in (0,1)$ total variation distance by some $K$-type distributions $P_{\tilde{Z}^N}$ and $P_{\tilde{Z}}$.
\begin{align}
&\quad\sup_{\mathcal{F}}\left\{P_{\tilde{Z}}(\mathcal{F})
-\mathbb{P}\left[\bigcup_{i=1}^N
\{\tilde{Z}_i\in\mathcal{F}\}\right]
\right\}
\nonumber\\
&\le
\epsilon+2\delta.
\end{align}
By the preceding argument we find some $\hat{N}$ such that $\frac{1}{2}|P_{\tilde{Z}_{\hat{N}}}-P_{\hat{Z}}|\le\epsilon+2\delta$.
Applying the same selection rule to $P_{Z^N}$ and using the data processing inequality of the total variation distance, we have
$\frac{1}{2}|P_{Z_{\hat{N}}}-P_{\hat{Z}}|\le\epsilon+3\delta$.
By the triangle inequality,
$
\frac{1}{2}|P_{Z_{\hat{N}}}-P_Z|\le \epsilon+4\delta.
$
The proof is completed since $\delta$ can be made arbitrarily small by choosing large enough $K$.
\end{proof}
\begin{hmt}{\cite{hall}}
Consider a bipartite graph with bipartite sets $\mathcal{X}$ and $\mathcal{Y}$.
There exists a matching that covers $\mathcal{X}$ (i.e.\ an injective map from $\mathcal{X}$ to $\mathcal{Y}$ such that any $x\in\mathcal{X}$ is mapped to an adjacent $y\in\mathcal{Y}$)
if and only if for any $\mathcal{W}\subseteq \mathcal{X}$,
the neighborhood of $\mathcal{W}$ (i.e.\ the set of elements in $\mathcal{Y}$ adjacent to some element in $\mathcal{X}$) has at least the cardinality of $\mathcal{W}$.
\end{hmt}

\section{Proof of Simulation Error Bounds}
This Section proves Theorem~\ref{thm_sim} and Theorem~\ref{thm_conv}.
\subsection{Achievability: Theorem~\ref{thm_sim}}\label{sec_sim}
The achievability follows from the covering-sampling duality (Theorem~\ref{thm_dual}).
In the case of $P_{UV}(\mathcal{F})\ge p$ we can apply \eqref{e14},
so it remains to bound the $P_{UV}(\mathcal{F})< p$ case,
which is accomplished by the following information spectrum bound.
\begin{lem}
Fix $P_{UV}$, $p\in[0,1)$,  and $M,L\ge \frac{6}{p}\ln\frac{1}{1-p}$.
Let $U_1,\dots,U_M\sim P_U$ and $V_1,\dots,V_L\sim P_V$ be independent.
Then
\begin{align}
&\quad\sup_{\mathcal{F}\colon P_{UV}(\mathcal{F})<p}\left\{
\mathbb{P}\left[\bigcap_{m=1}^M\bigcap_{l=1}^L (U_m,V_l)\notin \mathcal{F}\right]
-
P_{UV}(\mathcal{F}^c)
\right\}
\nonumber\\
&\le \mathbb{P}\left[\imath_{U;V}(U;V)>\log \frac{pML}{3\ln\frac{1}{1-p}}\right].
\label{e_sm}
\end{align}
\end{lem}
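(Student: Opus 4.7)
The plan is to apply the second-moment (weighted-sum) method from the proof of Lemma~\ref{le:smcl} after truncating $\mathcal{F}$ by the information density. Set
\[
\gamma_0 := \log\frac{pML}{3\ln\frac{1}{1-p}},\qquad \pi_0 := \mathbb{P}[\imath_{U;V}(U;V)>\gamma_0],
\]
and define $\mathcal{G} := \mathcal{F}\cap\{\imath_{U;V}\le \gamma_0\}$, so that $P_{UV}(\mathcal{G})\ge P_{UV}(\mathcal{F})-\pi_0$. Since $\mathcal{G}\subseteq\mathcal{F}$, monotonicity yields $\mathbb{P}[\bigcap_{m,l}(U_m,V_l)\notin\mathcal{F}]\le \mathbb{P}[\bigcap_{m,l}(U_m,V_l)\notin\mathcal{G}]$, and once the intermediate estimate $\mathbb{P}[\bigcap_{m,l}(U_m,V_l)\notin\mathcal{G}]\le 1-P_{UV}(\mathcal{G})$ is in hand,
\[
\mathbb{P}\Bigl[\bigcap_{m,l}(U_m,V_l)\notin\mathcal{F}\Bigr] - P_{UV}(\mathcal{F}^c) \le P_{UV}(\mathcal{F})-P_{UV}(\mathcal{G})\le \pi_0,
\]
which is the target inequality.

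To prove the intermediate estimate, I would introduce the weighted sum
\[
S:=\frac{1}{ML}\sum_{m=1}^M\sum_{l=1}^L e^{\imath_{U;V}(U_m,V_l)}\,1\{(U_m,V_l)\in\mathcal{G}\}
\]
exactly as in Lemma~\ref{le:smcl}. The variance computation there yields $\mathsf{Var}[S]\le BC$ with $B:=\mathbb{E}[S]=P_{UV}(\mathcal{G})$ and $C:=e^{-\gamma}+\tfrac{1}{M}+\tfrac{1}{L}$, where $\gamma:=\log\tfrac{3\ln\frac{1}{1-p}}{p}$ is chosen so that $\log ML-\gamma=\gamma_0$. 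Paley-Zygmund then gives
\[
\mathbb{P}\Bigl[\bigcap_{m,l}(U_m,V_l)\notin\mathcal{G}\Bigr]=\mathbb{P}[S=0]\le \frac{C}{B+C},
\]
and the intermediate estimate $\tfrac{C}{B+C}\le 1-B$ reduces (for $B>0$) to the numerical inequality $B+C\le 1$.

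The main obstacle — which also motivates the particular constants appearing in the hypothesis — is verifying $B+C\le 1$ uniformly over admissible $\mathcal{F}$. The hypothesis $M,L\ge\tfrac{6}{p}\ln\tfrac{1}{1-p}$ yields $\tfrac{1}{M}+\tfrac{1}{L}\le e^{-\gamma}$, hence $C\le 2e^{-\gamma}=\tfrac{2p}{3\ln\frac{1}{1-p}}$, and combined with $B\le P_{UV}(\mathcal{F})<p$ the task reduces to the elementary inequality $p+\tfrac{2p}{3\ln\frac{1}{1-p}}\le 1$. This holds in the moderate-$p$ regime directly; for the complementary regime I would split cases. If $P_{UV}(\mathcal{F})\le\pi_0$ the target is immediate from
\[
\mathbb{P}\Bigl[\bigcap_{m,l}(U_m,V_l)\notin\mathcal{F}\Bigr]-P_{UV}(\mathcal{F}^c)\le P_{UV}(\mathcal{F})\le \pi_0;
\]
otherwise one uses the sharper equivalent form $(P_{UV}(\mathcal{F})-\pi_0)C\le(1-P_{UV}(\mathcal{F})+\pi_0)B$ of the required inequality together with $B\ge P_{UV}(\mathcal{F})-\pi_0$. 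An alternative route is to replace Paley-Zygmund with the one-sided Talagrand bound of Lemma~\ref{le:dsmcl}, which gives exponential (rather than polynomial) decay in $B$ and handles the $B$-close-to-$p$ regime more gracefully at the cost of slightly weaker constants.
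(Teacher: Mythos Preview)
Your primary route via Paley--Zygmund does not close. After the variance computation you correctly reduce the intermediate estimate to $B+C\le 1$ with $B=P_{UV}(\mathcal{G})<p$ and $C\le \tfrac{2p}{3\ln\frac{1}{1-p}}$. But the bound $p+\tfrac{2p}{3\ln\frac{1}{1-p}}\le 1$ is equivalent to $(1-p)\ln\tfrac{1}{1-p}\ge \tfrac{2p}{3}$, and this fails once $p\gtrsim 0.55$ (e.g.\ at $p=0.6$ one has $0.4\ln 2.5\approx 0.367<0.400$). Your proposed case-split does not rescue the large-$p$ regime: when $P_{UV}(\mathcal F)>\pi_0$, the ``sharper equivalent form'' $(P_{UV}(\mathcal F)-\pi_0)\,C\le (1-P_{UV}(\mathcal F)+\pi_0)\,B$ combined with $B\ge P_{UV}(\mathcal F)-\pi_0$ reduces, after cancelling the common positive factor, to $C\le 1-P_{UV}(\mathcal F)+\pi_0$; with $\pi_0=0$ and $P_{UV}(\mathcal F)\uparrow p$ this is the same failed inequality $C\le 1-p$. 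The obstruction is intrinsic to the second-moment bound: for $B$ close to $1$, $\tfrac{C}{B+C}\le 1-B$ forces $C$ to be tiny, whereas the hypotheses only make $C$ of order $\tfrac{p}{\ln\frac{1}{1-p}}$.

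Your throwaway alternative---use Lemma~\ref{le:dsmcl} instead of the second-moment bound---is exactly what the paper does, and it is not ``at the cost of slightly weaker constants'': the constants in the statement are tuned to it. With $\gamma=\log\bigl(\tfrac{3}{p}\ln\tfrac{1}{1-p}\bigr)$ the hypotheses on $M,L$ make the denominator in Lemma~\ref{le:dsmcl} at most $p\big/\ln\tfrac{1}{1-p}$, giving
\[
\p\Bigl[\bigcap_{m,l}(U_m,V_l)\notin\mathcal{F}\Bigr]\le\exp_e\Bigl(-\tfrac{t}{p}\ln\tfrac{1}{1-p}\Bigr),\qquad t:=\p[(U,V)\in\mathcal G]\in[0,p).
\]
The one step you did not identify is the elementary convexity inequality $e^{-\frac{t}{p}\ln\frac{1}{1-p}}\le 1-t$ for $t\in[0,p]$ (the exponential is convex in $t$ and agrees with the line $1-t$ at the endpoints $t=0$ and $t=p$). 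This yields $\p[\bigcap\ldots]\le 1-t\le P_{UV}(\mathcal F^c)+\pi_0$ directly, uniformly in $p\in[0,1)$.
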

\begin{proof}
Upon choosing $\gamma=\log \left(\frac{3}{p}\ln\frac{1}{1-p}\right)$, we obtain from Lemma~\ref{le:dsmcl} that
\begin{align}
&\p\left[\bigcap_{m=1}^M
\bigcap_{l=1}^L
\{(U_m,V_l)
\notin\mathcal{F}\}\right]\nonumber\\
&\le \exp_{e}\left(-\frac{1}{p}\ln\left(\frac{1}{1-p}\right)\p\left[(U,V)\in\mathcal{F}, ~\imath_{U;V}(U;V)\le\log \frac{pML}{3\ln\frac{1}{1-p}}\right]\right)
\\
&\le
1-\p\left[(U,V)\in\mathcal{F}, ~\imath_{U;V}(U;V)\le\log \frac{pML}{3\ln\frac{1}{1-p}}\right]
\\
&\le
P_{UV}(\mathcal{F}^c)+\mathbb{P}\left[\imath_{U;V}(U;V)>\log \frac{pML}{3\ln\frac{1}{1-p}}\right],
\end{align}
where we have used the fact that $e^{-\frac{t}{p}\ln \frac{1}{1-p}} \leq 1 - t$ for $t\in[0,p]$.
\end{proof}
Note that in the stationary memoryless case, $\liminf_{n\to\infty }P_{\sf UV}^{\otimes n}(\mathcal{F}_n)>0$ implies that the covering error (the first term in the supremum in \eqref{e_sm}) is doubly exponentially small.
Therefore in the stationary memoryless case, if the blocklength $n$ is large enough, \eqref{e_sm} continues to hold when the supremum is relaxed to all $\mathcal{F}$.

\subsection{Converse: Theorem~\ref{thm_conv}}\label{sec_simconv}
\begin{proof}
Take $\mathcal{F}:=\left\{\imath_{U;V}(U;V)>\log ML\right\}$.
We have
\begin{align}
&\quad P_{UV}(\mathcal{F})
-\mathbb{P}[\cup_{m=1}^M\cup_{l=1}^L(U_m,V_l)\in\mathcal{F}]
\nonumber\\
&\ge
P_{UV}(\mathcal{F})
-ML(P_U\times P_V)(\mathcal{F})
\label{e133}
\\
&\ge
\sup_{\lambda>0}\{(1-\exp(-\lambda))\mathbb{P}[\imath_{U;V}(U;V)>\log ML+\lambda]\},
\label{e_last}
\end{align}
where \eqref{e_last} follows from \cite[Proposition 13.1]{jingbo2015egamma}, which allows one to upper bound the supremum of \eqref{e133} over $\mathcal{F}$ by the information spectrum.
\end{proof}

\section{Proof of Sub-optimality of Weighted Sampling}\label{app_simple}
Section~\ref{sec_exp} shows the exact error exponent of joint distribution simulation,
where the achievability part relies on a duality with the covering problem,
without giving an explicit construction.
In this section, we consider the performance of a simple \emph{weighted sampler},
which was inspired by a \emph{likelihood encoder} introduced by Yassaee et al.\ \cite{y-isit13} 
in the context of  achievability proofs in network information theory.
Given the observations $U^M=u^M$ and $V^L=v^L$,
the sampler chooses $(\hat{M},\hat{L})=(m,l)$ with probability
\begin{align}
\frac{\exp(\imath_{U;V}(u_m;v_l))}
{\sum_{i,j=1}^{M,L}\exp(\imath_{U;V}(u_i;v_j))}.
\label{e135}
\end{align}
In other words, the probability that a sample is selected is proportional to the ratio of the target distribution to the underlying distribution.
The rationale of such a weighted sampler is reminiscent of,
though not equivalent to,  \emph{importance sampling} in statistics.
The purpose of analyzing such a weighted sampling scheme is two-fold:
\begin{itemize}
\item The weighted sampler is shown to give a strictly suboptimal error exponent.
    This justifies the duality approach in Section~\ref{sec_exp} based on covering lemmas as a singular contribution.
\item The converse proof for the weighted sampler gives another application of
Talagrand's concentration inequality (Theorem~\ref{thm_talagrand}).
\end{itemize}
\subsection{One-shot Achievability Bound}
\begin{thm}\label{thm7}
Fix $P_{UV}$. Let $U_1,\dots,U_M\sim P_U$ and $V_1,\dots,V_L\sim P_V$
be independent.
Let $(U_{\hat{M}},V_{\hat{L}})$ be selected by the weighted sampling scheme.
Then
\begin{align}
&\quad\frac{1}{2}|P_{UV}-P_{U_{\hat{M}}V_{\hat{L}}}|
\nonumber\\
&\le
\mathbb{E}\left[\frac{1}
{1 + ML
\exp(-\imath_{U;V}(U;V))}\right]
\label{e108}
\\
&\le
\inf_{\gamma>0}\left\{\mathbb{P}[\imath_{U;V}(U;V)>\log ML-\gamma]+\exp(-\gamma)\right\},
\label{e109}
\end{align}
where $(U,V)\sim P_{UV}$.
\end{thm}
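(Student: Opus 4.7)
The plan is to compute the joint distribution of $(U_{\hat M},V_{\hat L})$ induced by the weighted sampler \eqref{e135} in closed form, then exploit Jensen's inequality on the random normalization constant. Writing $X_{m,l}:=\exp(\imath_{U;V}(U_m;V_l))$, the sampler selects pair $(m,l)$ with probability $X_{m,l}/\sum_{i,j}X_{i,j}$, so by the full permutation symmetry across the indices,
\begin{equation}
 P_{U_{\hat M}V_{\hat L}}(u,v)=ML\cdot \mathbb{E}\!\left[\mathbf{1}\{U_1=u,V_1=v\}\,\frac{X_{1,1}}{X_{1,1}+T}\right],
\end{equation}
where $T:=\sum_{(i,j)\neq(1,1)}X_{i,j}$. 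On the event $\{U_1=u,V_1=v\}$ the quantity $X_{1,1}=\exp(\imath_{U;V}(u;v))$ is deterministic, and $P_U(u)P_V(v)\exp(\imath_{U;V}(u;v))=P_{UV}(u,v)$ will pull out as an overall factor.

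Next I will evaluate the conditional expectation of $T$. For every $(i,j)\neq(1,1)$, at least one of $U_i,V_j$ is independent of $(U_1,V_1)$, and a one-line change of measure (integrating out the independent coordinate first) gives $\mathbb{E}[X_{i,j}\mid U_1=u,V_1=v]=1$; hence $\mathbb{E}[T\mid U_1=u,V_1=v]=ML-1$. Since $x\mapsto 1/(a+x)$ is convex on $[0,\infty)$, Jensen's inequality yields
\begin{equation}
 P_{U_{\hat M}V_{\hat L}}(u,v)\ge \frac{ML\,P_{UV}(u,v)}{\exp(\imath_{U;V}(u;v))+ML-1}.
\end{equation}
A short algebraic manipulation using the elementary inequality $(e^{\imath}-1)/(e^{\imath}+ML-1)\le e^{\imath}/(e^{\imath}+ML)$ then gives the pointwise bound
\begin{equation}
 \bigl(P_{UV}(u,v)-P_{U_{\hat M}V_{\hat L}}(u,v)\bigr)^+\le \frac{P_{UV}(u,v)}{1+ML\exp(-\imath_{U;V}(u;v))}.
\end{equation}
Summing (or integrating) over $(u,v)$ and using the identity $\tfrac{1}{2}|P-Q|=\sum(P-Q)^+$ delivers \eqref{e108}.

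Finally, \eqref{e109} will follow from \eqref{e108} by the standard information-spectrum split: on the event $\{\imath_{U;V}(U;V)\le\log ML-\gamma\}$ one has $ML\exp(-\imath_{U;V}(U;V))\ge\exp(\gamma)$, so the integrand is at most $\exp(-\gamma)$; on the complement the integrand is bounded by $1$, contributing the tail probability. Optimizing over $\gamma>0$ closes the argument. I expect the overall proof to be largely routine once the induced joint distribution is written down; the only subtlety is tracking the off-by-one $ML-1$ coming from $\mathbb{E}[T]$ through the Jensen step and relaxing it carefully to $ML$ so that the final bound takes the clean form \eqref{e108} claimed in the theorem.
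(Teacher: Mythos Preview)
Your proposal is correct and follows essentially the same route as the paper: symmetry reduces to the $(1,1)$ pair, Jensen's inequality on the convex map $x\mapsto 1/(a+x)$ replaces the random normalizer $T$ by its conditional mean $ML-1$, a change of measure converts to an expectation under $P_{UV}$, and the information-spectrum split gives \eqref{e109}. The only cosmetic difference is that you work pointwise with densities and the identity $\tfrac12|P-Q|=\sum(P-Q)^+$, whereas the paper carries an arbitrary event $\mathcal{F}$ through the computation and bounds $P_{UV}(\mathcal{F})-P_{U_{\hat M}V_{\hat L}}(\mathcal{F})$; your explicit algebraic step $(e^{\imath}-1)/(e^{\imath}+ML-1)\le e^{\imath}/(e^{\imath}+ML)$ is exactly the relaxation the paper performs implicitly when passing from $ML-1$ to $ML$.
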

\begin{proof}
For each $(u,v)$,
define
\begin{align}
Z(u,v):=\exp(\imath_{U;V}(u;v)).
\end{align}
For any event $\mathcal{F}$,
\begin{align}
&\quad P_{U_{\hat{M}}V_{\hat{L}}}(\mathcal{F})
\nonumber\\
&=\mathbb{E}
\left[
\frac{\sum_{m,l=1}^{M,L}
Z(U_m,V_l)1\{(U_m,V_l)\in \mathcal{F}\}
}
{
\sum_{m,l=1}^{M,L}
Z(U_m,V_l)
}
\right]
\nonumber\\
&=ML\, \mathbb{E}\left[
\frac{Z(U_1,V_1)1\{(U_1,V_1)\in\mathcal{F}\}}
{\sum_{m,l=1}^{M,L}
Z(U_m,V_l)}\right]
\label{e_symmetry}
\\
&\ge
ML\,\mathbb{E}\left[
\frac{Z(U_1,V_1)1\{(U_1,V_1)\in\mathcal{F}\}}{Z(U_1,V_1)
+\sum_{(m,l)\neq (1,1)}\mathbb{E}[Z(U_m,V_l)|U_1V_1]}\right]
\label{e_jen}
\\
&=ML\,\mathbb{E}\left[
\frac{Z(U_1,V_1)1\{(U_1,V_1)\in\mathcal{F}\}}{Z(U_1,V_1)+ML-1}\right],
\label{e_113}
\\
&=\mathbb{E}\left[\frac{1\{(U,V)\in\mathcal{F}\}}{1+\frac{1}{ML}Z(U,V)}
\right]
\nonumber
\end{align}
where \eqref{e_symmetry} follows from symmetry,
\eqref{e_jen} follows from Jensen's inequality,
and \eqref{e_113} follows from change of measure.
Then
\begin{align}
&\quad P_{UV}(\mathcal{F})-P_{U_{\hat{M}}V_{\hat{L}}}(\mathcal{F})
\nonumber\\
&\le\mathbb{E}\left[\frac{\frac{1}{ML}\cdot Z(U,V)
1\{(U,V)\in\mathcal{F}\}}{1+\frac{1}{ML}
\cdot Z(U,V)}\right]
\\
&\le \mathbb{P}[\imath_{U;V}(U;V)>\log ML-\gamma]+\exp(-\gamma),
\label{e_infspec}
\end{align}
where \eqref{e_infspec} follows by bounding the quantity in the expectation by 1 when $\imath_{U;V}(U;V)>\log ML-\gamma$, and by $\exp(-\gamma)$ otherwise.
\end{proof}

\subsection{Achievable Error Exponent for Weighted Sampling}
The following result uses Theorem \ref{thm7} to analyze the asymptotic decay of the
total variation distance between the target joint probability measure and that achieved by the 
weighted sampler when each of the $M$ and $L$ observations
are vectors of length $n$.

\begin{thm}\label{thm8}
Fix $P_{\sf UV}$ on $\mathcal{U}\times \mathcal{V}$ and $R_1,R_2>0$ such that $R_1+R_2\ge I({\sf U;V})$.
For each $n$, define
\begin{align}
M_n&=\exp(nR_1);
\\
L_n&=\exp(nR_2).
\end{align}
Let $U^{(n)}_1,\dots,U^{(n)}_{M_n}\sim P_{\sf U}^{\otimes n}$ and $V^{(n)}_1,\dots,V^{(n)}_{L_n}\sim P_{\sf V}^{\otimes n}$ be independent. 
Let $(U^{(n)}_{\hat{M}},V^{(n)}_{\hat{L}})\in \mathcal{U}^n \times \mathcal{V}^n$ be the pair of $n$-tuples selected by the weighted sampling scheme which aims to simulate $P_{\sf UV}^{\otimes n}$.
Then,
\begin{align}
&\quad-\limsup_{n\to\infty} \frac1n \log|P_{U^{(n)}_{\hat{M}}V^{(n)}_{\hat{L}}}-P_{\sf UV}^{\otimes n}|
\nonumber\\
&\ge \max_{\rho\in[0,1]}\rho\left(R_1+R_2-D_{1+\rho}(P_{\sf UV}\|P_{\sf U}\times P_{\sf V})\right).
\label{e-118-1} 
\end{align}
\end{thm}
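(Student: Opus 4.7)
The plan is to start from the sharper bound \eqref{e108} in Theorem~\ref{thm7} (rather than the looser information-spectrum bound \eqref{e109}) and convert the resulting expectation into a Rényi divergence using a simple pointwise inequality. Specifically, applying Theorem~\ref{thm7} to the product distributions $P_{\sf UV}^{\otimes n}$, $P_{\sf U}^{\otimes n}$, $P_{\sf V}^{\otimes n}$ gives
\begin{align}
\tfrac{1}{2}|P_{U^{(n)}_{\hat{M}}V^{(n)}_{\hat{L}}}-P_{\sf UV}^{\otimes n}|
\le \mathbb{E}\!\left[\frac{1}{1+M_nL_n\exp(-\imath_{U^{(n)};V^{(n)}}(U^{(n)};V^{(n)}))}\right],
\end{align}
where the expectation is under $P_{\sf UV}^{\otimes n}$.

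Next I would invoke the elementary inequality $\tfrac{1}{1+x}\le x^{-\rho}$ valid for all $x\ge 0$ and $\rho\in[0,1]$ (since $x^{\rho}\le\max\{1,x\}\le 1+x$). Applying this with $x=M_nL_n\exp(-\imath_{U^{(n)};V^{(n)}})$ bounds the integrand by $(M_nL_n)^{-\rho}\exp(\rho\, \imath_{U^{(n)};V^{(n)}})$. Taking expectation and using the identity
\begin{align}
\mathbb{E}_{P_{UV}}[\exp(\rho\,\imath_{U;V}(U;V))]=\exp\bigl(\rho\, D_{1+\rho}(P_{UV}\|P_U\times P_V)\bigr),
\end{align}
together with the tensorization $D_{1+\rho}(P_{\sf UV}^{\otimes n}\|P_{\sf U}^{\otimes n}\times P_{\sf V}^{\otimes n})=n\, D_{1+\rho}(P_{\sf UV}\|P_{\sf U}\times P_{\sf V})$, yields
\begin{align}
\tfrac{1}{2}|P_{U^{(n)}_{\hat{M}}V^{(n)}_{\hat{L}}}-P_{\sf UV}^{\otimes n}|
\le \exp\bigl(-n\rho\bigl(R_1+R_2-D_{1+\rho}(P_{\sf UV}\|P_{\sf U}\times P_{\sf V})\bigr)\bigr).
\end{align}

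Finally I would take $-\tfrac{1}{n}\log$ on both sides, pass to $\liminf_{n\to\infty}$, and optimize over $\rho\in[0,1]$ to obtain the claimed exponent. There is no real obstacle here: the whole argument is a routine chain of three ingredients (Theorem~\ref{thm7}, the pointwise bound $\tfrac{1}{1+x}\le x^{-\rho}$, tensorization of $D_{1+\rho}$). The only conceptual point worth emphasizing is why the looser bound \eqref{e109} would \emph{not} suffice: optimizing \eqref{e109} with a Chernoff bound on the information-spectrum tail gives the exponent $\sup_{\rho\ge 0}\tfrac{\rho}{1+\rho}(R_1+R_2-D_{1+\rho}(P_{\sf UV}\|P_{\sf U}\times P_{\sf V}))$, which carries an extra $\tfrac{1}{1+\rho}$ factor and is strictly worse; it is precisely the tighter \eqref{e108} that produces the cleaner Gallager-type exponent in \eqref{e-118-1}.
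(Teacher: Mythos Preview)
Your proposal is correct and follows essentially the same route as the paper: start from \eqref{e108}, apply the pointwise inequality $\frac{1}{1+t}\le t^{-\rho}$ for $\rho\in[0,1]$, identify the resulting expectation as $\exp(\rho D_{1+\rho})$, and tensorize. Your closing remark on why \eqref{e109} loses a factor $\frac{1}{1+\rho}$ is also in line with the paper's observation that the passage from \eqref{e108} to \eqref{e109} is not exponentially tight.
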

\begin{proof}
We first observe that the single-shot bound \eqref{e108} can be relaxed to the following
\begin{align}
&\mathbb{E}\left[\frac{1}
{1+{ML}\cdot
\exp(-\imath_{U;V}(U;V))}\right]\\
&\le \inf_{0<\rho\le 1}\mathbb{E}\left[
\left({ML}\cdot
\exp(-\imath_{U;V}(U;V))\right)^{-\rho}\right]
\label{e148}
\\
&=\inf_{0<\rho\le 1}\{(ML)^{-\rho}\exp(\rho  D_{1+\rho}(P_{UV}\|P_U\times P_V))\},\label{eq:n5}
\end{align}
where \eqref{e148} used the simple inequality $\frac{1}{1+t}\le t^{-\rho}$ for $t>0, \rho\in[0,1]$.
The claim then follows by specializing to the stationary memoryless setting $P_{U^{(n)}V^{(n)}}\leftarrow P_{\sf UV}^{\otimes n}$.
\end{proof}
\begin{rem}\normalfont
Define the tilted distribution
\begin{align}
P^{(1+\rho)}_{\sf UV}(u,v) \propto P_{\sf UV}^{1+\rho}(u,v)
P_{\sf U}^{-\rho}(u)P_{\sf V}^{-\rho}(v),
\label{e_prho}
\end{align}
and denote 
\begin{align}
R^{(\rho)} = \mathbb{E} [ \imath_{\sf U;V} ({\sf{U}}^{\rho};{\sf{V}}^{\rho}) ], \quad ({\sf{U}}^{\rho};{\sf{V}}^{\rho}) \sim P_{\sf UV}^{(1+\rho)},
\end{align}
where the information density $ \imath_{\sf U;V}$ is defined with  $P_{\sf UV}$.
 Optimizing the expression \eqref{e-118-1} we obtain the following solution
\begin{itemize}
\item If $0\le R_1+R_2\le R^{(1)}$,
we can express the right side of \eqref{e-118-1} equivalently as 
\begin{align}
\max_{\rho\in[0,1]}\rho\left(R_1+R_2-D_{1+\rho}(P_{\sf UV}\|P_{\sf U}\times P_{\sf V})\right)
=D(P_{\sf UV}^{(1+\rho^\star)}\|P_{\sf UV}),
\label{e152}
\end{align}
where $\rho^\star\in [0,1]$ is the solution to
\begin{align}
R^{(\rho)}=R_1+R_2.
\end{align}
\item If
$R_1+R_2> R^{(1)}$,
\begin{align}
\eqref{e-118-1}&=R_1+R_2-D_{2}(P_{\sf UV}\, \| \,P_{\sf U}\times P_{\sf V})\label{eq:n1}\\
                        &= R_1+R_2+ D(P_{\sf UV}^{(2)}\|P_{\sf UV})-R^{(1)}.
\label{e155}                
\end{align}
\end{itemize}
\end{rem}
\begin{rem}\normalfont
We did not use \eqref{e109} in the proof of Theorem~\ref{thm8} because
the step from \eqref{e108} to \eqref{e109} is not exponentially tight in the stationary memoryless setting.
According to the analysis in Theorem~\ref{thm8}, an exponentially tight approximation of \eqref{e108} is
\begin{align}
\sup_{\gamma>0}
\{\exp(-\gamma)P[\imath_{P\|Q}>\log ML-\gamma]\}.
\label{e116}
\end{align}
In fact, by  the Cramer's large deviation theorem \cite{dembo} in the asymptotic setting, we have the following exponentially tight approximation of \eqref{e116}\footnote{We write $f(n)\doteq g(n)$ if $\frac{1}{n}\log f(n)-\frac{1}{n}\log g(n)\to 0$, $n\to \infty$.}
\begin{align}
&\sup_{\gamma>0}
\{\exp(-\gamma)P[\imath_{\sf U;V}>\log ML-\gamma]\}\nonumber\\
&\doteq \sup_{\gamma>0}
\{\exp(-\gamma)\inf_{\rho>0}\{\exp(\rho\gamma)(ML)^{-\rho}\mathbb{E}[\exp(\rho\imath_{\sf U;V})]\}\}\nonumber\\
&=\exp(\sup_{\gamma>0}\inf_{\rho>0}\{\gamma(\rho-1)-\rho\log(ML)+\log\mathbb{E}[\exp(\rho\imath_{\sf U;V})]\})\nonumber\\
&=\exp(\inf_{\rho>0}\sup_{\gamma>0}\{\gamma(\rho-1)-\rho\log(ML)+\log\mathbb{E}[\exp(\rho\imath_{\sf U;V})]\})\label{sion}\\
&=\exp(\inf_{1\ge\rho>0}\{-\rho\log(ML)+\log\mathbb{E}[\exp(\rho\imath_{\sf U;V})]\}),\label{eq:n6}
\end{align}
where in \eqref{sion} we have used Sion's min-max theorem, since the inside expression is convex in $\rho$ and linear in $\gamma$. Finally \eqref{eq:n6} is \eqref{eq:n5}.
\end{rem}


\subsection{A Converse on the Error Exponent of Weighted Sampling}
\begin{thm}\label{thm9}
Equality in \eqref{e-118-1} (see also \eqref{eq:n1}) holds if $R_1$ and $R_2$ satisfy
\begin{align}
R^{(1)}&<R_1+R_2
\label{e126}
\\
&< 2 R^{(1)} - \max\{D_2(P_{\sf UV}\|P_{\sf U}\times P_{\sf V}),R_1,R_2\},
\label{e127}
\end{align}
where $P^{(2)}_{\sf UV}$  is the tilted distribution defined in \eqref{e_prho}.
\end{thm}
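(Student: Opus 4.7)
The strategy will be to lower bound the total variation distance by exhibiting a thin information-spectrum shell $\mathcal{F}_n$ and showing that $P_{\sf UV}^{\otimes n}(\mathcal{F}_n) - P^{(n)}_{\text{samp}}(\mathcal{F}_n) \doteq \exp(-n(r - D_2))$, with $r := R_1 + R_2$ and $D_\alpha := D_\alpha(P_{\sf UV} \| P_{\sf U} \times P_{\sf V})$, thereby matching the achievability exponent from Theorem~\ref{thm8}. Starting from the identity implicit in the proof of Theorem~\ref{thm7}, the weighted sampler's output factors as $P^{(n)}_{\text{samp}}(u^n,v^n) = P_{\sf UV}^{\otimes n}(u^n,v^n)\cdot \mathbb{E}[1/S_n(u^n,v^n)]$, where
\[
S_n(u^n,v^n) := \frac{1}{M_nL_n}\Big(Z_n(u^n,v^n) + \sum_{(m,\ell)\neq(1,1)} Z_n(U^{(n)}_m,V^{(n)}_\ell)\Big)
\]
and $Z_n(u^n,v^n) := \exp(\imath_{U^{(n)};V^{(n)}}(u^n;v^n))$. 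Consequently, $\tfrac{1}{2}|P^{(n)}_{\text{samp}} - P_{\sf UV}^{\otimes n}| = \mathbb{E}_{P_{\sf UV}^{\otimes n}}[(1-\mathbb{E}[1/S_n(U^n,V^n)])_+]$, so for any test set one has $\tfrac{1}{2}|P^{(n)}_{\text{samp}} - P_{\sf UV}^{\otimes n}| \ge \sum_{(u^n,v^n)\in\mathcal{F}_n} P_{\sf UV}^{\otimes n}(u^n,v^n)(1-\mathbb{E}[1/S_n])$.

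For the test set I would take $\mathcal{F}_n := \{(u^n,v^n) : \imath_{U^{(n)};V^{(n)}}(u^n;v^n) \in [nR^{(1)}, nR^{(1)}+\delta_n]\}$ with $\delta_n = o(n)$ and $\delta_n \to \infty$. A Cramer-type large-deviations analysis (the saddle-point tilt sits at $\rho=1$ precisely because the threshold $R^{(1)}$ equals the mean of $\imath_{\sf U;V}$ under $P^{(2)}_{\sf UV}$) yields $P_{\sf UV}^{\otimes n}(\mathcal{F}_n)\doteq \exp(-n(R^{(1)}-D_2))$, while on $\mathcal{F}_n$ the normalized self-density satisfies $z_n(u^n,v^n) := Z_n(u^n,v^n)/(M_nL_n)\doteq \exp(-n(r-R^{(1)}))$.

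The crux will be the sharp concentration bound
\[
\mathbb{E}[1/S_n(u^n,v^n)] \le \frac{1}{1+z_n(u^n,v^n)} + o(z_n) \quad \text{for } (u^n,v^n) \in \mathcal{F}_n,
\]
which yields $1-\mathbb{E}[1/S_n]\ge z_n/(1+z_n) - o(z_n)\doteq \exp(-n(r-R^{(1)}))$; multiplying by $P_{\sf UV}^{\otimes n}(\mathcal{F}_n)$ then delivers the required $\exp(-n(r-D_2))$ lower bound. To prove it I would split $\mathbb{E}[1/S_n]$ over the event $\mathcal{E}_n := \{X_n/(M_nL_n)\ge 1-\epsilon_n\}$ (with $X_n$ the off-diagonal sum) and its complement, bounding $1/S_n\le 1/(z_n+1-\epsilon_n)$ on the former and $1/S_n\le 1/z_n$ on the latter. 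The probability of $\mathcal{E}_n^c$ will be controlled by the one-sided Talagrand inequality (Theorem~\ref{thm_talagrand}) applied to $X_n$ after pre-truncating each $Z_n(U_m,V_\ell)$ at level $\exp(nR^{(1)})$; a union bound over the $M_nL_n$ samples shows the truncation is inactive with high probability exactly when $r<2R^{(1)}-D_2$.

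The hard part will be keeping every error term sub-dominant to $z_n\doteq \exp(-n(r-R^{(1)}))$. Three distinct deviations have to be absorbed into $o(z_n)$: (i) the per-pair variance contribution, of exponential order $\exp(n(D_2-r))$; (ii) anomalously large row sums $\sum_\ell Z_n(U_m,V_\ell)$, of order $\exp(n(R_1-r))$; and (iii) anomalously large column sums $\sum_m Z_n(U_m,V_\ell)$, of order $\exp(n(R_2-r))$. Requiring each of these exponents to lie strictly below $-(r-R^{(1)})$ gives precisely $r<2R^{(1)}-\max\{D_2,R_1,R_2\}$, i.e.~\eqref{e127}, while the strict lower bound $r>R^{(1)}$ from \eqref{e126} places us inside the regime of Theorem~\ref{thm8} where $\rho^\star=1$ so the target exponent really is $r-D_2$. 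Together, \eqref{e126}--\eqref{e127} delimit exactly the range of $(R_1,R_2)$ for which this converse argument goes through.
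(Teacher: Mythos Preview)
Your strategy is essentially the paper's: lower bound $P_{UV}(\mathcal{F})-P_{U_{\hat M}V_{\hat L}}(\mathcal{F})$ by replacing the random denominator $\sum_{m,\ell}Z(U_m,V_\ell)$ with a deterministic lower bound obtained from Talagrand's one-sided concentration, and test on the set where $\imath$ sits near $nR^{(1)}$. The paper uses the half-space $\{\imath\ge nR^{(1)}\}$ rather than a thin shell, and it applies Talagrand \emph{without} truncation: the variance proxy is computed directly as $ML\exp(nD_2)+ML(L+M)$ (cf.\ \eqref{eqn:y3}--\eqref{eqn:y7}), so the $D_2$ term enters through $\mathbb{E}[Z^2]$, not through a union bound on a truncation event. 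Your pre-truncation step is harmless but unnecessary, and in fact truncation at level $\exp(nR^{(1)})$ does not improve the second-moment bound (you still get $\mathbb{E}[\tilde Z^2]\le \exp(nD_2)$), so the $D_2$ constraint does not actually ``come from'' the union bound as you suggest.

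There is a concrete slip in your derivation of \eqref{e127}. You need $\epsilon_n=o(z_n)$ \emph{and} $\epsilon_n^2$ to dominate the (normalized) Talagrand variance proxy, whose three components have exponents $D_2-r$, $R_1-r$, $R_2-r$. Combining these two requirements forces the variance-proxy exponents to lie below $-2(r-R^{(1)})$, not below $-(r-R^{(1)})$ as you wrote. Your stated comparison yields only $\max\{D_2,R_1,R_2\}<R^{(1)}$, which is strictly weaker than \eqref{e127} (for instance it misses the cross-term $r+R_1<2R^{(1)}$). With the factor of $2$ restored, the inequality $\max\{D_2,R_1,R_2\}-r<-2(r-R^{(1)})$ rearranges exactly to $r<2R^{(1)}-\max\{D_2,R_1,R_2\}$, and the rest of your outline goes through. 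A minor point: to make the concentration uniform in $(u^n,v^n)$, drop the boundary terms with $m=1$ or $\ell=1$ (as the paper does, using $\sum_{m\ge2,\ell\ge2}$) rather than keeping all $(m,\ell)\ne(1,1)$.
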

Note that unless $I({\sf U;V})=0$,
we have $R^{(1)}-D_2(P_{\sf UV}\|P_{\sf U}\times P_{\sf V})=D(P_{\sf UV}^{(2)}\|P_{\sf UV})>0$,
and the $(R_1,R_2)$ region in Theorem~\ref{thm9} is nonempty. 
From the analysis in \eqref{e152}-\eqref{e155}, 
we see that $\rho=1$ is achieved in the optimization in \eqref{e-118-1} for $(R_1,R_2)$ in this region,
and hence the exponent in \eqref{e-118-1} is \emph{strictly} worse than \eqref{e31}.
This shows that the weighted sampler yields a \emph{fundamentally} suboptimal exponent (not merely due to a weak achievability proof).

\begin{rem}
In the context of channel coding, the likelihood decoder selects a codeword with probability proportional to the likelihood (analogous to the rule \eqref{e135}).
For rates below the capacity,
a certain ``$\alpha$-likelihood'' decoder achieves the optimal random coding exponent when $\alpha\in[\alpha_{\tt c},\infty]$ where $\alpha_{\tt c}\in[0,1]$ depends on the given rate \cite{scarlett2015likelihood}\cite{merhav2017generalized}\cite{liu2017alpha}.
$\alpha=1$ corresponds to likelihood decoder and $\alpha=\infty$ corresponds to the maximum likelihood decoder. 
However, the situation for the current distribution simulation problem is entirely different: if we select $(m,n)$ according to an ``$\alpha$-likelihood'' rule, i.e.\ with probability proportional to $\exp(\alpha\imath_{U;V}(u_m;v_l))$, then we will not be able to simulate a distribution close to $P_{UV}$ even with infinitely many codewords unless $\alpha=1$, hence we should never choose $\alpha\neq 1$. 
In contrast, in the channel coding problem the error probability is monotonically decreasing in $\alpha$, and $\alpha=\infty$ is optimal \cite{liu2017alpha}.
\end{rem}
\begin{proof}[Proof of Theorem~\ref{thm9}]
To simplify notation, we first consider the single-shot setting with a given joint distribution $P_{UV}$,
and the claim will follow by taking $P_{UV}\leftarrow P_{\sf UV}^{\otimes n}$.

By the same arguments as the proof of Lemma~\ref{le:dsmcl},
we can show the concentration inequality
\begin{align}
&\mathbb{P}\left[\sum_{m=2,l=2}^{M,L}Y_{m,l}\le ML-t\right]
\nonumber\\
&\le
\exp_e\left(
-\tfrac{t^2}{2(ML\exp(D_2(P_{UV}\|P_U\times P_V))+LM(L+M))}
\right),\label{eq:n2}
\end{align}
for any $t\ge 0$, where $Y_{m,l}=\exp(\imath_{U;V}(U_m,V_l))$.

Now by splitting into two events, $\sum_{m=2,l=2}^{M,L}Y_{m,l}\ge ML-t$ and its complement, we have
\begin{align}
&\quad P_{U_{\hat{M}}V_{\hat{L}}}(\mathcal{F})
\nonumber\\
&=ML\mathbb{E}\left[
\frac{\exp(\imath_{U;V}(U_1;V_1))
1\{(U_1,V_1)\in\mathcal{F}\}}{\sum_{m=1,l=1}^{M,L}
\exp(\imath_{U;V}(U_m;V_l))}\right]
\\
&\le ML\mathbb{E}\left[
\frac{\exp(\imath_{U;V}(U_1;V_1))1\{(U_1,V_1)\in\mathcal{F}\}}{
\exp(\imath_{U;V}(U_1;V_1))+
\sum_{m=2,l=2}^{M,L}
\exp(\imath_{U;V}(U_m;V_l))}\right]
\\
&\le
ML\mathbb{E}\left[
\frac{\exp(\imath_{U;V}(U_1;V_1))1\{(U_1,V_1)\in\mathcal{F}\}}
{\exp(\imath_{U;V}(U_1;V_1))+ML-t}
\right]+ML\epsilon
\\
&=ML\mathbb{E}\left[
\frac{1\{(U,V)\in\mathcal{F}\}}
{\exp(\imath_{U;V}(U;V))+ML-t}
\right]+ML\epsilon,
\end{align}
where for brevity $\epsilon$ denotes the right side of \eqref{eq:n2} and  $(U,V)\sim P_{UV}$.
Thus,
\begin{align}
&\quad P_{UV}(\mathcal{F})-P_{U_{\hat{M}}V_{\hat{L}}}(\mathcal{F}) 
\nonumber
\\
&\ge
\mathbb{E}\left[
\frac{\exp(\imath_{U;V}(U;V))-t}
{\exp(\imath_{U;V}(U;V))+ML-t}
\cdot
1\{(U,V)\in\mathcal{F}\}
\right]
-ML\epsilon.
\label{e_fraction}
\end{align}
Note that the right side is not necessarily maximized when $\mathcal{F}$ is the whole set since the numerator can be negative.
Let $\gamma:= \log ML-R^{(2)}\ge 0$, which is exponentially equivalent to the optimal $\gamma$ in
\eqref{e116}.
Set
\begin{align}
\mathcal{F}:=\{\imath_{U;V}\ge \log ML-\gamma\}.
\end{align}
By choosing
\begin{align}
t=\frac{1-\exp(-\gamma)}{2-\exp(-\gamma)}\cdot ML\exp(-\gamma),
\label{e168}
\end{align}
we can ensure that the fraction in the expectation in the right side of \eqref{e_fraction} is at least $\exp(-\gamma)/2$ for $(U,V)\in\mathcal{F}$.
Indeed, by monotonicity it suffices to verify by substituting $\imath_{U;V}(U;V)$ with $\log ML-\gamma$, 
in which case the choice of $t$ in \eqref{e168} ensures that the fraction equals exactly $\exp(-\gamma)/2$.
\begin{align}
\frac{1}{2}|P-\hat{P}|
\ge
\frac{1}{2}\exp(-\gamma) \mathbb{P} [\imath_{\sf U;V} ({\sf U;V}) >\log ML-\gamma]
-ML\epsilon.
\label{e139}
\end{align}
Finally, consider the asymptotic setting where we take $P_{UV}\leftarrow P_{\sf UV}^{\otimes n}$,
$M\leftarrow \exp(nR_1)$, and $L\leftarrow \exp(nR_2)$,
then
\begin{align}
t^2&\doteq M^2L^2\exp(-2\gamma)=\exp \left(2 R^{(2)}\right);
\end{align}
Therefore $\epsilon$ vanishes doubly exponentially under the assumptions \eqref{e126} and \eqref{e127}.
The right side of \eqref{e139} is thus exponentially equivalent to the right side of \eqref{e116}.
\end{proof}

\section{Proof of the Multivariate Extension}\label{apx:multivariate}
This section proves Lemma \ref{le:multivariate}. 
The idea is to analyze the following weighted sum for any instance $z$ using the Talagrand inequality and then taking the typical with respect to $P_Z$,
\begin{align*}
S(z):=\dfrac{1}{M_1\cdots M_k}\sum \exp&(\imath(z,V_{1,m_1},\cdots,V_{k,m_k}))\nonumber\\&1\{(z,V_{1,m_1},\cdots,V_{k,m_k})\in\mathcal{G}\}.
\end{align*}
where $\imath(z,v_1,\cdots,v_k):=\imath_{P_{ZV^k}||P_Z\prod_{i=1}^kP_{V_i|Z}}(z,v_1,\cdots,v_k)$.
We proceed with the same steps as in the proof of Lemma~\ref{le:dsmcl}. First, observe that the mean of $S(z)$ is
\begin{equation}
\mathbb{E}[S(z)]=\mathbb{P}[(Z,V_1,\cdots,V_k)\in\mathcal{G}|Z=z].
\end{equation}
Define
\begin{align*}Y_{m_1,\cdots,m_k}:=&\dfrac{1}{M_1\cdots M_k} \exp(\imath(z,V_{1,m_1},\cdots,V_{k,m_k}))\nonumber\\&1\{(z,V_{1,m_1},\cdots,V_{k,m_k})\in\mathcal{G}\}.\end{align*}
Next in Theorem \ref{thm_talagrand}, let $f(V_1^{M_1},\cdots,V_k^{M_k})=S(z)$ and
\begin{align}
&c_{i,m_i}(V_1^{M_1},\cdots,V_k^{M_k})
:=\sum_{m_1,\cdots,m_{i-1},m_{i+1},\cdots,m_k} Y_{m_1,\cdots,m_k},\nonumber\\&\quad \forall m_i\in \{1,\dots,M_i\}
.
\end{align}
Then, the one-sided Lipshitz property \eqref{theTcondition} is satisfied since
\begin{align}
f(v_1^{M_1}&,\cdots, v_k^{M_k})-f(u_1^{M_1},\cdots, u_k^{M_k})\nonumber\\
&\le \sum_{i=1}^k\sum_{m_i=1}^{M_i} c_{i,m_i}(v_1^{M_1}, \cdots,v_k^{M_k}) 1\{v_{i,m_i}\neq u_{i,m_i}\}.\nonumber
\end{align}
The variance proxy in Talagrand's inequality (i.e. the denominator in the \eqref{e_talagrand}) is
\begin{align}
&\quad\sum_{i=1}^k\sum_{m_i=1}^{M_i}\mathbb{E}[c_{i,m_i}^2(V_1^{M_1},\cdots,V_k^{M_k})]\nonumber\\
&=\sum_{i=1}^kM_i\mathbb{E}[c_{i,1}^2(V_1^{M_1},\cdots,V_k^{M_k})],\label{eq:mul-00}
\end{align}
where the equality follows from symmetry.

Next we simplify the term inside the summation for $i=1$, (the simplification for $i>1$ is similar)
\begin{align}
&M_1\mathbb{E}[c_{1,1}^2(V_1^{M_1},\cdots,V_k^{M_k})]\nonumber\\
&=\sum_{m_2,\cdots,m_k}\sum_{m'_2,\cdots,m'_k}\mathbb{E}[Y_{1,m_2,\cdots,m_k}Y_{1,m'_2,\cdots,m'_k}]\nonumber\\
&=M_1\cdots M_k\sum_{m_2,\cdots,m_k}\mathbb{E}[Y_{1,1,\cdots,1}Y_{1,m_2,\cdots,m_k}]\label{eq:sym-mul}\\
&=M_1\cdots M_k\sum_{\mathcal{S}\subseteq[2:k]}\sum_{m_2,\cdots,m_k:\atop{ m_t=1, t\in\mathcal{S}\atop m_t\neq 1, t\notin\mathcal{S}}}\mathbb{E}[Y_{1,1,\cdots,1}Y_{1,m_2,\cdots,m_k}],\label{eq:mul-10}
\end{align}
where \eqref{eq:sym-mul} is again follows from symmetry.

Next for each $(m_1,\cdots,m_k)$ satisfying $m_t=1, t\in\mathcal{S}$ and $m_t\neq 1, t\notin\mathcal{S}$, we have
\begin{align}
&\mathbb{E}[Y_{1,1,\cdots,1}Y_{1,m_2,\cdots,m_k}]\le\dfrac{1}{(M_1\cdots M_k)^2}\mathbb{E}[\nonumber\\& \exp(\imath(z,V_{1,1},\cdots,V_{k,1})+\imath(z,V_{1,m_1},\cdots,V_{k,m_k}))\nonumber\\&\qquad1\{(z,V_{1,1},\cdots,V_{k,1})\in\mathcal{G}\}]\\
&=\dfrac{1}{(M_1\cdots M_k)^2}\mathbb{E}_{V_{1,1},\cdots,V_{k,1}}[\nonumber\\& \exp(\imath(z,V_{1,1},\cdots,V_{k,1})+\imath_{P_{ZV_{\mathcal{S}}}||P_Z\prod_{i\in\mathcal{S}}P_{V_{i}|Z}}(z,\{V_{i,1}:i\in\mathcal{S}\}))\nonumber\\&\qquad1\{(z,V_{1,1},\cdots,V_{k,1})\in\mathcal{G}\}\mathbb{E}_{\{V_{i,m_i}:i\notin\mathcal{S}\}}[\nonumber\\
&\qquad\exp(\imath_{P_{V_{\mathcal{S}^c}|Z=z,V_{\mathcal{S}}=V_{\mathcal{S},1}}||\prod_{i\notin\mathcal{S}}P_{V_{i}|Z}}(z,\{V_{i,m_i}:i\notin\mathcal{S}\}))]]
\label{eq:mult-2}\\
&=\dfrac{1}{(M_1\cdots M_k)^2}\mathbb{E}_{V_{1,1},\cdots,V_{k,1}}[\nonumber\\& \exp(\imath(z,V_{1,1},\cdots,V_{k,1})+\imath_{P_{ZV_{\mathcal{S}}}||P_Z\prod_{i\in\mathcal{S}}P_{{V_i}|Z}}(z,\{V_{i,1}:i\in\mathcal{S}\}))\nonumber\\&\qquad1\{(z,V_{1,1},\cdots,V_{k,1})\in\mathcal{G}\}]
\label{eq:mult-3}\\
&\le\dfrac{\exp(-\gamma)}{M_1\cdots M_k\prod_{t\notin\mathcal{S}} M_t}\mathbb{E}[ \exp(\imath(z,V_{1,1},\cdots,V_{k,1}))\nonumber\\&\qquad1\{(z,V_{1,1},\cdots,V_{k,1})\in\mathcal{G}\}]\label{eq:mult-4}\\
&=\dfrac{\exp(-\gamma)}{M_1\cdots M_k\prod_{t\notin\mathcal{S}} M_t}\mathbb{P}[ (Z,V_{1},\cdots,V_{k})\in\mathcal{G}|Z=z],\label{eq:mul-20}
\end{align}
where
\begin{itemize}
\item \eqref{eq:mult-2} follows from the chain rule for information density. 
\item \eqref{eq:mult-3} follows from change of measure and the fact that conditioned on $Z=z$, $\{V_{i,m_i}:i\notin\mathcal{S}\}\sim \prod_{i\notin\mathcal{S}}P_{V_i|Z=z}$.
\item \eqref{eq:mult-4} follows from the definition of $\mathcal{G}$ in \eqref{eq:dfn-G}.
\item \eqref{eq:mul-20} follows from change of measure and the definition of $\imath(z,v_1,\cdots,v_k)$ in \eqref{eq:dfn-I}.
\end{itemize}

Invoking \eqref{eq:mul-10} and \eqref{eq:mul-20}, we get
\begin{align}
&M_1\mathbb{E}[c_{1,1}^2(V_1^{M_1},\cdots,V_k^{M_k})]\nonumber\\
&\le\exp(-\gamma)\mathbb{P}[ (Z,V_{1},\cdots,V_{k})\in\mathcal{G}|Z=z]\nonumber\\&\qquad
\sum_{\mathcal{S}\subseteq[2:k]}\sum_{m_2,\cdots,m_k:\atop{ m_t=1, t\in\mathcal{S}\atop m_t\neq 1, t\notin\mathcal{S}}}\dfrac{1}{\prod_{t\notin\mathcal{S}} M_t}\\
&<2^{k-1}\exp(-\gamma)\mathbb{P}[ (Z,V_{1},\cdots,V_{k})\in\mathcal{G}|Z=z].
\end{align}
Consequently, the variance proxy \eqref{eq:mul-00} is upper bounded by
\[k2^{k-1}\exp(-\gamma)\mathbb{P}[ (Z,V_{1},\cdots,V_{k})\in\mathcal{G}|Z=z].\]

Finally, taking $t=\mathbb{E}[f(V_1^{M_1},\cdots,V_{k}^{M_k})]$ in  Theorem~\ref{thm_talagrand}, we obtain
\begin{align*}
\mathbb{P}\left[S(z)\right.&\left.=0\right]
= \mathbb{P}[f(V_1^{M_1},\cdots,V_{k}^{M_k})\le 0]
\\
&= \mathbb{P}[f(V_1^{M_1},\cdots,V_{k}^{M_k})\le \mathbb{E}[f(V_1^{M_1},\cdots,V_{k}^{M_k})]-t]
\\
&\le \exp_{e}\left(
-\frac{\mathbb{P}[ (Z,V_{1},\cdots,V_{k})\in\mathcal{G}|Z=z]}{k2^k\exp(-\gamma)
}
\right).
\end{align*}

\section{Conclusion}
We derived a sharp mutual covering lemma (Lemma~\ref{le:dsmcl}) and demonstrated its tightness in two regimes which we called ``typical'' and ``worst case''.
Our main result is presented in the single-shot form for simplicity and elegance, 
but the asymptotic sharpness really comes from the tailor-made concentration inequality and the novel duality argument in the proof.
We showed two applications where the two regimes are the most useful, 
respectively: broadcast channel and joint distribution simulation.

\section{Acknowledgement}
This work was supported in part by
NSF Grants CCF-1016625, CCF-0939370, and by the Center for Science of Information.
Jingbo Liu's work was supported by the Wallace Memorial Fellowship at Princeton University.


\bibliographystyle{IEEEtran}
\bibliography{MC}

\end{document}